\newtheorem*{rep@theorem}{\rep@title}
\newcommand{\newreptheorem}[2]{%
	\newenvironment{rep#1}[1]{%
		\def\rep@title{#2 \ref{##1}}%
		\begin{rep@theorem}}%
		{\end{rep@theorem}}}
\newenvironment{lemma-repeat}[1]{\begin{trivlist}
		\item[\hspace{\labelsep}{\bf\noindent Lemma \ref{#1}.}]\em }%
	{\end{trivlist}}
\newenvironment{theorem-repeat}[1]{\begin{trivlist}
		\item[\hspace{\labelsep}{\bf\noindent Theorem \ref{#1}.}]\em }%
	{\end{trivlist}}
\newenvironment{claim-repeat}[1]{\begin{trivlist}
		\item[\hspace{\labelsep}{\bf\noindent Claim \ref{#1}.}]\em }%
	{\end{trivlist}}
	\newenvironment{proposition-repeat}[1]{\begin{trivlist}
		\item[\hspace{\labelsep}{\bf\noindent Proposition \ref{#1}.}]\em }%
	{\end{trivlist}}
\newcommand{\qedsymb}{\qed}
\def\eps{\varepsilon}
\newcommand{\setup}{\textsf{setup}}
\newcommand{\set}[1]{\left\{ #1 \right\}}
\DeclareMathOperator{\poly}{poly}
\DeclareMathOperator{\polylog}{\mathrm{polylog}}
\newcommand{\remove}[1]{}
\newcommand{\proc}[1]{\mathsf{#1}}
\renewcommand\paragraph[1]{\vspace{2mm}\noindent {\bf #1}}
\theoremstyle{remark}
\theoremstyle{definition}
\newcommand{\E}{\mathbf{E}}
\newcommand{\ip}[1]{\left}
\newcommand{\congest}{\textsc{Congest}\xspace}
\newcommand{\qcongest}{\textsc{Quantum Congest}\xspace}
\newcommand{\CONGEST}{\textsc{Congest}\xspace}
\newcommand{\clique}{\textsc{Congested Clique}\xspace}
\newcommand{\local}{\textsc{LOCAL}\xspace}
\newcommand{\qclique}{\textsc{Quantum Congested Clique}\xspace}
\newcommand{\rv}[1]{\mathrm{#1}}
\newcommand{\Tri}{\textsc{FindTriangleInSubnetwork}}
\newcommand{\coloneq}{:=}
\newcommand{\hide}[1]{ }
\renewcommand{\mathbf}{\bm}
\newcommand{\var}[1]{\mathit{#1}}
\newcommand{\inext}{_\mathrm{in}}
\newcommand{\outext}{_\mathrm{out}}
\let\oldnl\nl% Store \nl in \oldnl
\newcommand{\nonl}{\renewcommand{\nl}{\let\nl\oldnl}}% Remove line number for one line
\newcommand{\mix}{\ensuremath{\tau_{\operatorname{mix}}}}
\title{Quantum Distributed Algorithms for Detection of Cliques}
\author{Keren Censor-Hillel}{Technion, Israel}{ckeren@cs.technion.ac.il}{https://orcid.org/0000-0003-4395-5205}{}
\author{Orr Fischer}{Tel-Aviv University, Israel}{orrfischer@mail.tau.ac.il}{}{}
\author{Fran{\c c}ois Le Gall}{Nagoya University, Japan}{legall@math.nagoya-u.ac.jp}{}{}
\author{Dean Leitersdorf}{Technion, Israel}{dean.leitersdorf@gmail.com}{}{}
\author{Rotem Oshman}{Tel-Aviv University, Israel}{roshman@tau.ac.il}{}{}
\authorrunning{K. Censor-Hillel, O. Fischer, F. Le Gall, D. Leitersdorf, R. Oshman} %TODO mandatory. First: Use abbreviated first/middle names. Second (only in severe cases): Use first author plus 'et al.'
\keywords{distributed graph algorithms, quantum algorithms, cycles, cliques, Congested Clique, CONGEST} %TODO mandatory; please add comma-separated list of keywords
\begin{document}
\maketitle

\begin{abstract}
The possibilities offered by quantum computing have drawn attention in the distributed computing community recently, with several breakthrough results showing quantum distributed algorithms that run faster than the fastest known classical counterparts, and even separations between the two models. A prime example is the result by Izumi, Le Gall, and Magniez [STACS 2020], who showed that triangle detection by quantum distributed algorithms is easier than triangle listing, while an analogous result is not known in the classical case.

In this paper we present a framework for fast quantum distributed clique detection. This improves upon the state-of-the-art for the triangle case, and is also more general, applying to larger clique sizes.

Our main technical contribution is a new approach for detecting cliques by encapsulating this as a search task for nodes that can be added to smaller cliques. To extract the best complexities out of our approach, we develop a framework for \emph{nested} distributed quantum searches, which employ checking procedures that are quantum themselves.

Moreover, we show a circuit-complexity barrier on proving a lower bound of the form $\Omega(n^{3/5+\epsilon})$ for $K_p$-detection for any $p \geq 4$, even in the classical (non-quantum) distributed CONGEST setting.
\end{abstract} 

\vspace*{\fill}
\pagebreak
%\thispagestyle{empty}
%\setcounter{page}{0}
%\pagebreak
%\ktodo{Do we need Lipics?}
%\tableofcontents{}

% !TEX root = ./main.tex
%====================
\section{Introduction}
%====================
\noindent{\bf Quantum distributed computing.}
The power of quantum computing in the distributed setting has recently been the subject of intensive investigations  \cite{EKNP14,Izumi+PODC19,LeGall+PODC18,LNR19}.
The main difference between classical and quantum distributed computing is that the quantum setting,
quantum information, i.e., quantum bits (qubits), can be sent through the edges of the network instead of classical information (i.e., bits).
%In quantum distributed computing, the main difference with classical (i.e., non-quantum) distributed computing is that quantum information, i.e., quantum bits (qubits), can be sent through the edges of the network instead of classical information, i.e., bits.
Le Gall and Magniez \cite{LeGall+PODC18} and Izumi and Le Gall \cite{Izumi+PODC19}, in particular, have shown the superiority of quantum distributed computing over classical distributed computing for two fundamental models, the \congest model and the \clique model.

The (classical) \congest model is one of the most studied models in classical distributed computing. In this model, $n$ nodes communicate with each other over the network by exchanging messages of $O(\log n)$ bits in synchronous rounds. All links and nodes are reliable and suffer no faults. Each node has a distinct identifier, but the network graph is not initially known to the nodes.
In the quantum version of this model (which we denote \qcongest), as defined in \cite{EKNP14,LeGall+PODC18}, the only difference is that the nodes can exchange quantum information: each message exchanged consists of $O(\log n)$ quantum bits instead of $O(\log n)$ bits in the classical case. In particular, initially the nodes of the network do not share any entanglement.
Achieving quantum speedups in this setting is especially challenging since, as shown by Elkin et al.~\cite{EKNP14}, the ability to send quantum information is not helpful for many crucial components of distributed algorithms (e.g., routing or broadcast of information). Le Gall and Magniez \cite{LeGall+PODC18} have nevertheless showed the superiority of quantum distributed computing in this model: they constructed a $\tilde{O}(\sqrt{n})$-round quantum algorithm for the exact computation of the diameter of the network (for networks with small diameter), while it is known that any classical algorithm in the \congest model requires~$\tilde{\Omega}(n)$ rounds, even for networks with constant diameter~\cite{FHW12}. 

The (classical) \clique model is similar to the (classical) \congest model, but it separates the input to the problem we are working on from the communication topology: the input is some graph $G = (V,E)$, but the communication topology allows all nodes to communicate directly with one another (i.e., a clique). This model is close in flavor to massively-parallel computing, but we focus only on rounds and communication, not memory.
%allows even non-adjacent nodes to exchange one $O(\log n)$-bit message per round. 
%much more powerful than the \congest model. 
The \qclique model is defined as the quantum version of the \clique model: the only difference is again that each exchanged message consists of $O(\log n)$ quantum bits instead of $O(\log n)$ bits. 
%In particular, initially the vertices of the network do not share any entanglement. In the \clique model, Izumi and Le Gall~\cite{Izumi+PODC19} showed that quantum distributed algorithms can also be more powerful than classical distributed algorithms: they constructed a quantum algorithm faster than the best known classical algorithms for the All-Pair Shortest Path problem.
%: they constructed a $\tilde{O}(\sqrt{n})$-round quantum algorithm for the exact computation of the diameter of the network (here~$D$ denotes the diameter), while it is known that any classical algorithm in the \congest model requires~$\tilde{\Omega}(n)$ rounds, even for graphs with constant diameter~\cite{FHW12}. 
Izumi and Le Gall~\cite{Izumi+PODC19} showed that quantum distributed algorithms can be more powerful than classical distributed algorithms in the \clique model as well: they constructed a quantum algorithm faster than the best known classical algorithms for the All-Pair Shortest Path problem.

\paragraph{Distributed subgraph detection.}
In the past few years there has been a surge of works on classical distributed algorithms investigating the complexity of the subgraph detection problem, which asks to detect the existence of a specified subgraph within the input graph (which coincides with the communication network in the \congest and \qcongest models). %, in the \congest and \clique models of distributed computing% (in the \congest model the input graph coincides with the communication network). 
For small subgraphs, the subgraph-detection problem is extremely \emph{local}, and yet it is challenging to solve in the $\congest$ model, due to the restricted bandwidth.
Most prior work has focused on detecting $p$-cliques (denoted below~$K_p$) and $\ell$-cycles (denoted below~$C_\ell$), for small values of~$p$ and~$\ell$. In particular, in the $p$-clique detection problem (also called the $K_p$-freeness problem), the goal is to decide if the input graph contains a $p$-clique or not. If it contains a $p$-clique then at least one node must output ``yes''. Otherwise all nodes must output ``no''. (A detailed review of classical algorithms for $p$-clique detection is given at the end of this section.)

Izumi et al. \cite{ILM20} recently showed that quantum algorithms can give an advantage for triangle detection (i.e., $K_3$ detection), by constructing a quantum algorithm with round complexity $\tilde O(n^{1/4})$ in the \qcongest model (the best known classical algorithm for triangle detection, by Chang and Saranurak~\cite{CS19}, has complexity $\tilde O(n^{1/3})$). 
The key technique used in \cite{ILM20} is \emph{distributed Grover search}, which was introduced in \cite{LeGall+PODC18} and consists in a distributed implementation of Grover's celebrated quantum algorithm \cite{Grover96}.

\paragraph{Our results.} In this paper we further investigate the power of quantum distributed algorithms. Using a new approach based on \emph{nested} distributed quantum searches, we obtain quantum algorithms for $p$-clique detection, in both the \qcongest model and the \qclique model, that outperform their classical counterparts.
% and in some cases go well below their lower bounds. %\ftodo{Is the last part still true now that we removed the part on cycles? If true, then we should probably explain that in the intro.}\ktodo{I thought yes but then realized that $O(n^{1/3})$ for $K_4$ is only in congested clique. But it would be helpful to put the lower bounds into the table (e.g. Czumaj-Konrad).}

We first consider clique detection in the \qclique model. Our results are summarized in the following theorem (the upper bounds we obtain for $K_p$-detection with $p\ge 5$ are actually even stronger --- see Corollary \ref {cor:newCCextend} in Section \ref{sec:CCfast}).
\begin{theorem}\label{th:clique-clique}
There exists a quantum algorithm that solves $p$-clique detection with success probability at least $1-1/\poly(n)$ in the \qclique model with complexity $\tilde O(n^{1/5})$ for $p=3$
%, $\tilde O(n^{1/3})$ for $p=4$ 
and  $\tilde O(n^{1-2/(p-1)})$ for $p\ge 4$.
\end{theorem}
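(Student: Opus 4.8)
The plan is to build everything on a single primitive --- \emph{distributed quantum (Grover) search in the \qclique model whose checking subroutine is itself a bounded-error quantum distributed procedure} --- and then to express $K_p$-detection as a constant-depth nest of such searches, each level peeling off one vertex of the sought clique. First I would set up the framework. The basic building block (the distributed Grover search of Le Gall--Magniez, as used in \cite{ILM20}) is: given a search space $X$ of size $N$ partitioned across the nodes together with a distributed procedure $\proc{Check}$ that on a superposition over $x\in X$ runs in $T$ rounds and decides whether $x$ is ``marked'', one obtains a procedure running in $\tilde O(\sqrt N\cdot T)$ rounds deciding whether any $x$ is marked. The new ingredient is that $\proc{Check}$ must itself be a bounded-error quantum distributed procedure (indeed another nested search). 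I would handle this by (i) boosting each inner procedure to error $1/\poly(N)$ via $O(\log N)$ sequential repetitions with majority vote, so that over the $O(\sqrt N)$ oracle calls of the outer search the total error stays $1/\poly(N)$, and (ii) insisting every subroutine be reversible so it can serve as a Grover oracle and be uncomputed. This composes to any constant depth with only $\polylog(n)$ overhead per level, which suffices since $p=O(1)$.

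Second, the reduction. A $K_p$ is a vertex $v$ together with a $K_{p-1}$ inside $N(v)$; more generally, after fixing a balanced partition $V=V_1\cup\dots\cup V_g$, a $K_p$ --- up to a union bound over the $O(g^{p})$ patterns of how its vertices meet the parts, degenerate patterns reducing to smaller-clique detection \emph{within} a part --- is a choice of parts together with a vertex $v$ in one of them and a $K_{p-1}$ inside $N(v)$ restricted to the other chosen parts. This gives the recursive structure: a nested quantum search whose outer levels range over part indices and a candidate vertex $v$, and whose innermost work is $K_{p-1}$-detection on the induced subgraph $G[N(v)\cap(\text{chosen parts})]$. Two complementary mechanisms keep each level cheap: a \emph{degree/size threshold} $\Delta$, so that when the relevant neighborhood is small we simply gather its induced subgraph onto $v$ by Lenzen routing and recurse locally (cost controlled by $\Delta$); and the classical balanced-partition idea of ``Tri, tri again'' for routing the induced subgraphs of the chosen parts efficiently when gathering is too expensive. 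The recursion bottoms out at $K_3$-detection.

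Third, I would write down and solve the resulting recurrence, schematically $T_p(n)\le (\text{gather cost}(\Delta))+\tilde O\!\big(\sqrt{\#\text{configurations}}\cdot(\text{routing cost}+T_{p-1}(\text{reduced size}))\big)$, optimizing the threshold $\Delta$ and the number of parts at each level; I expect this to yield $T_p(n)=\tilde O(n^{1-2/(p-1)})$ for $p\ge4$. The $p=3$ base case I would treat separately: a tailored two-level nested search --- outer over (small groups of) candidate vertices, inner over candidate common neighbors, with $\tilde O(1)$-round routing checks --- balanced against a ``gather all small neighborhoods'' phase, giving $\tilde O(n^{1/5})$. Finally, since the algorithm performs $\poly(n)$ search invocations, each amplified to error $1/\poly(n)$, a union bound gives overall success probability $1-1/\poly(n)$: if a $K_p$ exists, the branch targeting it succeeds and the relevant node outputs ``yes'', and otherwise no configuration is ever marked, so all nodes output ``no''.

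The main obstacle is quantitative rather than conceptual, and has two parts. (a) The bounded-error composition across the $p$ nested searches: done naively one loses a polynomial factor per level, so one must use the amplified/robust versions of distributed Grover search and track precisely both the per-level $\polylog(n)$ overhead and the error budget --- this is exactly the ``framework for nested distributed quantum searches'' the paper advertises. (b) Parameter balancing: arranging that at every level the gather cost, the induced-subgraph routing cost, and the square-root savings from search all balance to produce precisely the exponents $1/5$ and $1-2/(p-1)$ --- in particular ensuring the subproblem passed to the recursion is genuinely smaller, which forces care in how the partition granularity and the degree threshold interact and a separate argument for dense instances, where bounding the number of high-degree vertices does not by itself shrink the problem.
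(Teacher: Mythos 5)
Your composition framework (boosting bounded-error inner quantum checks and nesting to constant depth) matches the paper's nested-search lemma, and your $p=3$ plan is in the spirit of the paper's $\tilde O(n^{1/5})$ warm-up (classically gather $E(A_i,A_j)$ for a shard, then a Grover search over batches of candidate third vertices with $O(1)$-round checks). But for $p\ge 4$ there is a genuine gap, and it is not merely parameter tuning. Your recursion ``a $K_p$ is a vertex $v$ plus a $K_{p-1}$ inside $N(v)$; quantum-search over $v$ and recurse down to $K_3$'' cannot give $\tilde O(n^{1-2/(p-1)})$: each quantum level costs the square root of its search domain times the \emph{full} cost of the inner level, and in dense instances $|N(v)|=\Theta(n)$, so the schematic recurrence $T_p\approx \sqrt{n}\,(\text{routing}+T_{p-1})$ adds roughly $1/2$ to the exponent per peeled vertex (e.g.\ $T_4\approx n^{1/2+1/5}$, far above the target $n^{1/3}$). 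Even with optimal batching, the paper's own analysis shows that extending an already-listed clique by $t$ vertices via nested quantum search costs $\tilde O(n^{1-1/2^t})$ (or $\tilde O(n^{(1-1/p)(1-1/2^t)})$ in the opened-box variant), which tends to $n$ as $t$ grows; so peeling all the way down to $K_3$ quantumly is strictly worse than $n^{1-2/(p-1)}$ already for $p=5$. The degree-threshold/gather mechanism does not rescue this, since in a dense graph essentially every relevant neighborhood exceeds any sublinear threshold --- the very case you flag but do not resolve.

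The missing idea is the paper's inversion of the roles of classical and quantum work: the quantum search is used \emph{only for the last vertex}, and all of $K_{p-1}$ is handled classically. Concretely, one first runs the partition-based $K_{p-1}$-listing algorithm of \cite{DLP12}: each node $v$ is assigned a product $T_1^v\times\cdots\times T_{p-1}^v$ of sets of size $n^{1-1/(p-1)}$ and learns the $O(n^{2-2/(p-1)})$ crossing edges via Lenzen routing in $O(n^{1-2/(p-1)})$ rounds. Then a single (non-nested) Grover search over $n^{1-1/(p-1)}$ batches $Q_i$ of size $n^{1/(p-1)}$ checks whether some node extends one of $v$'s listed $(p-1)$-cliques into a $K_p$; each check requires only the $|T^v|\cdot|Q_i|=O(n)$ edges between $T^v$ and the batch, i.e.\ $O(1)$ rounds, so the search costs $\tilde O(n^{(1-1/(p-1))/2})$, which is dominated by (and at $p=4$ equal to) the classical listing cost. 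The exponent $1-2/(p-1)$ thus comes from the classical listing step, not from solving a recurrence over quantum levels --- this structural choice is what your proposal lacks.
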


For all values $p\ge 3$, the quantum algorithms we obtain by Theorem \ref{th:clique-clique} are faster than all known classical and quantum algorithms for $p$-clique detection.
%. In particular, for triangle detection, it improves the $\tilde O(n^{1/4})$ upper bound from \cite{ILM20}. 

We then investigate clique detection in the \qcongest model.
% and obtain the following results.
\newcommand{\THcongest}
{
There exists a quantum algorithm that solves $p$-clique detection with success probability at least $1-1/\poly(n)$ in the \qcongest model with complexity $\tilde O(n^{1/5})$ for $p=3$,
%, $\tilde O(n^{4/9})$ for $p=4$ and
and $\tilde O(n^{1-2/(p-1)})$ for $p\ge 7$.%\ftodo{Should we write the same value as in the classical case for $p=5$ and $p=6$?}
}
\begin{theorem}\label{th:clique-congest}
\THcongest
\end{theorem}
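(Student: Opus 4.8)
The plan is to establish Theorem~\ref{th:clique-congest} by a recursion on $p$ that reduces $K_p$-detection in \qcongest to a nested distributed quantum search over ``extending'' vertices, reusing the \qclique algorithm of Theorem~\ref{th:clique-clique} (and Corollary~\ref{cor:newCCextend}) as a black box for the dense case. Fix a degree threshold $\Delta$, to be optimized and expected to come out as $\Delta=\tilde\Theta(n^{1-2/(p-1)})$, and call a vertex \emph{light} if its degree is at most $\Delta$ and \emph{heavy} otherwise. A minimum-degree vertex of any copy of $K_p$ is either light or heavy, so it suffices to handle two cases: (a) there is a light vertex $w$ with $K_{p-1}\subseteq G[N(w)]$; and (b) the $K_p$ lies inside the subgraph induced by the heavy vertices, whose sparsity is controlled by the \nice{} partition of the earlier section (and if the input is globally denser than $\ex(n,K_p)$ we may immediately output ``yes''). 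Case (b) is solved by simulating the \qclique algorithm on the heavy vertices, routing its messages over the edges of $G$; case (a) is the recursive core.

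For case (a): every light vertex $w$, in parallel, runs a distributed Grover search over the candidate vertices $x\in N(w)$, and the predicate ``$x$ together with some $K_{p-2}$ inside $N(w)\cap N(x)$ completes a $K_p$'' is evaluated by a recursive invocation of the whole algorithm, with parameter $p-1$, on the graph $G[N(w)]$ of at most $\Delta$ vertices; equivalently, each recursion level pins down one more vertex of the sought clique by a quantum search over the current host neighborhood and then recurses on the intersection of neighborhoods, so that after $p$ levels an entire $K_p$ has been identified. Classically this would amount to ``$w$ collects $G[N(w)]$ and tests $K_p$ locally'' at cost $\tilde O(\Delta)$ per light vertex; the point of the quantum version is to replace that collection by a search that touches only $\tilde O(\sqrt{\,\cdot\,})$ of the candidates at each of the $\Theta(p)$ levels. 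This is exactly the nested-quantum-search-with-quantum-checking framework developed earlier: since every checking subroutine is only bounded-error quantum, the searches must be composed through the error-robust form of amplitude amplification so that the success probability survives all nested calls (at most $p$ of them, plus a $\polylog(n)$ number coming from auxiliary recursions that keep the host graphs balanced). The running time then splits into the search/checking contribution, which unwinds to $\tilde O(n^{1-2/(p-1)})$ for the optimal $\Delta$, and the cost of \emph{realizing} the sets $N(w)$, $N(w)\cap N(x)$, $\dots$ as communication subnetworks over $G$ while all light vertices operate simultaneously.

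The main obstacle I expect is precisely this realization cost, and it is what pins the result to $p\ge 7$. In the \qclique model all-to-all communication is free, which is why Theorem~\ref{th:clique-clique} already holds for every $p\ge 4$; in \qcongest one must route the recursive subnetworks along the edges of $G$, keeping per-edge congestion within $O(\log n)$ qubits per round even when every light vertex routes inside its $\Delta$-neighborhood at once, and one must also simulate the \qclique algorithm on the heavy subgraph. Both incur a cost of order roughly $\tilde\Omega(n^{2/3})$, so the target $\tilde O(n^{1-2/(p-1)})$ is the dominant term only when $1-2/(p-1)\ge 2/3$, i.e.\ $p\ge 7$; the case $p=3$ is handled separately, reusing the triangle analysis to obtain $\tilde O(n^{1/5})$. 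Beyond the framework already in place, the remaining work is a careful bandwidth/congestion accounting for the parallel recursive routing, a verification that the \nice{} partition keeps the heavy case within budget, and the bookkeeping of success probability across all nested searches --- routine in spirit, but the technical heart of the theorem.
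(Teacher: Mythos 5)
Your plan diverges substantially from the paper's proof, and it has a genuine gap at its core: the heavy case. You delegate all copies of $K_p$ whose vertices all have degree above $\Delta$ to ``simulating the \qclique algorithm on the heavy vertices, routing its messages over the edges of $G$,'' with the sparsity of that subgraph ``controlled by the \nice{} partition of the earlier section'' and a Tur\'an cutoff. No such partition exists in the paper (the macro is unused), and the Tur\'an bound only fires at density $\Theta(n^2)$, so the heavy subgraph can still have $\Theta(n)$ vertices of degree $\Theta(n)$ and $\Theta(n^2)$ edges while being far below $\ex(n,K_p)$. Simulating all-to-all \qclique communication on an arbitrary dense subgraph embedded in a \congest network is precisely the hard part of the problem: the heavy subgraph has no conductance guarantee, and there is no argument that its traffic can be routed over $G$ with per-edge congestion giving only an $\tilde{O}(n^{2/3})$ overhead --- that figure, and hence your derivation of the $p\ge 7$ threshold, is asserted rather than proved. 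The paper resolves exactly this difficulty differently: it never isolates ``heavy vertices,'' but instead runs the $K_{p-1}$-listing algorithm of~\cite{Censor+SODA21}, which uses an expander decomposition (Lemma~\ref{lem:decomp_soda}) into high-conductance clusters with average degree $\Omega(n^{\delta})$ plus a small remainder to recurse on, and then adds a single Grover search per cluster (leader $v_C$, search space $N^{+}(C)$ bucketed by degree, edges of the queried node broadcast inside $H(C)$ via expander routing, Theorem~\ref{thm-expander-routing}) to extend an already-listed $(p-1)$-clique by one node; the $n^{2/3}$ there arises from optimizing $n^{\delta}+n^{1-\delta/2}$ at $\delta=2/3$, which is what actually forces $p\ge 7$.

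Your light case also needs more than you give it. Running a full recursive quantum clique-detection inside $G[N(w)]$ simultaneously for every light $w$ has no congestion accounting: a vertex $u$ lies in $N(w)$ for up to $\deg(u)$ light vertices $w$, so it must serve that many parallel recursions over shared edges, and the induced subgraph $G[N(w)]$ is not a usable communication network (it may be disconnected; routing through $w$ is throttled by $w$'s $\le\Delta$ incident edges). Note also that with $\Delta=\tilde\Theta(n^{1-2/(p-1)})$ the purely classical alternative --- $w$ learns its induced $2$-hop neighborhood in $O(\Delta)$ rounds, as in Claim~\ref{claim:2hopNeighborhood} --- already fits the budget, so the nested search buys nothing where you deploy it; the quantum gain has to come (and in the paper does come) on the dense/cluster side. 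Finally, for $p=3$ ``reusing the triangle analysis'' is not automatic: the paper must rework its \qclique algorithm into a sparsity-aware \qcongest version (computation units, $\Theta(\log n)$-wise independent hash partitions, expander routing, and the reduction of Theorem~\ref{th:tri-in-conductance}) to get $\tilde{O}(n^{1/5})$.
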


For all $p=3$ and $p\ge 7$, the quantum algorithms we obtain by Theorem~\ref{th:clique-congest} are faster than all known algorithms for $p$-clique detection in the \congest or \qcongest model.
%For $p=4$, the complexity in the \qcongest model even goes below the $\Omega(\sqrt{n})$-round lower bound in the classical \congest model by Czumaj and Konrad \cite{CK20}.
For $4 \leq p \leq 6$, our approach currently does not lead to an improvement over the classical algorithms from \cite{Censor+SODA21}, which are respectively $\tilde O(\sqrt{n})$, $\tilde O(n^{3/5})$ and $\tilde O(n^{2/3})$. 
The reason is, informally, that in the $\qcongest$ model we (and all prior work on listing cliques)
decompose the graph into well-connected clusters, and simulate the $\qclique$ on each cluster;
for very small cliques, the effort required to collect the edges needed at each cluster overwhelms any time savings
we currently gain from the quantum search.
%\rtodo{Someone read the new explanation above and make sure it'sOK}

%, in both the \congest and the \clique models. Moreover, for triangle detection, our results improve the $\tilde O(n^{1/4})$ upper bound from \cite{ILM20}. 

%We then consider cycle detection. Since efficient classical algorithms already exist in the \clique model (especially for even-length cycles), we mainly focus on cycle detection in the \qcongest model. Our main results are summarized in the following two theorems.
%
%\begin{theorem}\label{th:cycle-odd}
%For any odd constant $\ell\ge 3$, the complexity of $\ell$-cycle detection in the \qcongest model is $\tilde \Theta(\sqrt{n})$ rounds. 
%\end{theorem}
%
%\begin{theorem}\label{th:cycle-even}
%%For any even constant $k\ge 3$, there exists a quantum algorithm that solves $2k$-cycle detection with success probability at least $1-1/\poly(n)$ in $\tilde O(n^{1/2-1/(2k(k-1))})$ rounds in the \qcongest model. 
%	For any constant 
%	$k \geq 2$,
%	there is a $\QCONGEST$ algorithm that runs in $\tilde{O}(n^{1/2 - 1/(2k(k-1))})$
%	rounds, and solves $C_{2k}$ detection with probability $1 - 1/\mathrm{poly}(n)$.
%	\label{thm:C2k}
%\end{theorem}
%For all values $\ell\ge 3$, the quantum algorithms we obtain by Theorems~\ref{th:cycle-odd} and \ref{th:cycle-even} are faster than all known algorithms for $\ell$-cycles detection in the \congest model.

%\ftodo{Say a few words about the barrier results.}\otodo{Added a short paragraph}

Finally, we consider lower bounds. While a tight $\Omega(\sqrt{n})$ lower bound for $K_4$-detection is known in the (classical) $\congest$ model \cite{CK20}, tight lower bounds are not known for larger cliques,
and there is no known non-trivial lower bound for $K_p$-detection in the quantum setting.
%\ftodo{I have added the previous sentence (I think that we really need to say that we don't know any lower bound, otherwise the barrier result does not look so interesting). Please check that what I wrote is true.}
We show a barrier for proving an $\Omega(n^{3/5+\epsilon})$ lower bound for $p$-clique detection for any $\eps > 0$. Namely, such a bound would imply breakthrough results in the field of circuit complexity, which are far beyond the current state-of-the-art. We actually show this barrier for the (classical) $\congest$ model, but since any lower bound in the $\qcongest$ model also holds the $\congest$ model, this barrier holds for the $\qcongest$ model as well.\footnote{Note that the statement of Theorem \ref{thm:Kp_obstacle} is actually interesting only for $p\ge 6$, since for $p=4,5$ we know that there exist algorithms beating this barrier: as already mentioned, algorithms with complexity $\tilde O(\sqrt{n})$ for $p=4$ and $\tilde O(n^{3/5})$   for $p=5$ are given in \cite{Censor+SODA21}.} 
%\ftodo{Also added this sentence.}
%\begin{theorem}[Informal]
%	 For any constant integer $p \geq 4$ and any constant $\epsilon > 0$, proving a lower bound of $\Omega(n^{3/5+\epsilon})$ on $K_p$ detection for randomized $\CONGEST$ algorithms would imply new lower bounds on high-depth circuits with constant fan-in and fan-out gates.
%	\label{thm:Kp_obstacle}
%\end{theorem}
\begin{theorem}
	 For any constant integer $p \geq 4$ and any constant $\epsilon > 0$, proving a lower bound of $\Omega(n^{3/5+\epsilon})$ on $p$-clique detection in the $\congest$ model would imply new lower bounds on high-depth circuits with constant fan-in and fan-out gates.
	\label{thm:Kp_obstacle}
\end{theorem}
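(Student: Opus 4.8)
The approach I would take is the standard one for ``barrier'' theorems: prove the \emph{contrapositive} of a reduction, i.e., show that a sufficiently efficient family of constant-fan-in/fan-out circuits for a $K_p$-related problem would yield a $\CONGEST$ algorithm for $p$-clique detection running in $o(n^{3/5+\eps})$ rounds, contradicting the hypothesized lower bound. Then ``proving the $\Omega(n^{3/5+\eps})$ lower bound'' is forced to ``prove that no such efficient circuit exists,'' which is exactly the promised circuit lower bound. Since any $\qcongest$ lower bound is in particular a $\CONGEST$ lower bound, it suffices to argue in the classical model, as the statement already notes.

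The crux is a \emph{circuit-simulation lemma}: a circuit $C$ of depth $d$ and size $s$, all of whose gates have $O(1)$ fan-in \emph{and} $O(1)$ fan-out --- so that its wiring is a bounded-degree DAG --- and whose inputs are the (suitably encoded) adjacency information of the input graph $G$, can be evaluated in $\CONGEST$ on $G$ in $\tilde O\bigl(d + \Phi(s,n)\bigr)$ rounds, where $\Phi$ is a congestion term. The role of the bounded-fan-out hypothesis is that each gate's value has to be forwarded along only $O(1)$ wires, so one layer of $C$ can be advanced with $O(1)$ messages per gate; the only obstacle is realising those messages inside $G$, whose cut structure is adversarial. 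Here I would invoke the decompose-and-simulate framework underlying the classical clique-detection algorithms of~\cite{Censor+SODA21} and the algorithms of this paper: decompose $G$ into near-expander clusters joined by few inter-cluster edges, dispatch cliques that cross cluster boundaries by the usual sparsification and recursion, and \emph{inside} each expander cluster use expander routing to host a bounded-degree layer of $C$ at $\tilde O(1)$ rounds per layer as long as only $\tilde O(1)$ gates sit on each node. Summing over the $d$ layers yields the $\tilde O(d)$ term; $\Phi(s,n)$ is the extra cost when $C$ is large enough that more than $\tilde O(1)$ gates must be placed on some node, and it is the value of $\Phi$ at the smallest circuit we currently know how to build for this task --- the one coming from the same sparsity-balanced decomposition that gives the $\tilde O(n^{3/5})$ algorithm for $K_5$ --- that pins the threshold at $\tilde\Theta(n^{3/5})$ (the depth of that circuit being only $\polylog(n)$).

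It remains to check that $p$-clique detection genuinely has this shape with good parameters. Deciding $K_p$-freeness is a constant-depth, polynomial-size, unbounded-fan-in predicate of the adjacency matrix, so it admits a constant-fan-in/fan-out circuit of depth $O(\log n)$ and size $n^{O(p)}$ (fan-out trees feeding the $\binom np$ constant-fan-in AND gates, topped by an OR-tree); the substance of the reduction is to restructure and sparsify this circuit --- reusing the edge-sampling and clustering ideas from the upper bounds --- so that its size lands in the regime where the simulation lemma runs in $\tilde O(n^{3/5})$ rounds while its depth stays $o(n^{3/5+\eps})$. Feeding such a circuit into the simulation lemma produces the fast $\CONGEST$ algorithm; contrapositively, an $\Omega(n^{3/5+\eps})$ round lower bound forces every such bounded-degree circuit for $p$-clique detection to have depth $\Omega(n^{3/5+\eps})$, equivalently size beyond the $\tilde\Theta(n^{3/5})$-level threshold even when arbitrarily large depth is allowed --- a size/size--depth lower bound for an explicit function against bounded-fan-in/fan-out circuits that is well outside the reach of current circuit-complexity techniques. (Consistently with the footnote, the statement carries content only for $p\ge 6$: for $p\in\{4,5\}$ the $\tilde O(\sqrt n)$ and $\tilde O(n^{3/5})$ algorithms of~\cite{Censor+SODA21} already refute the lower bound unconditionally.)

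The step I expect to be the main obstacle is making the simulation lemma \emph{tight} on arbitrary input graphs: the bounded-degree circuit and all of its wires must be laid out inside a worst-case network, and one must show that combining the expander decomposition with the circuit layout costs only an additive $\tilde O(n^{3/5})$ and never more --- in particular that the inter-cluster ``gluing'' of partial clique information inflates neither the depth past $o(n^{3/5+\eps})$ nor the congestion past $\tilde O(n^{3/5})$. This bookkeeping, together with the dual task of certifying that the sparsified $p$-clique circuit really attains the required size, is where the precise exponent $3/5$ has to be earned, and it is an unavoidable if conceptually routine calculation.
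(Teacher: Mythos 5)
Your opening paragraph has the right shape, and it matches the paper's strategy: assume a constant fan-in/fan-out circuit family for $K_p$-detection with near-linear wire count and polynomially bounded depth, simulate it in \congest via an expander decomposition plus the circuit-simulation lemma of~\cite{EFFKO19}, recurse on the leftover edges, and conclude that an $\Omega(n^{3/5+\eps})$ round lower bound would rule out such circuits. However, your third paragraph contains a genuine and central error: you propose to actually \emph{construct} the circuit, by ``restructuring and sparsifying'' the trivial depth-$O(\log n)$, size-$n^{O(p)}$ circuit for $K_p$-freeness until ``its size lands in the regime where the simulation lemma runs in $\tilde O(n^{3/5})$ rounds.'' That reverses the quantifier. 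In the barrier argument the small circuit is a \emph{hypothesis}, never an object we build: the implication is ``if a circuit with $M^{1+\alpha}$ wires and depth $M^{\alpha}$ existed, \congest could solve $K_p$-detection in $O(n^{3/5+\alpha+o(1)})$ rounds, hence the claimed round lower bound would prove no such circuit exists.'' Building such a circuit is itself an open problem (a near-linear-wire, bounded fan-in/fan-out circuit for clique detection is not known), and if you could build one you would have produced a new unconditional $\tilde O(n^{3/5+o(1)})$-round algorithm rather than a barrier. Note also that the hypothesized circuits have depth up to $M^{\alpha}$ — this is why the theorem speaks of \emph{high-depth} circuits — so insisting on depth $o(n^{3/5+\eps})$ or $\polylog(n)$ circuits misses what makes the implied lower bound a breakthrough.

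Two further quantitative points would break your write-up even if the quantifier issue were fixed. First, you misattribute the exponent $3/5$: it does not come from ``the smallest circuit we currently know how to build,'' but from the classical reduction to high-conductance clusters. After the expander decomposition, each cluster node $v$ must pull in $\tilde O(n^{3/5}\deg(v))$ edges from the cluster's vicinity (the sets $E_{v,C}$ and $E'_{v,C}$), chosen so that every potential $K_p$-copy is either visible to some cluster or confined to a residual edge set ($E_r$ together with the edges among the ``bad'' nodes $S_C$) of size at most $2\epsilon m$, which permits $O(\log n)$ levels of recursion; the bound $\sum_C |E(S_C,S_C)| \leq \epsilon m$ forces $\beta = 3/5$, and this per-node input load is exactly the factor $c=\tilde\Theta(n^{3/5})$ entering the simulation. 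Second, your cost model $\tilde O(d+\Phi(s,n))$ for the simulation is not the one that is known: the lemma of~\cite{EFFKO19} gives round complexity that is \emph{multiplicative}, of the form $O(R\cdot c\cdot s\cdot \tau_{\textrm{mix}}\cdot 2^{O(\sqrt{\log n\log\log n})})$, where $R$ is the depth, $s$ the wire-budget factor, and $c$ the per-node input load; it is precisely this product structure that dictates the allowed circuit parameters ($R=M^{\alpha}$ depth and $M^{1+\alpha}$ wires) so that the whole algorithm stays at $n^{3/5+\alpha+o(1)}$ rounds. Without these two corrections the ``bookkeeping that earns the exponent $3/5$,'' which you defer to the end, cannot be carried out.
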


Previously, several such barriers were known: in \cite{EFFKO19} it is shown that there is an absolute constant $c$ such that proving a lower bound of the form $\Omega(n^{1-c})$ for $C_{2k}$-detection would imply breaking a circuit complexity barrier. In \cite{EFFKO_note} such a barrier is shown for proving an $\Omega(n^\epsilon)$ lower bound on triangle detection, and in \cite{CHFGLGDR20} a barrier for proving a lower bound of $\Omega(n^{1/2+\epsilon})$ for $C_6$-detection is given.

We refer to Table \ref{table:results} for a summary of our results and a detailed comparison with prior works.%\ftodo{I am not sure if this really makes sense to have a table. If we can't display nicely our results in the table, let's remove it.}\dtodo{We can perhaps display in a line--graph, with the $x$-axis being the clique size, and the $y$-axis being the complexity} 
\vspace{2mm}
\begin{table}[tb]\centering
  \caption{Our upper bounds for subgraph detection, and the corresponding known results in the classical setting. Here $n$ denotes the number of nodes in the network. Note that in \qclique, the algorithms we obtain for $K_p$-detection with $p\ge 6$ are even faster than shown here, see Corollary \ref {cor:newCCextend}.} 
%For even-length cycle detection the table also shows the concrete upper bounds given by Theorem~\ref{th:cycle-even} for $k=2,3,4,5$.}
%  \begin{tabular}{|p{2.3cm}|p{3.3cm}|p{4.5cm}|} \hline
 \begin{tabular}{|p{2.3cm}|p{3.3cm}|p{6.5cm}|} \hline
    Subgraph & \clique & \qclique \\ \hline \hline
    \multirow{2}{*}{$K_3$}&\multirow{2}{*}{$\tilde{O}(n^{1/3})$\,\hspace{14mm} \cite{CS19}}& $\tilde{O}(n^{1/4})$ \bigstrut \hspace{17mm}\cite{ILM20}\\
                                       && $\tilde{O}(n^{1/5})$\hspace{16mm}  Theorem~\ref{th:clique-clique} \\ \hline
   % $K_4$& $\tilde{O}(\sqrt{n})$\bigstrut\hspace{15mm}\: \cite{Censor+SODA21}& $\tilde{O}(n^{1/3})$\hspace{16mm}  Theorem~\ref{th:clique-clique} \\ \hline
    $K_p$ ($p\ge 4$)& $\tilde{O}(n^{1-2/p})$\:\bigstrut\hspace{10mm} \cite{Censor+SODA21}& $\tilde{O}(n^{1-2/(p-1)})$  \hspace{5mm}  Theorem~\ref{th:clique-clique}\\ \hline
  \end{tabular}
\vspace{4mm}

%  \begin{tabular}{|p{2.3cm}|p{3.3cm}|p{4.5cm}|} \hline
  \begin{tabular}{|p{2.3cm}|p{3.3cm}|p{6.5cm}|} \hline
    Subgraph & \congest & \qcongest \\ \hline \hline
    \multirow{2}{*}{$K_3$}&\multirow{2}{*}{$\tilde{O}(n^{1/3})$\,\hspace{14mm}\cite{CS19}}& $\tilde{O}(n^{1/4})$\bigstrut  \hspace{17mm}\cite{ILM20}\\
                                       && $\tilde{O}(n^{1/5})$  \hspace{15mm}  Theorem~\ref{th:clique-congest} \\ \hline
    %$K_4$& $\tilde{\Theta}(\sqrt{n})$\bigstrut\:\hspace{14mm} \cite{Censor+SODA21,CK20}& $\tilde{O}(n^{4/9})$  \hspace{15mm}  Theorem~\ref{th:clique-congest} \\ \hline
%    $K_p$ ($p\ge 7$)& $\tilde{O}(n^{1-2/p})$\bigstrut\hspace{11mm}\cite{Censor+SODA21}& $\tilde{O}(n^{1-2/(p-1)})$  \hspace{5mm}  Theorem~\ref{th:clique-congest} \\ \hline
    \multirow{2}{*}{$K_p$ ($p\ge 7$)}&$\tilde{O}(n^{1-2/p})$\bigstrut\hspace{11mm}\cite{Censor+SODA21}& \multirow{2}{*}{$\tilde{O}(n^{1-2/(p-1)})$  \hspace{5mm}  Theorem~\ref{th:clique-congest}}\\
                                       &${\Omega}(\sqrt{n})$\bigstrut\hspace{12mm}\,\,\cite{CK20}&  \\ \hline
%   $C_\ell$ ($\ell\ge 5$ odd)& $\tilde{\Theta}(n)$\bigstrut\:\hspace{18mm}\cite{DKO14}& $\tilde{\Theta}(\sqrt{n})$  \hspace{16mm}  \:Theorem~\ref{th:cycle-odd} \\ \hline
%   $C_\ell$ ($\ell\ge 4$ even)& $\tilde{O}(n^{1-1/\Theta(\ell^2)})$\bigstrut\:\hspace{5mm}\cite{EFFKO19}& $\tilde{O}(n^{1/2-1/\Theta(\ell^2)})$  \:\hspace{3mm}  Theorem~\ref{th:cycle-even} \\ \hline
%$C_4$ & $\tilde{\Theta}(\sqrt{n})$\hspace{16mm}\cite{DKO14}\bigstrut& $\tilde{O}(n^{1/4})$  \hspace{15mm}  Theorem~\ref{th:cycle-even} \\ \hline
%$C_6$ & $\tilde{O}(n^{2/3})$ \,\hspace{12mm} \cite{CHFGLGDR20}\bigstrut& $\tilde{O}(n^{5/12})$  \!\hspace{14mm}  Theorem~\ref{th:cycle-even} \\ \hline
%$C_8$ & $\tilde{O}(n^{3/4})$ \,\hspace{12mm} \cite{CHFGLGDR20}\bigstrut& $\tilde{O}(n^{11/24})$  \!\hspace{13mm}  Theorem~\ref{th:cycle-even} \\ \hline
%$C_{10}$ & $\tilde{O}(n^{4/5})$ \,\hspace{12mm} \cite{CHFGLGDR20}\bigstrut& $\tilde{O}(n^{19/40})$  \!\hspace{13mm}  Theorem~\ref{th:cycle-even} \\ \hline
  \end{tabular}\label{table:results}
\end{table}

\paragraph{Overview of our main technique.}
Our key approach is to encapsulate $K_p$-detection as a search task, and use a distributed implementation~\cite{LeGall+PODC18} of Grover search~\cite{Grover96} to solve the task.
Grover's algorithm consists of alternating between quantum operations called \emph{Grover diffusion operations},
and checking operations, also called \emph{checking queries}. The total number of operations is ${O}(\sqrt{|X|})$,
where $X$ is the search domain.
In the distributed implementation developed in \cite{LeGall+PODC18}, one specific node of the network (the \emph{leader}) executes each Grover diffusion operation locally, but implements each query in a distributed way using a distributed checking procedure. All prior works using this framework (\cite{Izumi+PODC19, LeGall+PODC18,ILM20}) considered the setting where the checking procedure is a classical procedure. In this work, we consider checking procedures that themselves also apply distributed Grover searches; we develop a framework to describe such \emph{nested} distributed quantum searches (see Lemma \ref{lemma:qsearch} in Section~\ref{section:prelim}). Our framework can actually be applied in a completely ``black-box'' way to design quantum distributed algorithms (i.e., no knowledge of quantum computation is needed to apply this framework).

The main challenge in applying quantum search to the clique-detection problem is that the search-space is very large: we must search over the $\Theta(n^p)$ possibilities, and a na\"ive approach would require $\Theta(n^{p/2})$ quantum queries, which is extremely inefficient. Instead, we show that one can carefully split the search into nested stages, so that each stage adds a single node to the clique we are trying to find. Crucially, nesting the stages of the search allows us to re-use information computed in one stage for all the search queries in the next stage: in each stage, we have already found some $\ell$-cliques, where $\ell < p$, and we want to add one more node to the cliques, to obtain $(\ell+1)$-cliques (until in the final stage we obtain $p$-cliques). To this end, the nodes collect some edges, which allow them to detect some $(\ell+1)$-cliques, and then use a nested search to try to complete the $(\ell+1)$-cliques
into $p$-cliques.

Perhaps surprisingly, it turns out that in many cases it is not worthwhile to ``start the search from scratch'': instead of using quantum search to detect $p$-cliques ``from scratch'', it is more efficient to first \emph{classically} list all $q$-cliques for some $q < p$, and then use quantum search to find an extension of some $q$-clique into a $p$-clique.
This echoes the theme of re-using information throughout the stages of the search: we precompute some information classically, which will be used by all stages of the search.
For example, we show that to solve triangle-detection, we can improve on the algorithm from~\cite{ILM20} by first classically listing \emph{edges}, so that every node of the congested clique learns some set of edges that it will be responsible for trying to complete into a triangle, and then using quantum search to find a node that forms a triangle with some edge. This reduces the running time from $\tilde{O}(n^{1/4})$ rounds in~\cite{ILM20} to $\tilde{O}(n^{1/5})$ rounds in our new algorithm.

More generally, we can solve the $K_p$-detection problem by first classically listing all instances of $K_{p-1}$
in the graph, and then performing distributed Grover search over the nodes, to check if some $(p-1)$-clique can be extended into 
a $p$-clique. Since there are $n$ nodes to check, the Grover search will require $\sqrt{n}$ quantum queries, and each query can be checked in $O(1)$ rounds (in the congested clique, it is possible to learn all neighbors of a given node in a single round).
Thus, the overall running time we obtain will be $\tilde{O}(L_{p-1} + \sqrt{n})$, where $L_{p-1}$ is the time required to list all $(p-1)$-cliques.
For example, $4$-cliques can be listed in $L_4 = O(\sqrt{n})$ rounds~\cite{DLP12}, and this approach allows us to solve the $K_5$-detection problem in roughly the same time complexity, $\tilde{O}(\sqrt{n})$.
However, the cost $L_{p-1}$ grows with $p$, so sometimes it is better to start from a smaller clique, $K_q$ for $q < p-1$,
and extend by more than a single node. This leads to our general nested-search-based approach, which starts by listing all copies of $K_q$
for some $q < p$, and then uses nested quantum search to check if some $q$-clique can be extended into a $p$-clique.

 \vspace{2mm}

\noindent\textbf{Review of prior works on classical algorithms for clique detection.}
In the \congest model, the first sublinear algorithm for $p$-clique detection was obtained for $p=3$ (i.e., triangle detection) by Izumi and Le Gall~\cite{ILG2017}. The complexity of triangle detection was then improved to $\tilde{O}(\sqrt{n})$ by Chang et al.~\cite{CPZ19}, where $n$ denotes the number of nodes, and then further to $\tilde{O}(n^{1/3})$ by Chang and Saranurak~\cite{CS19}. For $p$-cliques with $p\geq 4$, the first sublinear detection algorithm was constructed by Eden et al.~\cite{EFFKO19}. %, who showed the upper bounds $\tilde{O}(n^{5/6})$ rounds for $k=4$ and $\tilde{O}(n^{21/22})$ rounds for $k=5$.
These results were improved to $\tilde{O}(n^{p/(p+2)})$ rounds for all $p\geq 4$ by Censor-Hillel et al.~\cite{CHLGL20}, and very recently, $\tilde{O}(n^{1-2/p})$ rounds for all $p\geq 4$ by Censor-Hillel et al.~\cite{Censor+SODA21}. Czumaj and Konrad \cite{CK20} have shown the lower bound $\Omega(\sqrt{n})$ for $p$-clique detection for $p\ge 4$, which matches the upper bound from \cite{Censor+SODA21} for $p=4$. Proving lower bounds for triangle detection, on the other hand, appears extremely  challenging: it is known that for any $\varepsilon > 0$, showing a lower bound of $\Omega(n^{\varepsilon})$ on triangle detection implies strong circuit complexity lower bounds~\cite{CHFGLGDR20} (see nevertheless \cite{ACHKL20} for a weaker, but still non-trivial, lower bound for triangle detection).\footnote{Note that the algorithms from \cite{CPZ19,CHLGL20,CS19,Censor+SODA21} actually solve the listing version of the problem (which asks to list all $p$-cliques of the graph) as well. For the listing version, lower bounds matching the upper bounds from \cite{CS19,Censor+SODA21} for all values of $p\ge 3$ are known \cite{ILG2017,PS18,FGKO18}.}  

%Concerning $\ell$-cycle detection in the \congest model, for odd $\ell\ge 5$ the complexity of the problem is known to be $\tilde{\Theta}(n)$ \cite{DKO14,KR17}. For $\ell=4$, Drucker et al.~\cite{DKO14} showed that the complexity is $\tilde \Theta(\sqrt{n})$. For even $\ell\ge 6$, Korhonen and Rybicki~\cite{KR17} showed a $\tilde{\Omega}(\sqrt{n})$ lower bound.  Fischer et al.~\cite{FGKO18} showed an $\tilde{O}(n^{1-1/(k^2-k)})$-round algorithm for $C_{k}$ detection for any $k\ge 3$, which was improved by Eden et al.~\cite{EFFKO19} to $\widetilde{O}(n^{1-2/(k^2-k+2)})$ for odd $k$ and $\widetilde{O}(n^{1-2/(k^2-2k+4)})$ for even $k$. Recently, Censor-Hillel et al.~\cite{CHFGLGDR20} showed that for $3\leq k \leq 5$, $C_{2k}$ detection can be solved in $\tilde{O}(n^{1-1/k})$ rounds. 

In the powerful \clique model, the best known upper bounds on the round complexity of $p$-clique detection is $O(n^{0.158})$ for $p=3$, which is obtained by the algebraic approach based on matrix multiplication developed by Censor-Hillel et al.~\cite{CHKKLPJ19}, and $O(n^{1-2/p})$ for any constant $p\ge 4$ \cite{DLP12}. \vspace{2mm}

%For any even constant $\ell$, $\ell$-cycle detection can be solved in $O(1)$ rounds~\cite{CHFGLGDR20}. For any odd constant $\ell\geq 3$, the best known upper bound for $\ell$-cycle detection in the \clique model is $O(n^{0.158})$ rounds, obtained again by the algebraic approach from \cite{CHKKLPJ19}.\vspace{2mm}

\noindent{\bf Further related works on quantum distributed computing.}
There exist a few works investigating the power of quantum distributed computing in other models  or settings (see also \cite{Arfaoui+14} and \cite{Denchev+08} for surveys). 
%The quantum version of the \congest model (called below the \qcongest model), in which each vertex can send one message of $O(\log n)$ qubits  to each adjacent vertex per round, was first be investigated by Elkin et al.~\cite{EKNP14}. Their discoveries were nevertheless negative: they showed that for many important graph-theoretical problems, the ability to send quantum information is not helpful. 
In the \local model, separations between the computational powers of the classical and quantum algorithms have been also obtained \cite{GKM09,LNR19}. Over anonymous networks, zero-error quantum algorithms have been constructed for leader election \cite{Tani+12}. Finally, quantum algorithms for byzantine agreements have also been investigated \cite{Ben-Or+STOC05}.\vspace{2mm}

\noindent{\bf Organization of the paper.} The core conceptual message of the paper is contained in the first 10 pages: we describe our main technique in Section \ref{section:prelim} and then, in Section \ref{sec:CC}, explain how  to use this technique to construct fast quantum algorithms for clique detection in the $\qclique$ model. Further sections then show how to apply the technique to construct fast algorithms in the $\qcongest$ model (Section \ref{sec:qcongest}, which proves Theorem \ref {th:clique-congest}) and construct even faster algorithms in the \textsc{Quantum}  \textsc{Congested Clique} model (Section \ref{sec:CCfast}, which proves Theorem \ref{th:clique-clique}). A proof of Theorem \ref{thm:Kp_obstacle} is given in Section \ref{sec:barrier}.

% !TEX root = ./main.tex
%=========================
\section{Nested Distributed Quantum  Searches}
\label{section:prelim}
In this section we present our main technique: nested distributed quantum searches.
This is a generalization of a technique (called below \emph{distributed Grover search}) used in prior quantum distributed works \cite{Izumi+PODC19, LeGall+PODC18,ILM20}.
%While in all those prior works distributed Grover search was enough to construct fast quantum algorithms, in the present paper we do need nested distributed quantum searches in order to show our results, as already mentioned in the introduction. 

We note that implementing distributed quantum searches in a nested way is already allowed (but not used) in the framework introduced in \cite{LeGall+PODC18}. Our main contribution in the current section is developing this approach into a full framework and describing its concrete implementation in the distributed setting.

\paragraph{Standard Grover search.}
We begin by informally describing the most standard framework for Grover search ---
as a technique to solve a search problem with black-box access. 

Consider the following: given black-box access to a function $f\colon X\to\{0,1\}$, for an arbitrary $X$, find an $x\in X$ such that $f(x)=1$, if such an element exists. Grover's quantum algorithm \cite{Grover96} solves this problem with high probability using ${O}(\sqrt{|X|})$ calls to the black box. Grover's algorithm consists of ${O}(\sqrt{|X|})$ steps,
where each step executes one quantum operation called the \emph{Grover diffusion operation}, which does not use the black-box, and an operation called the \emph{checking procedure}, which uses one call to the black-box.

\paragraph{Distributed Grover search.} Let us present the basic quantum distributed search framework (distributed Grover search) introduced in \cite{LeGall+PODC18}. In this distributed implementation, one specific node,
called \emph{the leader}, run each Grover diffusion locally,
but the checking procedure is implemented via a distributed algorithm.

Consider again a function $f\colon X\to\{0,1\}$, for an arbitrary $X$, and the following search problem: one specified node (the leader) should find an element $x\in X$ such that $f(x)=1$, or, if no such element exists the leader should output ``not found''.  Assume there exists a distributed algorithm $\mathcal{A}$, called the \emph{checking procedure}, in which the leader is given $x\in X$ as input, and the leader returns $f(x)$ as output. The checking procedure~$\mathcal{A}$ is often described as a classical algorithm, but it can also be a quantum distributed algorithm.\footnote{As explained in \cite{LeGall+PODC18}, a classical procedure can easily be converted using standard techniques into a quantum procedure able to deal with superpositions of inputs.}

Let $r$ be the round complexity of~$\mathcal{A}$. The framework introduced in \cite{LeGall+PODC18} shows that there is a quantum distributed algorithm that runs in $\tilde{O}(\sqrt{|X|}\cdot r)$ rounds and enables the leader to solve the above search problem with probability at least $1-1/\poly(n)$. While the original statement in \cite{LeGall+PODC18} was for the \qcongest model, as explained in \cite{Izumi+PODC19}, the same holds for the \qclique model.
%For further reference, we state this result as the following lemma.%\vspace{2mm}

\begin{lemma}[\cite{LeGall+PODC18}]
\label{lemma:qsearchv0}
There is a quantum algorithm that runs in
$\tilde{O}(\sqrt{|X|}\cdot r)$
rounds and enables the leader to solve the above search problem with probability at least $1-1/\poly(n)$. This statement holds  in both the \qcongest and the \qclique models.
\end{lemma}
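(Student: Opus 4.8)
The plan is to derive Lemma~\ref{lemma:qsearchv0} by reducing the distributed search problem to a single invocation of (amplified) Grover search, where the oracle call is realized by the leader orchestrating a distributed run of the checking procedure~$\mathcal{A}$. First I would set up the quantum simulation of~$\mathcal{A}$: by the footnote remark (converting a classical procedure into a reversible/quantum one via standard techniques, and noting $\mathcal{A}$ may already be quantum), there is a unitary $U_\mathcal{A}$ acting on the joint Hilbert space of all network nodes, distributed across the $n$ nodes, that implements one execution of~$\mathcal{A}$ in $r$ communication rounds of $O(\log n)$ qubits each; running $U_\mathcal{A}$, then a controlled phase flip on the leader's output register, then $U_\mathcal{A}^{-1}$ (which costs another $r$ rounds, reversing the message exchanges) gives the leader a phase oracle $O_f \colon \ket{x}\mapsto (-1)^{f(x)}\ket{x}$ on the search register, at a cost of $O(r)$ rounds per oracle call, with all ancillas returned to their initial state so that oracle calls compose cleanly.

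Next I would have the leader run Grover's algorithm~\cite{Grover96} over the domain $X$ locally: the leader holds the search register of $\lceil\log|X|\rceil$ qubits, prepares the uniform superposition, and alternates the local Grover diffusion operator (which requires no communication) with the distributed oracle $O_f$ just described. Since the number of marked elements is unknown a priori, I would use the standard fix --- either exponential search over guesses for the number of solutions, or the fixed-point/amplitude-estimation variant --- which multiplies the query count by at most a constant or $O(\log|X|)$ factor. This uses $O(\sqrt{|X|})$ oracle calls, hence $O(\sqrt{|X|}\cdot r)$ rounds, after which the leader measures to obtain a candidate $x$, verifies it with one more distributed run of~$\mathcal{A}$, and outputs $x$ if $f(x)=1$ and ``not found'' otherwise. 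A single round of Grover search succeeds with constant probability; I would then boost the success probability to $1-1/\poly(n)$ by repeating the whole procedure $O(\log n)$ times (or equivalently by the standard amplification built into the distributed-search framework of~\cite{LeGall+PODC18}), absorbing the extra factor into the $\tilde O(\cdot)$. For correctness when no solution exists, the final verification step guarantees the leader never outputs a spurious $x$.

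The main obstacle, and the only genuinely non-routine point, is arguing that $\mathcal{A}$ can be run coherently on a superposition of inputs $\sum_x \alpha_x \ket{x}$ with the claimed round complexity: one must check that the distributed computation can be made reversible (ancillas uncomputed, no intermediate measurements) and that entanglement between the leader's control register and the rest of the network does not inflate the per-round bandwidth --- each node still sends $O(\log n)$ qubits per round because the gate structure of $\mathcal{A}$ is input-independent in its communication pattern. This is exactly the content established in~\cite{LeGall+PODC18} (and reused in~\cite{Izumi+PODC19}), so for the purposes of this lemma I would cite that construction rather than re-deriving it; the remaining steps (local Grover iteration, unknown-number-of-solutions handling, success amplification, final classical verification) are standard and contribute only polylogarithmic overhead. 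Finally, since neither the simulation of~$\mathcal{A}$ nor the leader's local Grover routine uses any property of the communication topology beyond the ability to route $O(\log n)$-qubit messages, the argument is identical in the \qcongest and \qclique models, which gives the last sentence of the statement.
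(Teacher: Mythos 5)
Your proposal is correct and follows essentially the same route as the paper, which does not prove this lemma itself but cites the distributed Grover search framework of \cite{LeGall+PODC18} (extended to the \qclique model via \cite{Izumi+PODC19}): the leader runs the Grover diffusion locally while each query is implemented by a distributed, reversibly-simulated run of the checking procedure at cost $O(r)$ rounds, with standard amplification absorbed into the $\tilde O(\cdot)$. You also correctly identify that the only non-routine ingredient — running $\mathcal{A}$ coherently on superpositions without inflating per-round bandwidth — is exactly the content of the cited construction rather than something to re-derive here.
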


\noindent{\bf Nested distributed quantum searches.} All prior works using quantum distributed search (\cite{Izumi+PODC19, LeGall+PODC18,ILM20}) used a classical checking procedure $\mathcal{A}$. The framework of \cite{LeGall+PODC18} nevertheless allows quantum checking procedures. In particular, a distributed Grover search can be used as the checking procedure.
%, and in the current paper, we will use several such levels of nested Grover search.
We now present our framework for nested distributed quantum searches, consisting of $k$ nested levels, where at each level:
\begin{enumerate}
	\item The nodes run a distributed \emph{setup} step for the current level, collecting information
		and preparing for the next search levels. The setup procedure in our results is classical, but in general it can be quantum.
	\item We execute the next level of the search. Crucially, the information prepared during the setup 
		will be re-used to evaluate all the nested queries in the next level (and subsequent levels).
\end{enumerate}

Formally, let $f\colon X_1\times\cdots\times X_k\rightarrow \{0,1\}$ be a function, for a constant $k\ge 2$ and sets $X_1,\ldots,X_k$. The goal is finding $(x_1,\ldots,x_k)\in X_1\times\cdots\times X_k$ where $f(x_1,\ldots,x_k)=1$, if such exists.
For $\ell\in\{1,\ldots,k-1\}, u\in V$, let
$\setup^u_\ell\colon X_1\times \ldots \times X_{\ell}\to\{0,1\}^\ast$ be a function describing the setup data of $u$ for the $(\ell+1)$-th search.
Let $\mathcal{S}_1,\ldots,\mathcal{S}_{k-1}$ and $\mathcal{C}$ be distributed algorithms with the following specifications.
\begin{itemize}
\item
\underline{Algorithm $\mathcal{S}_1$}. Input: the leader is given $x_1\in X_1$.  Output: each node $u\in V$ outputs $\setup^u_1(x_1)$.
\item
\underline{Algorithm $\mathcal{S}_\ell$} for any  $\ell\in\{2,\ldots,k\}$. Input: the leader is given $(x_1,\ldots,x_{\ell})\in X_1\times\cdots\times X_{\ell}$ and each node $u\in V$ is given $\setup^u_{\ell-1}(x_1,\ldots,x_{\ell-1})$. Output: each node $u\in V$ outputs $\setup^u_{\ell}(x_1,\ldots,x_{\ell})$.
\item
\underline{Algorithm $\mathcal{C}$}. Input: the leader is given $(x_1,\ldots,x_k)\in X_1\times\cdots\times X_k$ and each node $u\in V$ is given $\setup^u_{k-1}(x_1,\ldots,x_{k-1})$. Output: the leader outputs $f(x_1,\ldots,x_k)$.
\end{itemize}
Let $s_1,\ldots,s_{k-1}$ and $c$ denote the round complexities of $\mathcal{S}_1,\ldots,\mathcal{S}_{k-1}$ and $\mathcal{C}$, respectively. Applying Lemma~\ref{lemma:qsearchv0} leads to the following result.

\begin{lemma}
\label{lemma:qsearch}
There is a quantum algorithm that runs in
\[\tilde{O}\left(\sqrt{|X_1|}\left(s_1+\sqrt{|X_2|} \left(s_2+\sqrt{|X_3|}\left(s_3+\ldots+\sqrt{|X_{k-1}|}\left(s_{k-1}+\sqrt{|X_k|} \left( s_k + c\right) \right)\right)\right)\right)\right)\]
rounds and enables the leader to output $x_1,\ldots,x_k$ such that $f(x_1,\ldots,x_k)=1$, or output that there are no such $x_1,\ldots,x_k$, with probability at least $1-1/\poly(n)$. This statement holds  in both the \qcongest and the \qclique models.
\end{lemma}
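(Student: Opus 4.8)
The plan is to prove Lemma~\ref{lemma:qsearch} by induction on the number of nested levels $k$, peeling off one level at a time and invoking Lemma~\ref{lemma:qsearchv0} at each step (which is exactly what the phrase ``applying Lemma~\ref{lemma:qsearchv0}'' should mean). The base case $k=1$ is Lemma~\ref{lemma:qsearchv0} itself, with checking procedure $\mathcal{C}$. For the inductive step, I would view the $k$-level problem as a single outer distributed Grover search over the domain $X_1$: the function searched at the outer level is $g(x_1)=1$ iff there exist $(x_2,\ldots,x_k)$ with $f(x_1,\ldots,x_k)=1$. The checking procedure for $g$ on a given $x_1$ is the composite distributed algorithm that first runs $\mathcal{S}_1$ on input $x_1$ (in $s_1$ rounds), so that every node $u$ now holds $\setup^u_1(x_1)$, and then runs the $(k-1)$-level nested search over $X_2\times\cdots\times X_k$ furnished by the induction hypothesis, where the setup functions $\setup^u_\ell(x_1,\cdot)$ and the algorithms $\mathcal{S}_2,\ldots,\mathcal{S}_{k-1},\mathcal{C}$ now treat the string $\setup^u_1(x_1)$ as a fixed input held at each node. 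This is precisely the ``re-use'' phenomenon: $\mathcal{S}_1$ is run once per value of $x_1$ but its output feeds all queries of the inner search. By the induction hypothesis the inner search runs in $\tilde O\!\big(\sqrt{|X_2|}(s_2+\sqrt{|X_3|}(s_3+\cdots))\big)$ rounds and tells the leader whether $g(x_1)=1$, so the checking procedure for $g$ has round complexity $r=s_1+\tilde O\!\big(\sqrt{|X_2|}(s_2+\cdots)\big)$. Feeding this $r$ into Lemma~\ref{lemma:qsearchv0} for the outer search over $X_1$ gives round complexity $\tilde O(\sqrt{|X_1|}\cdot r)$, which is exactly the nested expression in the statement, and the same argument works verbatim in both the \qcongest and the \qclique models since the only model-dependent ingredient is Lemma~\ref{lemma:qsearchv0}, stated for both.

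Two points need to be made explicit. First, once the outer Grover search returns a ``good'' $x_1$, the leader must still recover $x_2,\ldots,x_k$: the inner search was invoked only as a coherent, bit-returning checking subroutine and exposed no witness. This is handled by one additional ``real'' run of the $(k-1)$-level search with $x_1$ fixed, which recovers $x_2$ and, recursively, $x_3,\ldots,x_k$; since this costs no more than one invocation of the outer checking procedure, it does not affect the stated bound. Second — and this is the step I expect to be the main obstacle — using a quantum subroutine that succeeds only with bounded probability as the checking procedure inside an amplitude-amplification routine is delicate, because oracle errors do not simply add up. The resolution is the standard one: since $k$ is constant and the total number of checking-procedure invocations across all levels is $\poly(n)$, we can amplify the success probability of each level to $1-1/\poly(n)$ at the cost of only a polylogarithmic factor, which is absorbed into the $\tilde O$, and then a union bound over all invocations yields overall success probability $1-1/\poly(n)$; alternatively one may invoke a version of amplitude amplification that is robust to bounded-error oracles. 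That a distributed Grover search may use another distributed Grover search as its checking procedure is exactly the capability already present in the framework of~\cite{LeGall+PODC18}, and converting the (classical) setup algorithms $\mathcal{S}_\ell$ into reversible form so they can be run on superpositions of inputs is routine, as recalled in the footnote after Lemma~\ref{lemma:qsearchv0}.

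In short, the inductive skeleton and the round-complexity bookkeeping are routine once the outer/inner split is set up correctly; the only genuine care is in the error analysis of the nested amplitude amplifications, handled by the amplification-plus-union-bound argument above, and in the extra witness-extraction pass, which is cheap.
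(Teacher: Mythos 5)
Your proposal is correct and follows essentially the same route as the paper, which proves Lemma~\ref{lemma:qsearch} simply by recursively applying Lemma~\ref{lemma:qsearchv0}, treating the setup $\mathcal{S}_\ell$ followed by the inner $(k-1)$-level search as the (quantum) checking procedure of the outer search over $X_1$. In fact you supply more detail than the paper does (witness extraction and the error-amplification/union-bound argument for nested bounded-error checking procedures), both of which are handled exactly as you describe.
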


\section{The Power of Nested Quantum Search: Clique-Detection from Listing in the $\qclique$ Model}
\label{sec:CC}
%======================================
In this section we describe our approach
for taking an algorithm for $K_p$-listing in the \clique or \qclique model,
and extending it to $K_{p+t}$-detection (for some $t > 0$) using quantum search.
We give two variations of the approach:
%yields somewhat slower algorithms, but it 
%has 
%the advantage of using
the first uses the $K_p$-listing algorithm \emph{as a black box}, so that any such algorithm can be used
(for example, algorithms that perform better on certain classes of input graphs, etc.).
This also
%approach also 
forms the basis of our $\qcongest$ algorithms in Section \ref{sec:qcongest}.
The second approach yields faster results, but it
``opens the black box'', relying on the properties of the $K_p$-listing algorithm from \cite{DLP12}.
Since it is more complicated, the second approach is described in Section~\ref{sec:CCfast}.

%\otodo{Added Lenzen here}

To exploit the large bandwidth of the \clique, we use Lenzen's routing scheme
for solving the \emph{information distribution task}:
%,which is a scheme for simultaneously routing large amounts of information between all nodes of the Congested Clique:
%In the \emph{information distribution task},
for some $s \geq 1$, each $v \in V$ has at most $s \cdot n$ messages $m^v_1,\dots,m^v_r$, each of $O(\log n)$ bits, and each with a destination $\mathit{dest}(m^v_i) \in V$.
Each $v \in V$ is the destination of at most $s \cdot n$  messages ($|\{m^u_i \mid u \in V \land i \in [s \cdot n] \land \mathit{dest}(m^u_i) = v\}| \leq s \cdot n$),
and we wish to deliver each message $m^v_i$ to its destination $\mathit{dest}(m^v_i)$.
%and a value of size $O(\log{n})$ bits. Moreover, assume that for any $v \in V$ is the destination of at most $s \cdot n$ messages, namely, $|\{m_i^u \mid u \in V \land i \in [n] \land d(m^u_i) = v.id\}| \leq s \cdot n$. The network's goal is to deliver all messages $m^v_i$ to their destination $d(m^v_i)$.

\begin{lemma}[Lenzen's Routing Scheme\cite{Lenzen13}]
	\label{lem:lenzen_routing}
	The information distribution task with parameter $s$ can be solved in $O(s)$ rounds in \clique.
\end{lemma}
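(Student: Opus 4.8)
The plan is to route every message in two hops, via a well-chosen intermediate \emph{proxy} node, and to pick the proxies so that no communication link is ever overloaded. Throughout I use that in \clique every pair of nodes has a dedicated link that carries $\Theta(\log n)$ bits, hence one $O(\log n)$-bit message, per round and direction.

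\textbf{A balanced proxy assignment.} I would view the messages as the edges of a bipartite multigraph $H$ whose left vertices are the sources, whose right vertices are the destinations, and in which a message from $u$ to $v$ is an edge $uv$; by hypothesis $H$ has maximum degree at most $sn$ on both sides. By K\"onig's edge-colouring theorem, $H$ admits a proper edge-colouring with colour set $[sn]$, and each colour class is a partial matching (at most one edge at every source, at most one at every destination). Partition the colours into $n$ consecutive blocks of $s$ colours each, and define the \emph{proxy} of a message to be the index in $[n]$ of the block containing its colour. Then (i) every node is the source of at most $s$ messages assigned to any fixed proxy (at most one per colour in the block), and symmetrically (ii) every proxy is assigned at most $s$ messages having any fixed destination; moreover, since there are at most $sn^2$ messages in total, each proxy is assigned at most $sn$ messages altogether.

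\textbf{Two-hop routing.} Phase~1: every node sends each of its messages to the assigned proxy. By (i) a node sends at most $s$ messages along each of its links, and each proxy receives at most $s$ along each link, so this phase costs $O(s)$ rounds. Phase~2: every proxy forwards each message it holds to that message's destination. By (ii) a proxy sends at most $s$ messages along each link, and by the destination bound each node receives at most $s$ along each link, so Phase~2 also costs $O(s)$ rounds. Every message reaches its destination, so the information distribution task is solved in $O(s)$ rounds. Note that both the source-degree bound (used to guarantee the colouring of $H$ and thus the sender side of Phase~1) and the destination-degree bound (used for the receiver side of Phase~2) are needed for this to go through.

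\textbf{Main obstacle.} The delicate point is purely distributed-computational: no node sees $H$, so the proxy assignment must itself be produced within $O(s)$ rounds and made known to the relevant sources. Literally computing a K\"onig colouring is not obviously doable this fast; the technical core of Lenzen's argument is instead a direct deterministic balancing scheme that uses only per-pair message counts (each pair of nodes can exchange these in one round) together with globally agreed offsets obtained by an $O(1)$-round prefix-sum-style aggregation, and that still guarantees the $O(s)$-per-link bounds of (i) and (ii). Checking that this hands-on assignment never overloads a link in either phase is where essentially all the work lies; I would expect to spend most of the effort there, and to verify carefully the reduction from the stated two-sided $s \cdot n$ bound to a form to which the colouring/balancing argument applies.
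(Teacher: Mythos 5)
This lemma is not actually proved in the paper at all: it is imported as a black box from \cite{Lenzen13}, so the comparison is against Lenzen's original argument rather than anything in the paper. Your two-hop skeleton is the right shape, and the purely combinatorial part is correct: viewing the messages as a bipartite multigraph $H$ of maximum degree at most $sn$, a proper $sn$-edge-colouring exists by K\"onig's theorem, and grouping the colours into $n$ blocks of $s$ yields a proxy assignment under which every (source, proxy) pair and every (proxy, destination) pair exchanges at most $s$ messages, so both phases would indeed finish in $O(s)$ rounds \emph{given} the assignment.

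The genuine gap is the one you flag yourself and then only gesture at: producing such an assignment distributively within $O(s)$ rounds. This is not a routine finishing step; it is essentially the entire content of Lenzen's theorem. No node knows $H$, and the natural fix of making the source--destination count matrix $c(u,v)$ globally known, so that all nodes compute the same colouring, is itself an all-to-all dissemination of $n$ values per node, which cannot be done in $O(s)$ rounds when $s=O(1)$ --- it is an instance of the very routing problem being solved. Your sketch of ``per-pair counts plus an $O(1)$-round prefix-sum-style aggregation'' does not by itself yield the per-link congestion bounds (i) and (ii); Lenzen's deterministic algorithm needs several carefully interleaved balancing and redistribution steps, and verifying the link loads there is where all the work lies. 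So as written the proposal does not establish the lemma. It is worth noting that along your two-hop lines one can easily prove a weaker randomized variant --- route each message via a uniformly random intermediary; Chernoff bounds give $O(s+\log n)$ rounds with high probability --- and since every bound in this paper is stated as $\tilde O(\cdot)$ and the algorithms are already randomized, that weaker statement would suffice for every use of the lemma here, though it is not the deterministic $O(s)$ statement being claimed.
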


%\otodo{End Lenzen}

\subsection[Warmup: Detecting triangles in $\tilde{O}(n^{1/5})$ rounds]{Warmup: Detecting Triangles in {\boldmath $\tilde{O}(n^{1/5})$} Rounds}\label{sec:tri}
We describe a simple triangle detection algorithm demonstrating the basic idea of our approach, and improving
upon the state-of-the-art algorithm from~\cite{ILM20}.

In the algorithm,
we partition the search-space $V^3$ into $n$ \emph{shards}, one per node, and each $v \in V$
%a single shard such that $v$ 
checks if there is a triplet $(u_1, u_2, u_3)$ in its shard that is a triangle in $G$.
Each shard has the form $A_i \times A_j \times Q_k$,
where $A_1,\ldots,A_{n^{2/5}}$ partitions $V$
into sets of $n^{3/5}$ nodes,
and $Q_1,\ldots,Q_{n^{1/5}}$ partitions $V$ into sets of size $n^{4/5}$.%
\footnote{To simplify the presentation,
here and everywhere in the paper,
when partitioning $V$ into $n^{\delta}$ subsets, for a $\delta \in (0,1)$,
we assume $n^{\delta}$ is an integer and divides $n$.
If this is not the case, one can replace $n^{\delta}$ by $\lceil n^{\delta} \rceil$, without affecting the asymptotic complexity.}%
Note that the total number of shards is indeed $n^{2/5} \cdot n^{2/5} \cdot n^{1/5} = n$.
% , that is, one shard per node.

To check if its shard $A_i \times A_j \times Q_k$ contains a triangle,
node $v$ learns the edges $E(A_i, A_j) = E \cap \left( A_i \times A_j \right)$, and then,
%We use Lenzen's routing scheme \rtodo{cite} to 
% Next, $v$ checks, 
using a distributed quantum search, checks if some $w \in Q_k$ forms a triangle
with some $\set{ u_1, u_2} \in E(A_i, A_j)$.
% and for this we use a distributed quantum search.
The search is not performed directly over $Q_k$: instead, we partition $Q_k$ into \emph{batches},
$Q_k^1,\ldots,Q_k^b$,
and search for a batch $Q_k^{\ell}$ with a node forming a triangle.
 Processing the nodes in batches 
allows us to 
fully utilizes the bandwidth in the \clique.
However, we must balance
the \emph{size} of batches, which determines the time to check if a batch has a node completing
a triangle, against the \emph{number} of batches, which determines the number of quantum queries we will need to perform.

\paragraph{Detailed description of the algorithm.}

\noindent Consider a node $v$, and let $A_i \times A_j \times Q_k$ be the shard assigned to node $v$.
The algorithm has two steps:
\begin{enumerate}
	\item Node $v$ learns $E( A_i, A_j )$, using Lenzen's routing scheme.
		
	\item Node $v$ partitions $Q_k$ into $n^{2/5}$ batches,
		$Q_k = \{ Q_k^1,\ldots, Q_k^{n^{2/5}}\}$,
		each containing $n^{2/5}$ nodes (since $|Q_k| = n^{4/5}$).
		We use a quantum search over $\ell \in [n^{2/5}]$
		to check whether there exists a $Q_k^{\ell}$
		containing a node $w \in Q_k^{\ell}$ that forms a triangle together with two nodes $u_1 \in A_i, u_2 \in A_j$.
\end{enumerate}
Formally, we instantiate Lemma~\ref{lemma:qsearchv0}
with the search-space $X = [n^{2/5}]$ (i.e., the batch indices).
The checking procedure $\mathcal{A}$
checks an index $\ell \in [n^{2/5}]$
by routing $E(A_i \cup A_j, Q_k^{\ell})$
to $v$ (in parallel at all nodes).
Then, $v$ locally checks whether there is a $(u_1, u_2, w) \in A_i \times A_j \times Q_k^{\ell}$
such that $\set{ u_1, u_2 } \in E(A_i, A_j), \set{ u_1, w} \in E(A_i, Q_k^{\ell})$,
and $\set{ u_2, w} \in E(A_j, Q_k^{\ell})$;
it sends '1' to the leader if it found such a triplet, and '0' otherwise.

\paragraph{Complexity.} Step 1 requires $O( |A_i| \cdot |A_j| / n) = O(n^{2 \cdot 3/5 - 1}) = O(n^{1/5})$ rounds, using Lenzen's routing scheme (Lemma~\ref{lem:lenzen_routing}).
In Step 2, checking a particular batch $Q_k^{\ell}$ requires node $v$ to learn $E( A_i \cup A_j, Q_k^{\ell} )$.
As $|E( A_i \cup A_j, Q_k^{\ell} )| = O(n^{3/5 + 2/5}) = O(n)$, this can be done in $O(1)$ rounds using Lemma~\ref{lem:lenzen_routing}.
By Lemma~\ref{lemma:qsearchv0}, since the search space is $[n^{2/5}]$,
Step 2 takes $\tilde{O}(n^{1/5})$ rounds.
In total, the algorithm takes $\tilde{O}(n^{1/5})$ rounds.

\subsection[Extending $K_p$ \emph{listing} to $K_{p+t}$ \emph{detection}]{Extending {\boldmath $K_p$} \emph{Listing} to {\boldmath $K_{p+t}$} \emph{Detection}}

Our triangle detection algorithm 
has the following structure:
we view a triangle as an edge $\set{ u_1, u_2 }$, plus a node $w$
connected to $u_1, u_2$.
We \emph{classically} route information between nodes, 
%allowing them to 
so they can list the edges $\set{ u_1, u_2}$ in the sets they are responsible for ($A_i \times A_j$).
Then, we use \emph{quantum search} to check if there is a node $w$ forming a triangle with a listed edge.

We extend this idea to cliques of arbitrary sizes:
given $q > 2$, take $p,t$ where $p+t = q$.
We view a $q$-clique as a $p$-clique $\set{ v_1,\ldots,v_p}$, plus a $t$-clique $\set{ u_1,\ldots,u_t}$
where $u_1,\ldots,u_t$ are all connected to $v_1,\ldots,v_p$.
We classically %route information 
%that allows the nodes to 
list all $p$-cliques,
% in the graph,
and then use quantum search to check for a $t$-clique forming a $q=(p+t)$-clique with a listed $p$-clique.

We present two variants of this approach. The first takes a $K_p$-listing algorithm as a black box,
making no assumptions about which $p$-cliques are found by which nodes.
The second improves on the first by ``opening the black box'' and using properties of the $K_p$-listing
algorithm of~\cite{DLP12}: knowing which $p$-cliques are listed by each node reduces the amount of information we route during the quantum search, as some edges are not relevant to some nodes.
We present the first variant here, and the second is given in Section~\ref{sec:CCfast}.
Note that our triangle detection algorithm is an instance of the second variant, since we exploit out knowledge of $A_i, A_j$ to determine which edges ($E(A_i \cup A_j, Q_k^{\ell})$) are learned by a given node as it evaluates batch $Q_k^{\ell}$.

How should we explore the search-space $V^{(t)}$ of possible $t$-cliques that may extend a given $p$-clique to a $(p+t)$-clique?
One possibility is to partition it into batches, and search over them, as we did for triangles.
However, the large search-space makes this inefficient: every node must learn
the edges between every pair of nodes in the current batch, and 
since we can use at most $n^2$ batches%
\footnote{Otherwise we will need more than $\sqrt{n^2} = n$ quantum queries.}
to cover $V^{(t)}$,
very soon we reach a situation where every node needs to learn \emph{all} the edges.
Instead, we use a \emph{nested search}, 
building the $t$-clique node-by-node.
The search is structured so that edges learned at a given level are re-used to evaluate \emph{many}
nested queries on following levels.
See Fig.~\ref{fig:search} for an example partitioning of the search-space.

\begin{figure}[h]
	\begin{center}
		\includegraphics[width=0.8\textwidth]{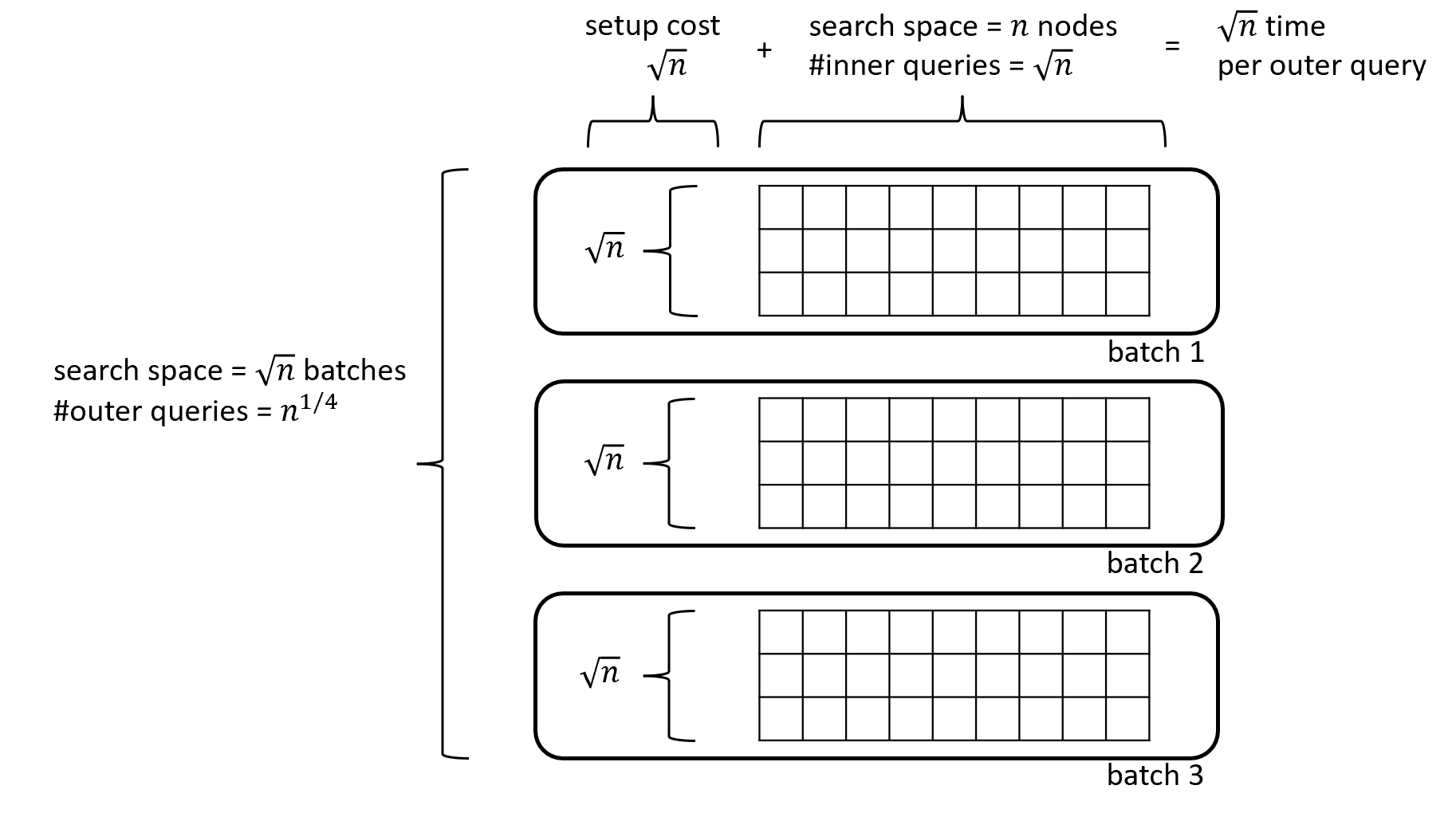}
	\end{center}
	\caption{
	Extending from $K_p$-listing to $K_{p+2}$-detection.
	We search $V \times V$ to check for a pair of nodes that can be added to an existing $p$-clique to form a $(p+2)$-clique (in the figure, $|V| = 9$).
	A non-nested search requires $\sqrt{n \cdot n} = n$ queries. In contrast, in a nested search, we split the first dimension into $\sqrt{n}$ batches, and search over them (requiring $n^{1/4}$ outer queries); to implement each outer query, all nodes send the edges corresponding to the current batch (requiring $\sqrt{n}$ rounds), and then explore the second dimension ($|V| = n$) using $\sqrt{n}$ inner quantum queries.
	The time of the entire search is $\tilde{O}\left( n^{1/4} \left( \sqrt{n} + \sqrt{n}  \right) \right) = \tilde{O}\left( n^{3/4} \right)$.
		}
	\label{fig:search}
\end{figure}

\paragraph{The initial state.}
Let $V^{(p)}$ denote all subsets of $V$ with cardinality $p$.
When we begin, we assume copies of $K_p$ have already been found:
each node $u \in V$ has a subset $S_u^p \subseteq V^{(p)}$
of $p$-cliques it found.
%, such that if $\set{ u_1,\ldots,u_p } \in S_u^p$
%then $(u_1,\ldots,u_p)$ is indeed a copy of $K_p$ in $G$.
Let
%\begin{equation*}
	$S^p = \bigcup_{u \in V} S_u^p$
%\end{equation*}
be all copies of $K_p$ found by the nodes.
We assume $S^p$ is the set of all $p$-cliques in $G$.

We say that an algorithm $A$ \emph{extends} from $K_p$ to $K_{p+t}$ if,
given sets $\set{ S_u^p }_{u \in V}$, 
%as described above,
w.h.p.,
algorithm $A$ outputs '1' at all nodes
iff $G$ contains a $(p+t)$-clique $\set{ v_1,\ldots,v_{p+t} } \in V^{(p+t)}$ such that
$(v_1,\ldots,v_p) \in S^p$.

\newcommand{\ThmExtending}
{
	For every $p \geq 2$, $t \geq 1$,
	there is an algorithm that extends from $K_p$ to 
	$K_{p+t}$ in $\tilde{O}(n^{1-1/2^t})$ rounds in $\qclique$.
}

\begin{theorem}
	\label{extendingTheorem}
	\ThmExtending
\end{theorem}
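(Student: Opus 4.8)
The plan is to prove Theorem~\ref{extendingTheorem} by induction on $t$, using the nested quantum search framework of Lemma~\ref{lemma:qsearch}. The base case $t=1$ is essentially the triangle-detection idea generalized: given the sets $\{S_u^p\}$, we want to detect whether some single node $w$ extends a listed $p$-clique to a $(p+1)$-clique. We partition $V$ into $n^{1/2}$ batches of size $n^{1/2}$, and run a distributed Grover search over the $n^{1/2}$ batch indices (so $\sqrt{n^{1/2}} = n^{1/4}$ outer queries); to check a batch, each node must learn enough edges incident to that batch to test, for each $p$-clique it holds, whether some node in the batch is adjacent to all $p$ of its vertices. The total volume of edges between the batch and $V$ is $O(n^{1/2}\cdot n) = O(n^{3/2})$, which Lenzen's routing (Lemma~\ref{lem:lenzen_routing}) delivers in $O(n^{1/2})$ rounds. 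Hence the base case runs in $\tilde O(n^{1/4}(n^{1/2})) = \tilde O(n^{3/4}) = \tilde O(n^{1-1/2})$ rounds. Actually, to get the exact bound $n^{1-1/2^t}$ one should choose the batch-size parameter to balance the Grover-query cost against the routing cost; I will solve the recursion below.

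For the inductive step, suppose we have an algorithm extending from $K_{p+1}$ to $K_{p+t}$ in $\tilde O(n^{1-1/2^{t-1}})$ rounds, taking as input sets $\{S_u^{p+1}\}$ describing found $(p+1)$-cliques. To extend from $K_p$ to $K_{p+t}$: we run an \emph{outer} quantum search whose job is to pick the first new node $v_{p+1}$ of the $t$-clique we are adding. We partition $V$ into $n^{\delta}$ batches of size $n^{1-\delta}$, and search over the $n^{\delta}$ batch indices with $\sqrt{n^{\delta}}$ Grover queries. To evaluate one batch $B$ (the \textsf{setup} step of the nested search, $\mathcal{S}_1$), every node learns the edges $E(B, V)$ relevant to the $p$-cliques it holds, and thereby computes, for each of its $p$-cliques $K$ and each $v \in B$ adjacent to all of $K$, a new $(p+1)$-clique $K \cup \{v\}$; this gives each node a set $S_u^{p+1}$ of $(p+1)$-cliques. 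Crucially, this setup is done \emph{once per batch} but re-used across all $\sqrt{n^{1-\delta}}$-queries of the inner recursion. Then we invoke the inductive algorithm to extend from $K_{p+1}$ to $K_{(p+1)+(t-1)} = K_{p+t}$ on these $S_u^{p+1}$ sets, as the checking procedure $\mathcal{C}$ of the outer search. By Lemma~\ref{lemma:qsearch}, the total cost is $\tilde O\big(\sqrt{n^{\delta}}\,(s_1 + n^{1-1/2^{t-1}})\big)$, where $s_1$ is the routing cost of one setup step. The routing volume per setup is $O(|B|\cdot n) = O(n^{2-\delta})$, so $s_1 = O(n^{1-\delta})$ (and one must check this exceeds the trivial $O(1)$ bound, which it does for $\delta < 1$). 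Choosing $\delta$ to balance $n^{\delta/2}\cdot n^{1-\delta} = n^{\delta/2}\cdot n^{1-1/2^{t-1}}$, i.e.\ $1-\delta = 1-1/2^{t-1}$, gives $\delta = 1/2^{t-1}$ and total cost $\tilde O(n^{\delta/2 + 1 - 1/2^{t-1}}) = \tilde O(n^{1/2^t + 1 - 1/2^{t-1}}) = \tilde O(n^{1 - 1/2^t})$, as claimed.

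A few points need care. First, correctness: the outer search returns '1' iff some batch $B$ contains a node $v_{p+1}$ such that the induced $(p+1)$-cliques $S_u^{p+1}$ can be extended to a $(p+t)$-clique, and by the inductive guarantee this happens iff there is a $(p+t)$-clique whose first $p+1$ vertices lie in $S^{p+1}$; unwinding, this holds iff there is a $(p+t)$-clique whose first $p$ vertices lie in $S^p$ — which is exactly the required specification of ``extends from $K_p$ to $K_{p+t}$''. Here we use that $S^p$ is assumed to be \emph{all} $p$-cliques, so no $(p+t)$-clique is missed. Second, the success probability: each of the constantly-many ($t$ is constant) nested Grover searches succeeds with probability $1 - 1/\poly(n)$, and a union bound over the $O(1)$ levels keeps the overall failure probability at $1/\poly(n)$; one should also boost the inner checking procedure's success probability before nesting, which only costs polylog factors hidden in the $\tilde O$. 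Third, output dissemination: once the leader learns the answer, it broadcasts to all nodes in $O(\mathrm{diam}) = \tilde O(1)$ rounds (in $\qclique$, $O(1)$).

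I expect the main obstacle to be bookkeeping the \textsf{setup} functions $\setup^u_\ell$ correctly so that Lemma~\ref{lemma:qsearch} applies cleanly — in particular, verifying that the setup data passed from the outer level to the inner recursion is exactly what the inner algorithm expects as input (the $(p+1)$-clique sets), that its size fits the $O(\log n)$-message model when routed, and that the routing volumes at \emph{every} nested level stay within Lenzen's bound $s\cdot n$ so each setup step really costs what we claimed. The quantum machinery itself is entirely black-boxed by Lemma~\ref{lemma:qsearch}; the real work is the combinatorial/routing accounting and choosing the batching parameter $\delta = 1/2^{t-1}$ at each level so the geometric cascade of costs telescopes to $n^{1-1/2^t}$.
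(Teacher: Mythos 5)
Your proposal is correct and takes essentially the same approach as the paper: the paper directly instantiates the $t$-level nested search of Lemma~\ref{lemma:qsearch} with level-$\ell$ partitions into $n^{1/2^{t-\ell}}$ sets of size $n^{1-1/2^{t-\ell}}$ (so setup cost $s_\ell = n^{1-1/2^{t-\ell}}$, with the last level being singletons), which is exactly the structure your induction on $t$ with $\delta = 1/2^{t-1}$ unrolls to, and the paper likewise verifies the $\tilde{O}(n^{1-1/2^t})$ bound by a backwards induction over the levels. Your balanced choice of $\delta$ also fixes the suboptimal $n^{3/4}$ figure in your initial base-case sketch, recovering the paper's $\tilde{O}(n^{1/2})$ for $t=1$.
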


\begin{proof}

	Fix in advance $t$ partitions of $V$,
	where the $\ell$-th partition
	divides $V$ into $n^{1/2^{t-\ell}}$ sets,
	$V_1^{\ell},\ldots,V_{n^{1/2^{t-\ell}}}$,
	each of size $n^{1-1/2^{t-\ell}}$.
	Note that the partitions become increasingly finer,
	%at level $\ell = 1$ we partition into $n^{1/2^{t-1}}$ sets of size $n^{1-1/2^{t-1}}$,
	%and the sets decrease in size until
	until
	at level $\ell = t$ we have $n$ sets comprising a single node each.
	Our goal is to execute a nested quantum search to check if there is a $t$-tuple of indices
	$(x_1,\ldots,x_t)$, such that there exist $v_1 \in V_{x_1}^1,\ldots,v_t \in V_{x_t}^t$
	forming a $(p+t)$-clique together with some previously-listed $p$-clique 
	$\set{ w_1,\ldots,w_p } \in S^p$.

	We instantiate Lemma~\ref{lemma:qsearch},
	executing a nested search with $t$ levels
	over the domain $X_1 \times \ldots \times X_t$,
	where $X_{\ell} = [n^{1/2^{t - \ell}}]$ for each $\ell = 1,\ldots,t$.
	We search for an element satisfying the function
	\begin{multline*}
		f( x_1,\ldots,x_t ) = 1
		\Leftrightarrow
		\\
		\exists v_1 \in V_{x_1}^1 \ldots \exists v_t \in V_{x_t}^t
		\exists \set{ w_1,\ldots,w_p } \in S^p :
		\text{$\set{ w_1,\ldots,w_p,v_1,\ldots,v_t}$ is a $(p+t)$-cliqe in $G$}.
	\end{multline*}

	At level $\ell \leq t$ of the search, the setup we prepare takes the form 
	of sets $\set{ S_u^{p+\ell} }_{u \in V}$ at the nodes,
	where $S_u^{p+\ell} \subseteq V^{(p+\ell)}$ is a set of $(p+\ell)$-cliques in $G$
	that node $u$ has learned about.
	When initiating the search (``$\ell = 0$''), we are given $\set{ S_u^p }_{u \in V}$.
	The algorithm $\mathcal{S}_\ell$ that prepares the setup for level $\ell \geq 1$ is as follows:
	\begin{itemize}
		%\item The leader sends the current query $(x_1,\ldots,x_{\ell})$ to all the nodes.
		\item Every node $u \in V$ broadcasts the subset of $V_{x_{\ell}}^{\ell}$
			that it is neighbors with:
			\begin{equation*}
				N_{u, x_{\ell}} = \set{ v \in V_{x_{\ell}}^{\ell} : \set{ u,v} \in E }.
			\end{equation*}
		\item Locally, each node $u \in V$ prepares $S_u^{p+\ell}$,
			by listing all the $(p+\ell)$-cliques that it can form by taking a $(p+\ell-1)$-clique
			from $S_u^{p+\ell-1}$ and appending to it a node from $V_{x_{\ell}}^{\ell}$:
			\begin{multline*}
				S_u^{p+\ell} = \set{ \set{ v_1,\ldots,v_{p + \ell} } \in V^{(p+\ell)} : 
					\set{ v_1,\ldots,v_{p+\ell - 1}} \in S_u^{p + \ell - 1},
				\right.
				\\
				\left.
					v_{p+\ell} \in V_{x_{\ell}}^{\ell},
					\text{ and 
					for each $i = 1,\ldots,p+\ell-1$ we have }
					v_{p+\ell} \in N_{v_i, x_{\ell}}
					}
				.
			\end{multline*}
	\end{itemize}
	The final algorithm $\mathcal{C}$ that evaluates $f( x_1,\ldots,x_{t} )$
	simply has each node $u$ inform the leader whether $S_u^{p+t}$ is empty or not.
	If there is some node $u$ with $S_u^{p+t} \neq \emptyset$,
	the leader outputs '1', and otherwise '0'.

	\paragraph{Complexity.}
	The size of the $\ell$-th level partition
	is chosen to as to
	balance the setup cost 
	against the time required for the remainder of the nested search:
	at level $\ell$, the setup cost is $|V_{x_{\ell}}^{\ell}| = s_{\ell} = n^{1-1/2^{t-\ell}}$.
	Using Lemma~\ref{lemma:qsearch},
	a backwards induction on $\ell$ shows the cost for levels $\ell+1,\ldots,t$
	of the search is $\tilde{O}\left(n^{1-1/2^{t-\ell}}\right)$,
	matching the setup cost.
	The cost of the entire search (i.e., levels $\ell = 1,\ldots,t$) is $\tilde{O}\left( n^{1-1/2^{t}} \right)$.
	%the setup cost at level $\ell$ is $s_{\ell} = n^{1-1/2^{t-\ell}}$,
	%as $|V_{x_{\ell}}^{\ell}| = n^{1-1/2^{t-\ell}}$ for each $x_{\ell}$),
	%and we claim that the cost required for levels $\ell,\ldots,t$ of the search
	%is
	%\begin{equation*}
		%\sqrt{X_{\ell}}\left( s_{\ell} + \ldots + \sqrt{X_t} \left( s_t + 1 \right) \ldots \right)
		%=
		%\tilde{O}( n^{1-1/2^{t - \ell + 1}} ),
	%\end{equation*}
	%so that the setup cost of $n^{1-1/2^{t-\ell}}$ at level $\ell$
	%is balanced against the cost of the remaining levels, $\ell+1,\ldots,t$,
	%which is $\tilde{O}(n^{1-1/2^{t-\ell}})$.
	%This is proven by backwards induction on $\ell$:
	%for $\ell = t$,
	%the cost is
	%\begin{equation*}
		%\tilde{O}\left(\sqrt{X_t} \left( s_t + 1 \right) =
		%\tilde{O}\left( \sqrt{ n^1 } \left( n^{0} + 1 \right) \right)\right)
		%= \tilde{O}\left( n^{1/2} \right).
	%\end{equation*}
	%Assuming the claim holds starting from level $\ell \leq t$,
	%the cost for levels $\ell - 1,\ldots,t$ is
	%\begin{align*}
		%\tilde{O}\left(\sqrt{X_{\ell - 1}}\left( s_{\ell - 1} + n^{1-1/2^{t - \ell + 1}} \right) \right)
		%\\
		%&=
		%\tilde{O}\left( \sqrt{ n^{1/2^{t-\ell+1}} } \left( n^{1-1/2^{t-\ell+1}} + n^{1-1/2^{t-\ell+1}} \right) \right)
		%\\
		%&=
		%\tilde{O}\left( n^{1 - 1/2^{t-\ell}} \right).
	%\end{align*}
	%Finally, we see that for all levels together, $\ell = 1,\ldots,t$,
	%the cost is $\tilde{O}\left( n^{1-1/2^{t}} \right)$, as claimed.
\end{proof}

By combining the classical clique-listing algorithm of~\cite{DLP12}
with Theorem~\ref{extendingTheorem}, we obtain an algorithm for detection of $K_p$ which improves
on the state-of-the-art classical algorithm for $p \geq 5$.

\begin{theorem}
	Given a graph $G = (V, E)$ and a clique size $p \geq 5$, it is possible to detect whether there exists an instance of $K_p$ in $G$ within $\tilde{O}(\min_{t \in \mathbb{N}} \max \{n^{1-\frac{2}{p-t}}, n^{1-\frac{1}{2^t}}\})$ rounds of the \qclique model.
\end{theorem}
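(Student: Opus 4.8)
The plan is to combine the classical $K_q$-listing algorithm of Decker, Lenzen, and Peleg~\cite{DLP12}, which lists all copies of $K_q$ in $\tilde{O}(n^{1-2/q})$ rounds of \clique (hence also \qclique), with the extension result of Theorem~\ref{extendingTheorem}. The key observation is that detecting $K_p$ reduces to: (i) pick any split $p = q + t$ with $q \geq 2$, $t \geq 1$; (ii) classically list all $q$-cliques, obtaining sets $\set{S_u^q}_{u \in V}$ whose union is the set of all $q$-cliques of $G$; (iii) run the algorithm from Theorem~\ref{extendingTheorem} that extends from $K_q$ to $K_{q+t} = K_p$. Since $S^q$ is exactly the set of all $q$-cliques, the extension algorithm outputs '1' at some node iff $G$ contains a $p$-clique, which is precisely $K_p$-detection. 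The success probability is $1 - 1/\poly(n)$ since both the listing algorithm (which can be made deterministic or high-probability) and the nested quantum search of Theorem~\ref{extendingTheorem} succeed with that probability, and a union bound over the two phases preserves it.

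First I would set $t$ to be the minimizing value in the claimed expression and $q = p - t$; we need $q \geq 2$, i.e.\ $t \leq p - 2$, which is consistent with the range of $t$ over which the minimum is taken (values $t \geq p-1$ are never better since they force listing a clique of size $\leq 1$, which is trivial, and then the $n^{1-1/2^t}$ term already dominates). The running time is then the sum of the two phases: the listing phase costs $\tilde{O}(n^{1-2/q}) = \tilde{O}(n^{1-2/(p-t)})$ rounds by~\cite{DLP12}, and the extension phase costs $\tilde{O}(n^{1-1/2^t})$ rounds by Theorem~\ref{extendingTheorem}. Since $\tilde{O}(a) + \tilde{O}(b) = \tilde{O}(\max\{a,b\})$, the total is $\tilde{O}(\max\{n^{1-2/(p-t)}, n^{1-1/2^t}\})$, and taking the best $t$ gives $\tilde{O}(\min_{t \in \mathbb{N}} \max\{n^{1-2/(p-t)}, n^{1-1/2^t}\})$, as required.

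I do not expect any real obstacle here: the theorem is essentially a corollary obtained by plugging the best-known classical listing algorithm into the black-box extension framework, and both ingredients are already in hand (Theorem~\ref{extendingTheorem} and the cited~\cite{DLP12} bound). The only points requiring a sentence of care are (a) verifying that the hypothesis of Theorem~\ref{extendingTheorem} is met, namely that after the listing phase every node $u$ holds a set $S_u^q$ with $\bigcup_u S_u^q$ equal to the set of \emph{all} $q$-cliques of $G$ — this is exactly what a $K_q$-\emph{listing} algorithm guarantees — and (b) noting that the constraint $q \geq 2$ (needed so that Theorem~\ref{extendingTheorem} applies with $p \mapsto q$, $t \mapsto t$) is automatically satisfied by restricting attention to $t \leq p-2$, and that including larger $t$ in the minimum does not help since those terms are dominated. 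For $p \geq 5$ this range of valid splits is nonempty (e.g.\ $q = p-1 \geq 4$, $t = 1$ recovers the $\tilde O(n^{1-2/(p-1)})$ bound of Theorem~\ref{th:clique-clique}), so the statement is meaningful, and choosing $t$ optimally can only improve it.
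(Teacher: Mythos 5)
Your proposal is correct and matches the paper's intended argument exactly: the theorem is obtained precisely by classically listing all $K_{p-t}$ via the algorithm of~\cite{DLP12} in $\tilde{O}(n^{1-2/(p-t)})$ rounds and then invoking Theorem~\ref{extendingTheorem} to extend to $K_p$ in $\tilde{O}(n^{1-1/2^t})$ rounds, optimizing over $t$. Your added remarks on the constraint $q \geq 2$ and the union bound on the failure probabilities are the same routine checks the paper leaves implicit.
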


For instance, for $p = 5$, by taking $t = 1$, we get $K_5$-detection in $\tilde O(n^{1/2})$ rounds of the \qclique model,
improving on the classical runtime of $\tilde{O}(n^{3/5})$.
We note that $\tilde{O}(n^{1/2})$ is the time required to classically list $K_4$, so the quantum-search-based
extension from 4-cliques to 5-cliques is ``for free''.

\section{Detection from Listing in the \qcongest Model}\label{sec:qcongest}
This section is devoted to the proof of Theorem~\ref{th:clique-congest}. 
\begin{theorem-repeat}{th:clique-congest}
\THcongest
\end{theorem-repeat}

We begin with the proof for triangle detection in \qcongest, and then proceed with $K_p$ detection for $p\geq 7$. %\ktodo{Consider repeating the theorem statement here.} %\ktodo{We should probably be more consistent with $p$-clique vs. $K_p$ and with $H$-detection vs. $H$-freeness.}

At a very high level, our algorithms in this section use the framework of decomposing the graph into clusters of high conductance and working within each cluster in order to find the required subgraph, and then recursing over the edges remaining outside of clusters. To work within a cluster, throughout this section we will use the following expander decomposition and routing theorems.

\paragraph{Preliminaries.}
We begin by defining the notions of mixing time and conductance, which are used in the context of the expander decomposition. We note that we do not use these definitions directly, but rather, we use previously-proven lemmas that use these properties to obtain efficient routing or simulation procedures on such graphs.

The \emph{conductance} of a graph $G$ is $\Phi(G)=\min_{S \subseteq V} \frac{|E(S, V\setminus S)|}{\min(|E(S,V)|, |E(V, V\setminus S)|)|}$, is the worst-case ration between the number of edges crossing a cut in the graph, and the number of edges contained on either side of the cut. 

The conductance is related to the \emph{mixing time} of the graph, which, informally, is the number of steps required for a random walk starting from any vertex $u$ to become close to its stationary distribution, where the probability of being at any given vertex $v$ is roughly $\mathit{degree}(v) / 2m$. The mixing time of a graph $G$ is denoted $\tau_{\textrm{mix}}(G)$, and it is related to the conductance as follows: $\Theta\left(\frac{1}{\Phi(G)}\right) \leq \tau_{\textrm{mix}}(G) \leq \Theta\left(\frac{\log n}{\Phi^2(G)}\right)$ (see Corollary 2.3 in~\cite{JS89}).

\begin{lemma}[\cite{CS19,Censor+SODA21}]\label{lem:decomp_soda}
	Let $\delta > 0$ such that $m = n^{1+\delta}$. For any $\epsilon \in (0,1)$, and constant $\gamma \in (0,1)$, there is a constant $a_\gamma > 0$ dependent only on $\gamma$, such that a decomposition can be constructed in $\tilde{O}(n^\gamma)$ rounds, with high probability, in which the edges of the graph are partitioned into two sets, $E_m,E_r$, that satisfy the following conditions:
	\begin{enumerate}
		\item Each connected component (cluster) $C$ of $E_m$ has conductance $\Phi(C)\geq (\epsilon/\log{n})^{a_\gamma}$, and has average degree at least $ \epsilon n^\delta$.
		\item For any cluster $C$ and node $v \in V_C$, $\deg_{V_C}(v) \geq (\epsilon/\log{n})^{a_\gamma}\deg_{V \setminus V_C}(v)$.
		\item $E_r \leq \epsilon m$. 
	\end{enumerate}
\end{lemma}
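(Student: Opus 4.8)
The plan is to prove Lemma~\ref{lem:decomp_soda} not from scratch but by \emph{invoking} the expander decomposition machinery already developed in~\cite{CS19,Censor+SODA21} as essentially a black box, and then only verifying the three numbered properties. First I would recall that the standard expander decomposition (e.g. the one used in~\cite{CS19}) partitions $E$ into $E_m \cup E_r$ where each connected component of $E_m$ has conductance at least $\phi = (\epsilon/\log n)^{O(1)}$ and $|E_r| \le \epsilon m$, in $\tilde O(n^\gamma)$ rounds; the work of~\cite{Censor+SODA21} strengthens this to additionally control degrees, namely that every node retains a constant fraction (up to the $\phi$ factor) of its degree inside its own cluster --- this is property~(2). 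So the main structural content is already present in the cited works; the remaining job is the ``average degree'' claim in property~(1).

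For the average-degree bound I would argue as follows. Since $|E_r| \le \epsilon m$, at least $(1-\epsilon)m$ edges lie in $E_m$, and these are distributed among the clusters. The number of clusters is at most $n$ (each has at least one vertex), but more carefully, a cluster $C$ on $|V_C|$ vertices with conductance $\ge \phi$ and the degree-preservation property~(2) cannot be too sparse: combining property~(2), $\deg_{V_C}(v) \ge \phi \deg_{V\setminus V_C}(v)$, with $\deg_V(v) = \deg_{V_C}(v) + \deg_{V\setminus V_C}(v)$ gives $\deg_{V_C}(v) \ge \frac{\phi}{1+\phi}\deg_V(v)$, so each cluster's internal edges account for a $\tilde\Omega(1)$ fraction of the full degrees of its vertices. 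Summing, $\sum_C |E_C| = \tilde\Omega(m)$ and, since there are at most $n$ clusters, the \emph{average} cluster has $\tilde\Omega(m/n) = \tilde\Omega(n^\delta)$ internal edges per vertex; absorbing the polylog slack into the constant $a_\gamma$ (or into $\epsilon$) yields average degree at least $\epsilon n^\delta$. One has to be slightly careful here: the claim is really that \emph{each} cluster has average degree $\ge \epsilon n^\delta$, not just the average cluster, which is why the decomposition procedures of~\cite{CS19,Censor+SODA21} are run so as to discard (move to $E_r$) any low-degree cluster --- I would cite that step of their construction rather than re-derive it, noting that moving all sub-threshold clusters to $E_r$ increases $|E_r|$ by at most $n \cdot \epsilon n^\delta = \epsilon m$, which can be re-absorbed by halving $\epsilon$ at the outset.

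The key steps, in order: (i) run the $\tilde O(n^\gamma)$-round expander decomposition of~\cite{CS19} with conductance parameter tuned so the output conductance is $(\epsilon/\log n)^{a_\gamma}$; (ii) apply the degree-trimming / boundary-linked refinement of~\cite{Censor+SODA21} to obtain property~(2); (iii) move every cluster whose average internal degree is below $\epsilon n^\delta$ entirely into $E_r$, re-scaling $\epsilon$ so the total removed is $\le \epsilon m$; (iv) verify $|E_r| \le \epsilon m$ survives steps (i) and (iii), and that properties~(1)–(2) hold for the surviving clusters. The main obstacle --- or rather, the point needing the most care --- is bookkeeping the interaction between the three parameters: the conductance loss, the degree-preservation factor, and the $\epsilon$ budget for $E_r$ all degrade multiplicatively through the recursive decomposition, so I would need to confirm that the recursion depth is $O(1)$ (which follows because each level removes a constant fraction of edges to $E_r$ or halves cluster sizes) so that only a constant number of parameter losses accumulate, keeping everything inside the claimed $(\epsilon/\log n)^{a_\gamma}$ form with $a_\gamma$ depending only on $\gamma$. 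Since the heavy lifting (the decomposition itself, its round complexity, and the degree-preservation property) is quoted from prior work, this proof is essentially a careful parameter-tracking argument rather than a new algorithmic construction.
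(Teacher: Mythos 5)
Your proposal is correct and follows essentially the same route as the paper's proof (Lemma~\ref{lem:decomp_soda_app}): run the \cite{CS19} decomposition with parameter $\epsilon/2$ to get the conductance, degree-preservation, and $|E'_r|\le(\epsilon/2)m$ guarantees as a black box, then move every cluster with average degree below $\epsilon n^\delta$ into $E_r$, which costs at most another $(\epsilon/2)m$ edges since the clusters are vertex-disjoint. The first-paragraph detour about the ``average cluster'' and the worry about recursion-depth parameter degradation are unnecessary (the pruning step you identify is the whole argument, applied once), but they do not affect correctness.
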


\begin{theorem}[Expander routing~\cite{CS19,Ghaffari+PODC17}]\label{thm-expander-routing}
Suppose $\mix(G) = \polylog (n)$ and let $0\leq \gamma\leq 1$ be a constant.
There is an $O(n^{\gamma})$-round algorithm that pre-processes the graph such that for any subsequent routing task where each node $v$ is a source and a destination of at most $L \cdot \deg(v)$ messages of $O(\log n)$ bits, all messages can be delivered in $L \cdot \log^{\alpha} (n)$ rounds of the \congest model, with high probability, where $\alpha$ is a constant that depends only on $\gamma$.
\end{theorem}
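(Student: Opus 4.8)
The plan is to reduce arbitrary routing to a single low-congestion ``balanced'' primitive in the underlying expander, and to use the $O(n^\gamma)$ preprocessing phase to build a shallow recursive overlay that realizes this primitive with only $\polylog(n)$ overhead. Throughout we use that $\mix(G)=\polylog(n)$ forces $\Phi(G)\ge 1/\polylog(n)$ by the stated relation between mixing time and conductance, and hence that a lazy random walk of length $\ell=O(\log n/\Phi(G)^2)=\polylog(n)$ carries any degree-weighted starting distribution to within inverse-polynomial total-variation distance of the degree-stationary distribution.

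\textbf{Step 1: reduction to unit load.} I would first argue it suffices to treat $L=1$, i.e.\ every node is the source and destination of at most $O(\deg(v))$ messages. A standard scheduling argument --- a Birkhoff--von Neumann--type decomposition of the demand matrix (row/column sums $\le L\deg(v)$) into $O(L)$ matrices with row/column sums $\le \deg(v)$, computed distributedly in $O(L)$ rounds --- reduces the general task to $O(L)$ sequential unit-load instances; solving each in $\polylog(n)$ rounds after the preprocessing gives the claimed $L\cdot\log^\alpha(n)$ bound.

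\textbf{Step 2: the balanced primitive and Valiant routing.} The core subroutine routes, between any two vertex sets $A,B$ of equal degree-volume, a ``matching flow'' in which each $a\in A$ ships $\deg(a)$ tokens and each $b\in B$ absorbs $\deg(b)$, along paths of length and edge-congestion $\polylog(n)$. Correctness: send each token on an independent length-$\ell$ random walk started from the degree-weighted measure on $A$; by mixing, the endpoints are near-degree-uniform, so a flow-rounding plus Chernoff argument lets one re-match the tokens onto $B$ within the stated congestion. An arbitrary unit-load instance is then handled by Valiant's two-phase trick: route every token to an independently (degree-)uniform intermediate node --- a balanced all-to-random instance --- and then on to its true destination, each phase being an instance of the primitive run in parallel.

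\textbf{Step 3: the preprocessing --- a shallow recursive overlay, and the main difficulty.} The obstacle in Step 2 is that simulating all the walks simultaneously can pile up $n^{\Omega(1)}$ tokens on a single edge over short walks, so the $\polylog$ bound is not free; this is exactly what the $O(n^\gamma)$ preprocessing buys. One builds a hierarchy of depth $O(1/\gamma)$: partition $V$ into clusters that each remain expanders with conductance $\Phi(G)^{O(1)}$ (via a careful partition and peeling of sparse cuts, or an application of Lemma~\ref{lem:decomp_soda}), recurse inside each cluster, and route between clusters on the quotient multigraph --- itself an expander, and small enough that the random-walk primitive of Step 2 has bounded congestion there. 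Precomputing the embedded paths and routing tables costs $\tilde O(n^\gamma)$ per level over $O(1/\gamma)$ levels; at query time a message descends and ascends the hierarchy, paying a $\polylog(n)$ congestion-and-dilation factor per level, for a total of $\log^{O(1/\gamma)}n=\log^\alpha n$ rounds, with high probability. The technically heaviest part is precisely this recursive composition: showing that the clusters at every level keep conductance bounded below by a fixed polynomial in $\Phi(G)$ (so walk length and congestion do not degrade with depth), absorbing the inter-cluster edges $E_r$ without breaking the per-node message budget, and a congestion-competitive analysis proving that congestion and dilation multiply by at most $\polylog(n)$ per level rather than compounding uncontrollably. Combined with Step 1 this yields the theorem.
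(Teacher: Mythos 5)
The paper does not prove Theorem~\ref{thm-expander-routing}; it is imported as a black box from \cite{CS19,Ghaffari+PODC17}, so there is no in-paper proof to compare against. Judged on its own terms, your sketch does follow the architecture of the cited works: degree-weighted random walks of length $\polylog(n)$ to exploit $\mix(G)=\polylog(n)$, a Valiant-style two-phase routing via random intermediaries, and crucially a nested expander hierarchy of depth $O(1/\gamma)$ so that the $O(n^\gamma)$ preprocessing can pay for building the overlay while each query pays only $\polylog(n)$ per level. You also correctly identify the recursive composition --- keeping conductance, congestion, and dilation under control across $O(1/\gamma)$ levels --- as the technically heaviest part.

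That said, this is an outline rather than a proof, and a few steps are asserted in a way that would not survive scrutiny without substantial further work. In Step~1, the distributed computation of a Birkhoff--von Neumann-type decomposition in $O(L)$ rounds is not justified (and is not needed: a random partition of each node's messages into $O(L\log n)$ groups gives, by Chernoff, per-group per-node load $\tilde O(\deg(v))$ on both the source and destination side w.h.p., which is the standard and much simpler route). In Step~2, sending tokens on forward walks gets them to near-stationary locations but does not by itself re-match them onto $B$; the cited works pair this with reversed walks from the destination side (or walk concatenation at matched midpoints), which your sketch glosses over. In Step~3, the claim that clusters at every level keep conductance $\Phi(G)^{O(1)}$ and that congestion and dilation multiply by only $\polylog(n)$ per level is exactly the hard part of \cite{CS19}; stating it is not proving it. None of these are conceptual errors in direction, but as written the sketch is far from a self-contained proof of a theorem whose real proof runs to dozens of pages.
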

%\ktodo{Note reviewer comment about the $n^{0.0001}$ thing in Theorem~\ref{thm-expander-routing}.}.

\subsection{{\boldmath $\tilde{O}(n^{1/5})$}-Round Triangle Detection in the $\qcongest$ Model} \label{subsec:triangles_congest}
%\subsection[{{\boldmath $\tilde{O}(n^{1/5})$}-round triangle detection in the quantum $\CONGEST$ model}]{{\boldmath $\tilde{O}(n^{1/5})$}-round triangle detection in the quantum $\CONGEST$ model} \label{subsec:triangles_congest}
%\ftodo{Should uniform the notation of the model: Quantum $\CONGEST$ or $\qcongest$.}
In this section we show an $\tilde{O}(n^{1/5})$-round triangle detection algorithm in \qcongest. That is, we prove the following theorem, which is part of Theorem~\ref{th:clique-congest}.

\begin{theorem}\label{th:tri-congest}
There exists an algorithm that solves triangle detection with success probability at least $1-1/\poly(n)$ in the \qcongest model with complexity $\tilde O(n^{1/5})$.
\end{theorem}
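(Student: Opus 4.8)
The plan is to reduce \qcongest triangle detection to the \qclique algorithm of Section~\ref{sec:tri} via an expander decomposition, in the spirit of the classical \congest triangle algorithms of~\cite{CPZ19,CS19}, but replacing their classical ``work inside a cluster'' subroutine by the quantum-search-based one. Since triangle detection is a freeness problem, we first apply the standard diameter-reduction transformation~\cite{EFFKO19} and assume the network has diameter $\polylog n$; in $\polylog n$ further rounds we elect a leader $v^*$ and build a BFS tree, which is what lets the clusters below coordinate distributed quantum searches. All of this adds only a $\polylog n$ factor.

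We then invoke Lemma~\ref{lem:decomp_soda} with a small constant $\gamma$ (so the decomposition costs $\tilde O(n^\gamma) = o(n^{1/5})$) and a constant $\epsilon < 1/2$, obtaining $E = E_m \cup E_r$ with $|E_r| \le m/2$, where every cluster $C$ of $E_m$ has conductance $1/\polylog n$ --- hence, by the conductance--mixing relation of~\cite{JS89}, mixing time, and thus diameter, $\polylog n$ --- and satisfies $\deg_{V_C}(v) = \tilde\Omega(\deg_G(v))$ for all $v \in V_C$. The algorithm then (i) detects, inside the clusters, all triangles having at least one edge in $E_m$, and (ii) recurses on the graph $(V, E_r)$. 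A triangle with no edge in $E_m$ has all three of its edges in $E_r$, so it is handled by the recursion; since $|E_r|$ halves at each level, after $O(\log n)$ levels every triangle of $G$ has been considered, so it suffices to bound the cost of a single level by $\tilde O(n^{1/5})$.

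Step (i) is the new ingredient. Fix a cluster $C$. If $\{b,c\} \in E_m \cap E_C$ is an edge of a triangle $\{a,b,c\}$ of $G$, then $b,c \in V_C$, while the apex $a$ may lie outside $V_C$; but $a$ is a common neighbor of $b$ and $c$, hence lies in $\bigcup_{v \in V_C} N(v)$, a set whose size is $\tilde O(|E_C|)$ by the degree property (and at most $n$). So cluster $C$ must decide whether some $\{b,c\} \in E_C$ has a common neighbor, which we solve by running the warm-up algorithm of Section~\ref{sec:tri} inside $C$: we partition the endpoints and the candidate-apex set into pieces, form shards, assign the shards to the nodes of $C$ in proportion to their degrees (re-indexing the nodes by degree as in~\cite{CPZ19}), and execute the distributed quantum search of Lemma~\ref{lemma:qsearchv0} --- or, if the apex space needs a finer split, the nested search of Lemma~\ref{lemma:qsearch} --- with a node of $C$ as leader. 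Every all-to-all communication of the warm-up is realized by expander routing inside $C$ (Theorem~\ref{thm-expander-routing}); because the data is distributed among $C$'s nodes in proportion to degree, each communication phase has per-node load $O(L \cdot \deg(v))$ for the appropriate $L$ and thus runs in $\tilde O(L)$ rounds. The clusters are vertex- and edge-disjoint, so they run their searches simultaneously, each cluster leader reporting its one-bit answer to $v^*$ over the BFS tree at the end; balancing the piece sizes as in Section~\ref{sec:tri} (now relative to $C$) yields $\tilde O(n^{1/5})$ rounds per level, hence overall.

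The main obstacle is exactly this adaptation: the warm-up of Section~\ref{sec:tri} assumes uniform all-to-all bandwidth, whereas inside a cluster one has only the degree-weighted routing of Theorem~\ref{thm-expander-routing}. Making it go through requires (a) the degree-aware re-indexing and load-balancing of~\cite{CPZ19,CS19}, so that in each phase every node sends and receives $O(L \deg(v))$ messages for the intended $L$; (b) choosing the partition sizes so the per-phase message totals stay $\tilde O(|E_C|)$ times the replication factor, which is what makes the balance come out to $n^{1/5}$; and (c) handling separately the regime where the current (sub)graph is too sparse for this balance to be efficient, where one instead has every node learn its $2$-hop neighborhood, as in the classical analyses. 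One must also keep the recursion's bookkeeping straight --- at level $i$ the within-cluster search is run with respect to the current edge set $E_r^{i-1}$, not $E$, which is what keeps the apex-set bound valid.
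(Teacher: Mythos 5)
Your proposal follows essentially the same route as the paper: the paper reduces via expander decomposition and recursion on the remainder edges to the $\Tri$ problem on a high-conductance cluster together with its incident edges (Theorem~\ref{th:tri-in-conductance}), and then, exactly as you outline, adapts the $\tilde O(n^{1/5})$ warm-up of Section~\ref{sec:tri} inside the cluster --- a classical phase in which degree-proportional "workers" collect the $E(A_i,A_j)$ shard edges, followed by a single Grover search (Lemma~\ref{lemma:qsearchv0}) over $\bar n^{2/5}$ apex batches, all routed with Theorem~\ref{thm-expander-routing}. The load-balancing step you flag as the main obstacle is precisely where the paper's proof does its work: it splits nodes into $\Theta(\bar n)$ pairwise-disjoint \emph{computation units} of $\Theta(\bar n^{\delta})$ edges each, assigns them globally known identifiers via Claim~\ref{claim:2approx}, and partitions the vertices with $\Theta(\log n)$-wise independent hash functions so that tail bounds guarantee each unit learns only $\tilde O(\bar n^{\delta+1/5})$ edges in the classical phase and $\tilde O(\bar n^{\delta})$ edges per quantum query --- which yields $\tilde O(n^{1/5})$ with a single (non-nested) search and without any separate sparse-regime exhaustive-search case.
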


As observed in \cite{CPZ19,CS19} (see also Theorem 4 in \cite{ILM20}), it is sufficient to solve this problem in high conductance graphs 
%\ktodo{One DISC reviewer wanted us to define conductance. Could do that in an appendix.} 
(with some additional edges incident to this graph) in order to obtain an algorithm for general graphs. Our \qcongest algorithm shares many similarities with the $\tilde{O}(n^{1/5})$ triangle detection quantum algorithm in the \qclique shown in Section~\ref{sec:CC}, but requires a more ``sparsity aware'' approach for it to work in this more restricted model. %Due to the similarities to the Congested Clique algorithm, we defer this section to Appendix~\ref{subsec:triangles_congest}.

%In this section we show an $\tilde{O}(n^{1/5})$ triangle-freeness algorithm in \qcongest. 

Specifically, in the $\Tri$ problem, as defined in \cite{ILM20}, the input network $G'=(V',E\inext \cup E\outext)$ is a connected network such that the mixing time of the graph 
$G\inext=(V(E\inext),E\inext)$ is at most $O(\polylog(n))$ (e.g., its conductance is at least $\Omega(1/\polylog(n))$), and each edge in $E\outext$ is incident to a node in $V(E\inext)$ so that $deg_{G\inext}(v) \geq deg_{G'-G\inext}(v)$ for every node $v \in G\inext$, and the goal of the network is to determine whether $G'$ is triangle-free.

\begin{theorem}[\cite{CPZ19,CS19}, Theorem 4 in \cite{ILM20}]
\label{th:tri-in-conductance}
	Assume that there exists an $r$-round distributed \qcongest algorithm $A$ that solves the $\Tri$ problem with probability at least $1-1/n^3$ and uses only the edges in $E\inext \cup E\outext$ for communication. Then there exists an $O(r\log{n}+n^{0.1})$-round \qcongest algorithm that solves the triangle finding problem over the whole graph $G=(V,E)$ with probability at least $1-1/\poly(n)$.
\end{theorem}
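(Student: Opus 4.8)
The plan is to prove Theorem~\ref{th:tri-in-conductance} by the standard recursive expander-decomposition reduction of \cite{CPZ19,CS19}, fed with the decomposition of Lemma~\ref{lem:decomp_soda}. Starting from the ``current graph'' $(V,E)$, I would apply Lemma~\ref{lem:decomp_soda} with a constant $\epsilon\le 1/2$ and a small constant $\gamma$, obtaining a partition $E=E_m\cup E_r$ with $|E_r|\le\epsilon|E|\le|E|/2$ in which every cluster $C$ of $E_m$ has conductance at least $(\epsilon/\log n)^{a_\gamma}$, and then recurse on the subgraph $(V,E_r)$. Since $|E_r|$ at least halves at every level, after $O(\log n)$ levels the remainder is empty. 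The conceptual core is a case analysis on an arbitrary triangle $\{u,v,w\}$ of the current graph: either one of its three edges, say $\{u,v\}$, lies in some cluster $C$ of $E_m$ --- then $u,v\in V_C$ and the edges $\{u,w\},\{v,w\}$, still present in the current graph, are incident to $V_C$, so the whole triangle is visible in a single $\Tri$-instance built around $C$ --- or all three of its edges lie in $E_r$, in which case the triangle survives into $(V,E_r)$ and is handled by the recursion. Hence every triangle of $G$ is captured by some cluster at the first level at which one of its edges leaves $E_r$.

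Second, I would make precise the $\Tri$-instance attached to a cluster $C$ at a given level: the inner network is $C$ together with its internal $E_m$-edges, and the outer edges are the remaining edges of the current graph that are incident to $V_C$ (with outer vertices the other endpoints). This is a legal input for the assumed algorithm $A$: the inner network is connected (it is a connected component of $E_m$), its mixing time is $\polylog(n)$ by combining the conductance bound of Lemma~\ref{lem:decomp_soda} with the relation $\mix(C)\le O(\log n/\Phi(C)^2)$ (Corollary~2.3 of \cite{JS89}), and the domination of the inner degree over the outer degree required by the $\Tri$ problem follows from the degree guarantee of Lemma~\ref{lem:decomp_soda} (up to a $\polylog(n)$ factor, which the $\Tri$-algorithm of \cite{ILM20} is designed to tolerate, or which one absorbs by choosing $\epsilon$ appropriately). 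Running $A$ on this instance detects, with probability at least $1-1/n^3$, any triangle that lies inside it.

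Third comes the parallelization and the round/error accounting. At each level, all cluster-instances are executed simultaneously; this costs only $O(r)$ rounds because every edge of the current graph is used by at most two of the instances --- an edge with both endpoints in one cluster $C$ belongs to $C$'s instance only, while an edge joining two distinct clusters $C\neq C'$ is an outer edge of precisely $C$'s and $C'$'s instances --- so a round-robin schedule on each edge loses only a constant factor; no extra routing set-up is needed because $A$ is treated as a black box that already runs in $r$ rounds on any legal $\Tri$-instance (inside $A$, Theorem~\ref{thm-expander-routing} would be the natural way to exploit the polylogarithmic mixing time, but this is irrelevant to the reduction). Computing the decomposition at each level costs $\tilde O(n^\gamma)$ rounds by Lemma~\ref{lem:decomp_soda}. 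Summing over the $O(\log n)$ levels gives $O(r\log n)+\tilde O(n^\gamma)=O(r\log n+n^{0.1})$ rounds for a sufficiently small constant $\gamma$, and a union bound over the $O(n\log n)$ invocations of $A$ and the $O(\log n)$ decomposition computations keeps the total error at $1/\poly(n)$. Finally, a node outputs ``a triangle exists'' iff some instance it took part in reported one; by the case analysis this is a correct global decision, so no leader election or diameter reduction is needed for the reduction itself.

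The step I expect to be the main obstacle is the precise bookkeeping around the simultaneous cluster-instances and the exact interface with $\Tri$: one must check that a node which simultaneously plays an ``outer'' role for many clusters never overloads any single incident edge (this is exactly the ``each edge lies in at most two instances'' observation, but it has to be verified against the actual communication pattern $A$ generates on its outer edges), and one must reconcile the exact hypotheses of the $\Tri$ problem --- which edges of the current graph count as ``inner'' versus ``outer'', and the strict inner-degree domination --- with the $\polylog(n)$-relaxed guarantees produced by Lemma~\ref{lem:decomp_soda}, either by invoking the robustness of the algorithm of \cite{ILM20} or by re-deriving the decomposition guarantees with the right constants.
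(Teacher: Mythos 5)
This statement is not proved in the paper at all: it is quoted as a black-box result from \cite{CPZ19,CS19} and Theorem~4 of \cite{ILM20}, so your proposal is being measured against the reduction in those works rather than against anything in this paper. Your reconstruction does follow that standard route (expander-decompose, run $A$ on each cluster together with its incident edges, recurse on $E_r$, union-bound), and the triangle case analysis, the ``each edge lies in at most two instances'' congestion argument, and the $O(r\log n)+\tilde{O}(n^{\gamma})$ accounting over $O(\log n)$ levels are all sound.

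The genuine gap is exactly the point you flag at the end and then wave away: the interface between Lemma~\ref{lem:decomp_soda} and the degree hypothesis of $\Tri$. The decomposition only guarantees $\deg_{V_C}(v)\ge(\epsilon/\log n)^{a_\gamma}\deg_{V\setminus V_C}(v)$, while $\Tri$ as defined (and as assumed of the black box $A$) requires the strict domination $\deg_{G\inext}(v)\ge\deg_{G'-G\inext}(v)$. Neither of your two proposed fixes works as stated: choosing the constant $\epsilon$ cannot remove a $(1/\log n)^{a_\gamma}$ factor, and you cannot appeal to the robustness of the particular algorithm of \cite{ILM20}, because in this theorem $A$ is an arbitrary algorithm assumed only to solve $\Tri$ as specified. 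The cited proofs close this gap with an additional mechanism — e.g., splitting each cluster's outer edges into $O(\polylog n)$ sub-instances in which every inner node keeps at most $\deg_{\mathrm{in}}(v)$ outer edges (taking care that, for a fixed outer vertex $w$, all of $w$'s edges into the cluster land in the \emph{same} sub-instance, otherwise the two outer edges of a triangle can be separated and the triangle missed), or deferring the excess outer edges into the recursed edge set, in which case one must re-verify both the triangle-coverage case analysis and that the recursed set still shrinks geometrically. This extra step also feeds into the round and error bookkeeping that produces the stated $O(r\log n + n^{0.1})$ bound, so without it the proposal does not yet constitute a proof.
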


In the remainder of this subsection we describe an algorithm in \qcongest for the problem $\Tri$, which assumes the input network has the restrictions mentioned above. We then plug this algorithm into Theorem~\ref{th:tri-in-conductance}  and obtain Theorem~\ref{th:tri-congest} as claimed. Specifically, we prove the following.

\begin{theorem}
\label{th:triAlg}
There exists an algorithm that solves the $\Tri$ problem with success probability at least $1-1/n^3$ in the \qcongest model with complexity $\tilde O(n^{1/5})$.
\end{theorem}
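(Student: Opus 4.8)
The plan is to adapt the $\qclique$ triangle-detection algorithm of Section~\ref{sec:tri} to the $\Tri$ setting, using the expander decomposition and expander routing machinery (Lemma~\ref{lem:decomp_soda} and Theorem~\ref{thm-expander-routing}) in place of Lenzen's routing scheme. Recall that in the $\Tri$ problem the ``inner'' graph $G\inext$ has polylogarithmic mixing time, and every node's degree in $G\inext$ dominates its degree to the outside edges $E\outext$; in particular, writing $m = |E\inext|$, the number of ``relevant'' edges we must ever handle is $\Theta(m)$, and the average degree is $\mu = \Theta(m/n)$. First I would run the expander-routing preprocessing (Theorem~\ref{thm-expander-routing}) on $G\inext$ in $\tilde O(n^{0.001})$ rounds, after which any routing instance where each node is the source/destination of $O(L\cdot\deg(v))$ messages costs $\tilde O(L)$ rounds. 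The key parameter change from the $\qclique$ version is that batch sizes and shard sizes should be measured against $\mu$ rather than $n$: because edges are ``owned'' by the endpoints and routed in a load-balanced way, broadcasting the edges incident to a group of nodes whose degrees sum to $\Theta(m)$ takes only $\tilde O(1)$ amortized rounds.

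The core of the algorithm mirrors Section~\ref{sec:tri}. We partition $V$ into shards of the form $A_i \times A_j \times Q_k$, but now the $A_i$'s and $Q_k$'s are chosen so that the \emph{sum of degrees} within each part is the prescribed amount: there will be $\Theta(\mu^{2/5})$ groups $A_i$, each with degree-sum $\Theta(m/\mu^{2/5})$, and $\Theta(\mu^{1/5})$ groups $Q_k$, each with degree-sum $\Theta(m/\mu^{1/5})$; by Claim~\ref{claim:2approx}-style balancing this partition can be computed in $\polylog(n)$ rounds. For each shard, Step~1 has node $v$ learn the edges $E(A_i,A_j)$ — there are $O((m/\mu^{2/5})^2/m) \cdot$ (correction for degree accounting) such edges to route per node, costing $\tilde O(\mu^{1/5})$ rounds via expander routing. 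Step~2 partitions $Q_k$ into $\Theta(\mu^{2/5})$ batches $Q_k^{\ell}$, each of degree-sum $\Theta(m/\mu^{1/5}) \cdot \mu^{-2/5} = \Theta(m \cdot \mu^{-3/5})$ — wait, we choose the batch count so that checking one batch (routing $E(A_i\cup A_j, Q_k^{\ell})$, which is $O(m)$ edges total across the network, i.e.\ $O(\deg(v))$ per node after balancing) costs $\tilde O(1)$ rounds, and the Grover search over the $\Theta(\mu^{2/5})$ batch indices then costs $\tilde O(\mu^{1/5})$ quantum queries by Lemma~\ref{lemma:qsearchv0}. Since $\mu \le n$, the total is $\tilde O(n^{1/5})$; combined with the $\tilde O(n^{0.1})$ overhead of Theorem~\ref{thm-expander-routing} this is $\tilde O(n^{1/5})$. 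Correctness follows exactly as in Section~\ref{sec:tri}: every triangle $\{u_1,u_2,w\}$ of $G'$ with $u_1\in A_i, u_2\in A_j, w\in Q_k$ lies in some shard, and the node owning that shard, upon learning $E(A_i,A_j)$ and $E(A_i\cup A_j, Q_k^{\ell})$ for the batch $\ell$ containing $w$, detects it locally; the nested-query structure is a single-level instantiation of Lemma~\ref{lemma:qsearchv0}, so the $1-1/\poly(n)$ success probability is inherited (and can be boosted to $1-1/n^3$ by standard repetition).

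The main obstacle I expect is the \emph{sparsity-aware bookkeeping}: in the $\qclique$ model every node can learn the neighborhood of any given node in $O(1)$ rounds regardless of degrees, but here edges must be moved through $G\inext$ under the degree-proportional load constraint of Theorem~\ref{thm-expander-routing}. One must verify that when node $v$ needs the edges $E(A_i,A_j)$ for its shard, the induced communication pattern — summed over all nodes $v$ simultaneously — places only $O(L\cdot\deg(u))$ messages at each node $u$ for the claimed $L = \tilde O(\mu^{1/5})$ (and $L = \tilde O(1)$ in the batch step). This is precisely the kind of ``batch routing in high-conductance graphs'' accounting captured by Claim~\ref{claim:batchMessaging}-type arguments, and getting the degree-sum balancing of the $A_i$, $Q_k$, and $Q_k^{\ell}$ partitions to line up with that load bound — while also handling the $E\outext$ edges (using $\deg_{G\inext}(v) \ge \deg_{G'-G\inext}(v)$ so they contribute only a constant factor) — is the technical heart of the proof. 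The quantum part itself is entirely black-box via Lemma~\ref{lemma:qsearchv0}.
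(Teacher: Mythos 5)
Your overall plan---simulate the \qclique algorithm of Section~\ref{sec:tri} inside the high-conductance graph, with expander routing (Theorem~\ref{thm-expander-routing}) replacing Lenzen's scheme and a single Grover search over batch indices via Lemma~\ref{lemma:qsearchv0}---is indeed the route the paper takes, but your quantitative instantiation has a genuine gap. You parametrize the partitions by the average degree $\mu=\Theta(\bar m/\bar n)$ rather than by $\bar n$, and you rely on the primitive ``edges incident to a group of degree-sum $\Theta(\bar m)$ can be broadcast in $\tilde O(1)$ amortized rounds.'' That is a congested-clique-style bound; under expander routing a node can receive only $\tilde O(L\cdot\deg(v))$ messages in $\tilde O(L)$ rounds, so any single node that must hold a constant fraction of $\bar m$ edges needs roughly $\bar n$ rounds. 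Concretely, with $\mu^{2/5}$ parts $A_i$, a shard owner must learn $|E(A_i,A_j)|\approx \bar m/\mu^{4/5}=\bar n\mu^{1/5}$ edges, and even granting it bandwidth $\Theta(\mu)$ per round this costs about $\bar n/\mu^{4/5}$ rounds, while each batch check costs about $\bar n/\mu$ rounds---both far exceeding $n^{1/5}$ whenever $\mu\ll\bar n$ (e.g.\ $\mu=\bar n^{1/2}$ gives $\bar n^{3/5}$). A second problem: a deterministic degree-sum-balanced partition (your ``Claim~\ref{claim:2approx}-style balancing'') gives no upper bound on the number of edges between a \emph{specific} pair $(A_i,A_j)$---adversarially, essentially all edges incident to $A_i$ may land in $A_j$---so the per-shard load cannot be controlled this way.

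The paper's proof fixes exactly these points. It keeps the partition granularity in terms of $\bar n$: $\Theta(\log n)$-wise independent hash functions $h_A:[\bar n]\to[\bar n^{2/5}]$ and $h_S:[\bar n]\to[\bar n^{1/5}]$ define the sets, and the tail bound of Lemma~\ref{thm:tail_bound} gives $|E(A_i,A_j)|=\tilde O(\bar n^{\delta+1/5})$ and only $\tilde O(\bar n^{\delta})$ edges per batch check, w.h.p. The sparsity-awareness enters not through $\mu$-sized partitions but through \emph{computation units}: pairs of a core node with $\lfloor\bar n^{\delta}/(2c)\rfloor$ of its incident edges, of which there are at least $\bar n/2$ (Lemma~\ref{lem:num_of_computation_units}); each unit has bandwidth $\Theta(\bar n^{\delta})$ under Theorem~\ref{thm-expander-routing}, so learning $E(A_i,A_j)$ costs $\tilde O(\bar n^{1/5})$ rounds, each batch check costs $\tilde O(1)$ rounds, and the Grover search over the $\bar n^{2/5}$ batch indices costs $\tilde O(\bar n^{1/5})$. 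Globally known unit identifiers (via Claim~\ref{claim:2approx}) and the diameter reduction of Lemma~\ref{thm:diameter_reduce}, needed to elect the leader coordinating the search, are also missing from your sketch. So the architecture is right, but the ingredients that actually make the $\tilde O(n^{1/5})$ bound go through are the $\bar n$-parametrized random hash partitions with concentration and the computation-unit bandwidth abstraction; your $\mu$-based parameters as written do not suffice.
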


Denote by $m\inext = |E\inext|$ and by $n\inext = |V(E\inext)|$ the number of edges and vertices inside the high conductance graph respectively, and by $\bar{m} = |E\inext \cup E\outext|, \bar{n} = |V'|$  the number of edges and vertices in $G'$ respectively. Denote $\delta > 0$ such that $\bar{m} = \bar{n}^{1+\delta}$, that is, $O(\bar{n}^{\delta})$ is the average degree in $G'$. \vspace{2mm}%\otodo{the average degree is $2\bar{n}^\delta$. Does this affect some constants down the line?}

\noindent\textbf{Computation units.} Since $deg_{G\inext}(v) \geq deg_{G'-G\inext}(v)$ for every node $v \in G\inext$, there is a constant $c > 1$ such that $\bar{m} \leq cm\inext$. %\ktodo{In fact, we can simply use $c=2$, should we switch to that?}. 
We define a \emph{computation unit} to be a paired set $(v,\tilde{E})$ where $v \in V(E\inext), \tilde{E} \subseteq E\inext$, and such that $|\tilde{E}| = \lfloor \bar{n}^\delta/(2c) \rfloor$ and all edges in $\tilde{E}$ are incident to $v$. We say that the node $v$ is the \emph{core} of the computation unit. Notice that a node can be a core of multiple computation units. A set of computation units is called \emph{pairwise-disjoint} if for any two computation units $(v_1,E_1),(v_2,E_2)$ in the set, either the cores are different, i.e., $v_1 \neq v_2$, or the edges $E_1$ and $E_2$ are disjoint.

The reason we define computation units is as follows. In a nutshell, our algorithm will split the edges of the input into sets, such that each set defines a triangle finding \emph{task} of checking whether any three edges in the set form a triangle and such that all possible triplets of edges are checked. To do so, we need to assign all such tasks to the nodes. However, a node with a smaller degree can receive less information to check compared with a node with a higher degree. This is where the computation units come into play: Each core of a computation unit will use the edges of its unit to send and receive information, so that it can process edges associated with a single such task. Now, the bandwidth of a node with a high degree will be exploited by having this node be a core of more computation units, and hence it will be responsible for solving more such tasks, compared with a node of a smaller degree.

First, we show that we have sufficient computation units. Consider a node $v$ with internal degree at least $\bar{n}^\delta/2c$ and split its edges into disjoint sets of size $\lfloor \bar{n}^\delta/2c \rfloor$, and a remainder set of edges of size at most $\lfloor \bar{n}^\delta/2c \rfloor$. Mark each such set of size $\lfloor \bar{n}^\delta/2c \rfloor$ as a computation unit with the node $v$ as its core.

\begin{lemma}\label{lem:num_of_computation_units}
	In the process above, the set of marked computation units is pairwise-independent and is of size at least $\bar{n}/2$.
\end{lemma}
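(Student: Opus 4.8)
The claim has two parts: (i) the marked computation units are pairwise-disjoint, and (ii) there are at least $\bar n/2$ of them. Part (i) is immediate from the construction: for a single core $v$, we split its internal edges into *disjoint* sets of size $\lfloor \bar n^\delta/(2c)\rfloor$, so any two units with the same core have disjoint edge-sets; units with different cores are disjoint by definition. So the real content is the counting in part (ii), and that is where I would focus.

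For part (ii), the idea is to lower-bound the total number of edge-slots that get packed into computation units, and then divide by the unit size $\lfloor \bar n^\delta/(2c)\rfloor$. First I would recall the two structural facts available: $\bar m \le c\, m\inext$ (from $\deg_{G\inext}(v)\ge \deg_{G'-G\inext}(v)$ for every $v\in V(E\inext)$, summed over vertices), and that the high-conductance cluster has average internal degree $\Omega(\bar n^\delta)$ — more precisely, from Lemma~\ref{lem:decomp_soda} the average degree is at least $\epsilon\bar n^\delta$, and here I would just use that $m\inext = \Theta(\bar m) = \Theta(\bar n^{1+\delta})$, so $m\inext \ge \bar m/c \ge \bar n^{1+\delta}/(2c)$ say (choosing constants appropriately; since $G'$ is connected on $\bar n$ vertices we also trivially have $\bar m \ge \bar n - 1$, but the decomposition guarantee is what gives the density). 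Each node $v$ contributes $\deg_{G\inext}(v)$ to the sum $\sum_v \deg_{G\inext}(v) = 2 m\inext$, and when we chop $v$'s edges into units we discard at most $\lfloor \bar n^\delta/(2c)\rfloor - 1 < \bar n^\delta/(2c)$ of them (the remainder set), plus we discard everything at a node with $\deg_{G\inext}(v) < \bar n^\delta/(2c)$, which again wastes at most $\bar n^\delta/(2c)$ per such node. Hence the number of edge-slots actually placed into units is at least
\[
2m\inext - n\inext \cdot \frac{\bar n^\delta}{2c} \;\ge\; 2m\inext - \bar n \cdot \frac{\bar n^\delta}{2c}.
\]
Using $2 m\inext \ge \bar n^{1+\delta}/c$ (from $\bar m \le c m\inext$ and $\bar m = \bar n^{1+\delta}$, this gives $m\inext \ge \bar n^{1+\delta}/c$, so $2m\inext \ge 2\bar n^{1+\delta}/c$), the quantity above is at least $2\bar n^{1+\delta}/c - \bar n^{1+\delta}/(2c) = \tfrac{3}{2}\cdot\bar n^{1+\delta}/c$. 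Dividing by the unit size $\lfloor \bar n^\delta/(2c)\rfloor \le \bar n^\delta/(2c)$ gives a number of units at least $3\bar n$, comfortably exceeding $\bar n/2$. (The exact constants in the excerpt are a bit soft — $\bar m \le c m\inext$, unit size $\lfloor \bar n^\delta/(2c)\rfloor$ — so I would carry the bookkeeping carefully and note that the slack is large enough that the $\lfloor\cdot\rfloor$ rounding and the ``$\bar m = \bar n^{1+\delta}$'' normalization cause no trouble; if $\bar n^\delta/(2c) < 1$, i.e.\ the graph is too sparse, the statement is vacuous/trivial since then essentially every edge is its own unit and $m\inext \ge \bar n/2$ anyway because $G'$ is connected.)

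The main obstacle, such as it is, is just making sure the constants line up: one must confirm that the ``wasted'' edges (remainder sets plus low-degree nodes) are a genuinely lower-order term than $2m\inext$, which hinges on the average internal degree of the cluster being $\Omega(\bar n^\delta)$ with a constant strictly larger than the $1/(2c)$ threshold used to define the remainder — this is exactly what Lemma~\ref{lem:decomp_soda}(1) (average degree $\ge \epsilon \bar n^\delta$) together with $\bar m \le c\,m\inext$ and $\bar m = \bar n^{1+\delta}$ provides, so there is no real difficulty, only careful accounting.
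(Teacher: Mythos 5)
Your proof is correct, and it reaches the conclusion by a slightly different route than the paper. The paper argues by contradiction: assuming fewer than $\bar{n}/2$ units are marked, it bounds the number of edges removed by $(\bar{n}/2)\cdot\lfloor\bar{n}^{\delta}/2c\rfloor$, concludes that the residual graph still has average degree at least $\bar{n}^{\delta}/2c$ (using the same key inequality $\bar{m}\le c\,m\inext$ that you use), and then observes that a node of such degree in the residual graph could have marked one more unit --- a contradiction. You instead run the count forward: you charge each marked unit its $\lfloor\bar{n}^{\delta}/2c\rfloor$ edge--core incidences, bound the per-node waste (remainder sets and below-threshold nodes) by $\bar{n}^{\delta}/2c$ each, and divide $2m\inext - \bar{n}\cdot\bar{n}^{\delta}/2c \ge \tfrac{3}{2}\bar{n}^{1+\delta}/c$ by the unit size to get at least $3\bar{n}$ units. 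The two arguments rest on the same two facts (the relation $\bar{m}\le c\,m\inext$ coming from $\deg_{G\inext}(v)\ge\deg_{G'-G\inext}(v)$, and the unit size $\lfloor\bar{n}^{\delta}/2c\rfloor$), but your direct incidence count is a bit cleaner: it avoids the paper's final step of inferring a contradiction from a high-degree node in the residual graph, and it yields a stronger count ($3\bar{n}$ rather than $\bar{n}/2$). Two small bookkeeping remarks: the fact you should invoke for density is exactly $\bar{m}\le c\,m\inext$ (summing the degree condition of the $\Tri$ instance), not the average-degree guarantee of the expander decomposition, which is not what defines this subproblem --- though you do end up using the right inequality; and the degenerate regime $\lfloor\bar{n}^{\delta}/2c\rfloor<1$ that you flag is indeed outside the intended parameter range and is glossed over by the paper as well, so your brief remark there is acceptable.
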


%The proof of Lemma \ref{lem:num_of_computation_units} is deferred to Appendix~\ref{appset_triangles_conget}. 

\begin{proof}
	We note that the set is indeed pairwise independent, as for each node $v$ the edges of the computation units with node $v$ are edge-disjoint.
	
	Recall that $\bar{m} \leq cm\inext$ for the constant $c>1$. 
	%Therefore, the number $m\inext$ of edges in $E\inext$ is at least $\bar{m}/c$, and the average degree of $G\inext$ is at least $2m\inext/n\inext  \geq 2(\bar{m}/c)/n\inext \geq 2\bar{m}/\bar{n} = 2n^\delta$ \ktodo{Why do we have $cn\inext \leq \bar{n}$?}\ktodo{Where does the proof use that the average degree in $G\inext$ is at least $2n^\delta$?}. 
	Assume by contradiction that there are less than $\bar{n}/2$ computation units marked. Consider a graph $\tilde{G}\inext$ that is defined as a subgraph of $G\inext$ after removing all edges participating in any marked computation unit. By the assumption on the number of computation units and by the bound on the number of edges in a computation unit, there are at most $(\bar{n}/2)\cdot \lfloor \bar{n}^\delta/2c \rfloor = \bar{n}^{1+\delta}/4c$ edges removed. This implies that the number of edges in $\tilde{G}\inext$ is at least $m\inext - \bar{n}^{1+\delta}/4c \geq c\bar{m} - \bar{n}^{1+\delta}/4c = \bar{n}^{1+\delta}(c-1/4c) \geq \bar{n}^{1+\delta}/2c$, where the last inequality holds for $c\geq \sqrt{3}/4$, which we have since $c>1$.% edges remaining in $\tilde{G}$. 
	
	This implies that the average degree in $\tilde{G}\inext$ is at least $\bar{n}^{1+\delta}/2cn\inext \geq \bar{n}^{1+\delta}/2c\bar{n} \geq \bar{n}^\delta/2c$, which implies that there is at least one node $v$ with degree $\bar{n}^\delta/2c$ in $\tilde{G}\inext$, which contradicts the process above, as $v$ could have marked an additional computation unit.
\end{proof}

The above process of constructing computation units can easily be computed by each node separately. Thus, in order to prove Theorem~\ref{th:triAlg}, we need a way to make the computation units globally known, in the sense that each has a unique identifier that is known to all nodes. To this end, we will use the following technical claim.
\begin{claim}[\cite{CPZ19} variant of Lemma 4.1]
	\label{claim:2approx}
	Let $G = (V, E)$ be a graph with $\polylog (n)$ diameter. Assume every node has some integer value $f(v)$ which fits in a single $O(\log n)$ bit message. It is possible to give every node $v$ in the network a new unique identifier $i_v \in [n]$, such that there is a globally known function $\tilde{f}:[n]\rightarrow \mathbb{N}$ that any node $u$ can locally compute any of its values, and such that $\tilde{f}(i_v)$ is a $2$-approximation to $f(v)$ for all $v \in V$. This algorithm requires $\polylog (n)$ rounds of the \congest model.
\end{claim}

We will also need the ability to have a leader node that coordinates quantum searches. To implement a leader, the diameter of the network naturally gets into the round complexity. Luckily, in \cite{EFFKO19}, it was shown that for any $H$-freeness problem (i.e., $H$-detection), we may assume without loss of generality that the network has small diameter, stated as follows. 
\begin{lemma}[\cite{EFFKO19}]\label{thm:diameter_reduce}
	Consider an $H$-freeness problem, where $|H|=k$.
	Let $\mathcal{A}$ be a protocol that solves $\mathcal{P}$ in time $T(n,D)$ with error probability $\rho=o(\frac{1}{n\log{n}})$.
	There is an algorithm $\mathcal{A}'$ that solves $\mathcal{P}$ with round complexity $\tilde{O}(T(n,O(k \log{n}))+k\log^2{n})$ and error probability at most $c\rho n\log{n}+\frac{1}{\mathrm{poly}(n)}$, for some constant $c$.
\end{lemma}
Using the above \emph{diameter reduction} technique, we assume that the diameter of the network is $O(\polylog(n))$. Moreover, we assume that the network computes a leader node $v^*$ and a BFS tree rooted at $v^*$, which can be done by the network in $O(\polylog(n))$ rounds.

The last ingredient that we need is the notion of $k$-wise independent hash functions. For two integers $a,b$ we say that a function $f:[a]\rightarrow[b]$ is a \emph{$k$-wise independent hash function} if for any $k$ distinct elements $x_1,\dots,x_k \in [a]$ and any $k$ elements $y_1,\dots,y_k \in [b]$, it holds that $\Pr(f(x_1) = y_1 \land \dots \land f(x_k) = y_k) = 1/b^k$.

We will use the following tail bound for bounding the size of a given bin in a $k$-wise independent hash function.
\begin{lemma}[\cite{SJSS93},Theorem 5(II)(b)]
	\label{thm:tail_bound}
	If $\rv{X}$ is the sum of $k$-wise independent random variables,
	each of which is confined to the interval $[0,1]$ and has $\mu = E[\rv{X}]$,
	then for $\alpha \geq 1$ and assuming $k \leq \lceil \alpha \mu e^{-1/3} \rceil$,
	\begin{equation*}
	\Pr(|X-\mu| \geq \alpha \mu) \leq e^{-\lfloor k/2 \rfloor}.
	\end{equation*}
\end{lemma}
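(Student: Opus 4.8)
The plan is to prove this by the method of moments, the standard route to Chernoff--Hoeffding-type tails under limited independence, which is in essence the argument of Schmidt, Siegel and Srinivasan~\cite{SJSS93}. Write $X=\sum_i X_i$ with $X_i\in[0,1]$, put $Y_i=X_i-\E[X_i]$, and set $s=\lfloor k/2\rfloor$, so that $2s\le k$. Since $(X-\mu)^{2s}$ is nonnegative, Markov's inequality gives
\[
  \Pr\bigl(|X-\mu|\ge\alpha\mu\bigr)=\Pr\bigl((X-\mu)^{2s}\ge(\alpha\mu)^{2s}\bigr)\le\frac{\E\bigl[(X-\mu)^{2s}\bigr]}{(\alpha\mu)^{2s}},
\]
so it is enough to prove $\E[(X-\mu)^{2s}]\le(\alpha\mu)^{2s}\,e^{-s}$.

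Two ingredients then finish the job. First, limited independence suffices to evaluate this moment: expanding $(X-\mu)^{2s}=\bigl(\sum_i Y_i\bigr)^{2s}$ into monomials, each monomial is a product of $2s$ factors $Y_{i_1}\cdots Y_{i_{2s}}$ and therefore involves at most $2s\le k$ distinct indices, so by $k$-wise independence its expectation coincides with the value it would take if the $X_i$ were mutually independent with the same marginals. Hence $\E[(X-\mu)^{2s}]$ equals the $2s$-th central moment of a sum of genuinely independent $[0,1]$-valued variables with the same means. Second, that independent moment is controlled: since $X_i\in[0,1]$ forces $\mathrm{Var}(X_i)\le\E[X_i]$, the variance sum is at most $\mu$, and a Bennett--Bernstein-type moment inequality (equivalently, comparison with the central moments of a Poisson variable of mean $\mu$) bounds the $2s$-th central moment by essentially $(2s\mu/e)^{s}$ in the range of $s$ we are in. The hypothesis $k\le\lceil\alpha\mu e^{-1/3}\rceil$, together with $\alpha\ge1$, is precisely what is needed to upgrade this to $\E[(X-\mu)^{2s}]\le(\alpha\mu)^{2s}e^{-s}$ (the binding arithmetic being $2s\le\alpha^{2}\mu$, with the $e^{-1/3}$ factor providing the slack that absorbs the rounding in the ceiling and the lower-order correction terms). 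Plugging back into the Markov bound yields $\Pr(|X-\mu|\ge\alpha\mu)\le e^{-s}=e^{-\lfloor k/2\rfloor}$.

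The main obstacle is obtaining the independent-case moment inequality with a constant sharp enough that the precise hypothesis $k\le\lceil\alpha\mu e^{-1/3}\rceil$ actually suffices; the odd-looking $e^{-1/3}$ is tuned to this constant-chasing and is not cosmetic. Since the statement is used here purely as a black box and is quoted verbatim from~\cite{SJSS93} (their Theorem~5(II)(b)), the most economical choice in the paper is simply to cite it; the sketch above is how one reconstructs it if needed, the delicate part being the Poisson-comparison moment bound rather than anything particular to the distributed setting.
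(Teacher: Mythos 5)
The paper gives no proof of this lemma at all---it is imported verbatim from Schmidt--Siegel--Srinivasan \cite{SJSS93} (their Theorem 5(II)(b)) and used purely as a black box---and your proposal treats it the same way, citing it and sketching the standard limited-independence moment-method argument behind it (Markov on the $2\lfloor k/2\rfloor$-th central moment, which under $k$-wise independence coincides with its value under full independence), which is indeed the strategy of the original proof. So there is nothing to reconcile with the paper; the only caution is that your intermediate bound of roughly $(2s\mu/e)^{s}$ on the central moment is not literally valid once $2s \gg \mu$ (the Poisson-like regime, reachable here when $\alpha$ is large), and closing that case is exactly the constant-sensitive analysis you correctly defer to \cite{SJSS93}.
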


We are now ready to prove Theorem~\ref{th:triAlg}.

\begin{proof}[Proof of Theorem~\ref{th:triAlg}]
First, each node splits its edges into computation units as describe above. This is done locally without communication. By Lemma~\ref{lem:num_of_computation_units}, there are at least $\bar{n}/2$ computation units.

Recall that by Lemma~\ref{thm:diameter_reduce}, we can assume that the diameter of $G'$ is $O(\polylog{n})$. The nodes run the procedure of Claim~\ref{claim:2approx} to give to each node $v$ a new unique ID $i_v$ in $[\bar{n}]$, such that there is a globally known function $\tilde{f}:[n]\rightarrow \mathbb{R}$ that any node $u$ can locally compute any of its values, and such that for any $v \in V$, $\tilde{f}(i_v)$ is a $2$-approximation to the number of computation units of $v$. Using this knowledge, the network gives a unique ID in $\lfloor \bar{n}/4 \rfloor$ to a set of $\lfloor \bar{n}/4 \rfloor$ marked computation units, such that all nodes know for each such computation unit its node ID (this can be done locally since the function $\tilde{f}(i_v)$ is globally known). Note that we have $\lfloor \bar{n}/2 \rfloor$ marked computation units, but we can only promise global knowledge of a 2-approximation on their number. This stage completes in $O(\polylog{n})$ rounds. 

Following this, using standard computation of a maximum value, the nodes of the network find the node $v^*$ in $G\inext$ with the highest ID and mark it as the leader node. The nodes then construct a BFS tree of $G'$ from $v^*$ and, using the BFS tree, $v^*$ propagates $O(\polylog{n})$ unused random bits to the rest of the network. Since the diameter of the network is at most $O(\polylog{n})$, the above needs at most $O(\polylog{n})$ rounds.

Using the the propagated random bits, the nodes choose at random two hash functions, $h_A: [\bar{n}]\rightarrow [\bar{n}^{2/5}]$ and $h_S: [\bar{n}]\rightarrow [\bar{n}^{1/5}]$, uniformly and independently from a $\Theta(\log{n})$-wise independent hash function family. Let $A=\{A_1,\dots,A_{\bar{n}^{2/5}}\}$ be the sets such that $A_i = \{v \mid h_A(v) = i\}$, and let $S=\{S_1,\dots,S_{\bar{n}^{1/5}}\}$ be the sets such that $S_i = \{v \mid h_S(v) = i\}$. 

Now, let $P$ be a partition of the computation units of $G\inext$ into  $\bar{n}^{4/5}$ sets of size at least $\bar{n}^{1/5}/4$ each. We arbitrarily index the sets in $P$ by $P_{i,j}$ for $1\leq i,j\leq \bar{n}^{2/5}$, and associate each set $P_{i,j}$ in $P$ with a pair of sets $(A_i,A_j)$. %in such a way that each ordered pair of sets $(A_i,A_j)$ is mapped to a set $P_i$ in a one-to-one manner (as there are $\bar{n}^{2/5}$ sets $A_i$, the number of ordered pairs $(A_i,A_j)$ is $\bar{n}^{4/5}$) \ktodo{The index $i$ is used twice -- fix.}. 
In addition, for each $1\leq i,j\leq \bar{n}^{2/5}$ and each computation unit $C_{i,j,\ell}$ in $P_{i,j}$ (for $\ell$ in a range from 1 to a value that is at least $\bar{n}^{1/5}/4$), we  associate $C_{i,j,\ell}$ with at most $4$ sets in $S$ (this is possible as there are at least $\bar{n}^{1/5}/4$ computation units in each $P_{i,j}$ and $\bar{n}^{1/5}$ sets in $S$). We denote the sets in $S$ that are associated with $C_{i,j,\ell}$ by $S_{i,j,\ell,1}, \dots, S_{i,j,\ell,4}$ (the indices $i,j$ can be omitted by being using the same 4 sets depending only on $\ell$ and being the same for every $i,j$).

The algorithm proceeds in two phases: For the first phase, consider a set of computation units $P_{i,j}$ in $P$, and define $E_1(i,j) = E(A_i,A_j)$. In the first phase, for each $1\leq i,j\leq  \bar{n}^{2/5}$, each core $v$ of a computation unit in part $P_{i,j}$ of $P$ learns all edges in $E_1(i,j)$.  As $h_A$ is a pairwise-independent hash function, for an edge $\{u,w\}$, we have $\Pr(h_A(u) = i \land h_A(w) = j) = (1/\bar{n}^{2/5})^2 = 1/\bar{n}^{4/5}$. %(|A_{i_1}|/\bar{n}) \cdot (|A_{i_2}|/\bar{n})$. 
Therefore, the number of edges that a core $v$ is required to learn in expectation is $\E(|E_1(i,j)|) = \bar{m}/\bar{n}^{4/5} = \bar{n}^{\delta+1/5}$. 

By the tail bound of Lemma~\ref{thm:tail_bound}, since we use an $\Omega(\log{n})$-wise independent hash function family (this is a logarithm of the total number of nodes $n$), the number of these edges is at most $\tilde{O}(\bar{n}^{\delta+1/5})$ w.h.p. (in $n$). Therefore, the core of a computation unit may learn these edges in $\tilde{O}(\bar{n}^{1/5})$ rounds using the routing scheme of Theorem~\ref{thm-expander-routing} with a sufficiently small $\gamma$ (since we have a degree of $O(\bar{n}^\delta)$ in $G\inext$ for the core using edges of this computation unit alone).

We now have that for every $i,j$, there are at least $\bar{n}^{1/5}/4$ computation units (those in $P_{i,j}$), whose cores know all edges in $E(A_i,A_j)$. Each such core is associated with 4 sets $S_{i,j,k,1}, \dots, S_{i,j,k,4}$, and what we would like to do in the second phase is for each core to check all edges from one of its $S$ sets to nodes in $A_i,A_j$ and detect a triangle. To leverage the power of the distributed Grover search, we split the task of each core into subtasks on which we can apply Lemma~\ref{lemma:qsearchv0}.

Formally, for the second phase, we define for each $S_i$ an arbitrarily split of $S_i$ into $\bar{n}^{2/5}$ batches, each of size $\tilde{O}(\bar{n}^{2/5})$ nodes, and denote these batches by $S_i^{(1)},\dots,S_i^{(\bar{n}^{2/5})}$. We define the following protocols $\mathcal{A}_1,\dots,\mathcal{A}_{\bar{n}^{2/5}}$. %, which the network runs in super-position \ktodo{A reviewer said that it is unclear what this means.}. 
The protocol $\mathcal{A}_r$ is defined as follows:  For each $k$ such that $1\leq k\leq \bar{n}^{1/5}$ and for each $1 \leq r \leq \bar{n}^{2/5}$, let $E_2(i,j,k,r) = \{(u,w) \mid u\in S_k^{(r)} \land w \in A_i\cup A_j\}$. In the protocol, each core $v$ of a computation unit $C_{i,j,\ell}$ in $P_{i,j}$ learns all edges of $E_2(i,j,k,r)$ for at most 4 values of $k$ which are assigned to it. Since a set $S_k^{(r)}$ has $\tilde{O}(\bar{n}^{2/5})$ nodes, the number of edges to be collected between nodes of $S_k^{(r)}$ and $A_i \cup A_j$ is in expectation $\E(|E_2(i,j,k,r)|) = \bar{n}^{1+\delta}\cdot (\bar{n}^{2/5}/\bar{n}) \cdot (\bar{n}^{3/5}/\bar{n}) = \tilde{O}(\bar{n}^\delta)$.
By the tail bound of Lemma~\ref{thm:tail_bound}, the number of these edges is at most $\tilde{O}(\bar{n}^{\delta})$ w.h.p. (in $n$), therefore the computation unit may learn these edges in $\tilde{O}(1)$ rounds using the routing scheme of Theorem~\ref{thm-expander-routing}. A computation unit in $P_{i,j}$ rejects if there is a triangle contained in $E_1(i,j) \cup E_2(i,j,k,r)$ for any of the 4 values of $k$ which are assigned to it. Using the BFS tree, the network determines whether there was a computation unit that rejected, rejects if so, and otherwise accepts (notice that this propagation would not be needed in a non-quantum algorithm, as it is sufficient that one node rejects, but here we need the leader $v^*$ to know this information).

The way these protocols are exectued is as follows. In the second phase, the leader $v^*$ performs the quantum procedure described in Lemma~\ref{lemma:qsearchv0} on $\mathcal{A}_1,\dots,\mathcal{A}_{\bar{n}^{2/5}}$. Therefore, by Lemma~\ref{lemma:qsearchv0}, the quantum protocol rejects if and only if the network rejects in at least one of these procedures, and terminates after  $\tilde{O}(\sqrt{\bar{n}^{2/5}}) = \tilde{O}(\bar{n}^{1/5}) = \tilde{O}(n^{1/5})$ rounds. 

\paragraph{Correctness.}
If the graph is triangle-free then the network accepts, as a core $v$ of a computation unit only rejects if it detects a triangle in the edges $E_1(i,j) \cup E_2^r(k)$ for some appropriate parameters $i,j,r,k$. Otherwise, let $\{v_1,v_2,v_3\}$ be vertices of a triangle in $G'$. Let $i,j \in [\bar{n}^{2/5}]$ such that $v_1 \in A_i, v_2 \in A_j$ and let $P_{i,j}$ be the set of computation units to which the pair $A_i,A_j$ is mapped. Let $k \in [\bar{n}^{1/5}]$ and $r \in [\bar{n}^{2/5}]$ such that $v_3 \in S^{(r)}_k$ and let $w \in P_{i,j}$ be the core of a computation unit in $P_{i,j}$ to which $S_k$ is mapped. We note that the triangle is contained in $E_1(i,j) \cup E_2(i,j,k,r)$, and therefore the network rejects in the protocol $\mathcal{A}_r$, and hence our algorithm rejects as well. 
\end{proof}

\subsection[Quantum detection of $K_{p}$ for $p\geq 7$]{Quantum Detection of {\boldmath $K_{p}$} for {\boldmath $p\geq 7$}}
\label{sec:congest:seven}

%\dtodo{Changed in the title from $p \geq 6$ to $p \geq 7$. See the comment in the statement of the theorem for more details. Also, in the title and in the statement, $p$ mean different things -- in the title its the size of the clique which we are detecting, and in the theorem its the size for which we list. Do we want to change this?}

%In this section 
We now show our method for enhancing the $K_{p}$-listing algorithm of \cite{Censor+SODA21} with a quantum procedure that allows us to detect a $K_{p+1}$ instance. %The proof of the following theorem appears in Appendix~\ref{app:congest:extend:seven}.
%
%\ktodo{Could be moved to the appendix:}We mention that the algorithm works also for the case of $p=5$, giving a quantum detection algorithm for $K_6$. However, for this value of $p$ the complexity does not improve upon the complexity in the classic setting. We believe going below that complexity is possible by improving the $+1$ extension from the \qclique setting as is done above for $K_4$ detection in \qcongest.
%
To ease the notation, we will be using $p+1$ for the size of the clique that we are detecting, and hence note that the following statement is shifted, i.e., holds for $p \geq 6$.
\newcommand{\ThmCongestPseven}
{
There exists a quantum algorithm that solves $(p+1)$-clique detection with success probability at least $1-1/\poly(n)$ in the \qcongest model with complexity $\tilde{O}(n^{1-2/p})$, for $p \geq 6$.
}
%\dtodo{changed to $6$ instead of $5$, as we do not improve at all for $p=5$ -- to improve there, we would need the improved $+1$ algorithm from the \qclique (like we do above for $K_4$ detection in \qcongest). Not a big deal, but would completely break the blackbox which SODA provides}

\begin{theorem}
\label{theorem:K6detectQcongest}
\ThmCongestPseven
\end{theorem}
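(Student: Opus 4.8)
The plan is to mirror the $\qclique$ extension result (Theorem~\ref{extendingTheorem}) inside a high-conductance cluster, the same way the $\tilde O(n^{1/5})$ triangle algorithm of Section~\ref{subsec:triangles_congest} transported the $\qclique$ triangle algorithm to $\qcongest$. First I would invoke the reductions already set up: by Lemma~\ref{thm:diameter_reduce} assume the diameter is $O(\polylog n)$ and fix a leader $v^*$ with a BFS tree, and by the expander-decomposition machinery (Lemma~\ref{lem:decomp_soda} together with the ``solve it on a high-conductance graph plus incident edges'' reduction used for triangles) reduce $(p+1)$-detection on general graphs to the analogous problem on a cluster $G\inext$ with $\mix(G\inext)=\polylog n$, average degree $\Theta(\bar n^\delta)$ where $\bar m=\bar n^{1+\delta}$, and the usual degree-preservation property $\deg_{G\inext}(v)=\tilde\Theta(\deg_{G'}(v))$. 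On this cluster the target complexity $\tilde O(n^{1-2/p})$ is what we must hit.

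Next, the core of the algorithm: run the $K_p$-\emph{listing} algorithm of \cite{Censor+SODA21} inside the cluster so that every $p$-clique of $G\inext$ is known to some node, and then append \emph{one} node via a single (non-nested, $t=1$) distributed Grover search, exactly the $t=1$ base case of Theorem~\ref{extendingTheorem} but implemented in a sparsity-aware way. Concretely: partition $V(G\inext)$ into $\bar n^{1/5}$-like batches, but with the exponent tuned so the search over batch-indices costs $\tilde O(n^{1-2/p})$; for a fixed batch $Q^\ell$, each node that holds some $p$-clique $\{v_1,\dots,v_p\}$ must learn the edges from $Q^\ell$ to $\{v_1,\dots,v_p\}$ and check locally whether some $w\in Q^\ell$ is adjacent to all of $v_1,\dots,v_p$; this is again done through computation units (as in Lemma~\ref{lem:num_of_computation_units}) and Claim~\ref{claim:2approx} so that high-degree nodes host proportionally more checking subtasks, and through the routing scheme of Theorem~\ref{thm-expander-routing}, with $k$-wise independent hashing (Lemma~\ref{thm:tail_bound}) to keep the per-unit load concentrated. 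The batch size must be chosen so that (i) listing $K_p$ costs $\tilde O(n^{1-2/p})$ --- this is exactly the guarantee of \cite{Censor+SODA21} --- and (ii) the quantum search, which costs $\tilde O(\sqrt{\#\text{batches}}\cdot(\text{cost of checking one batch}))$, also matches $\tilde O(n^{1-2/p})$; balancing these two gives the claimed exponent, and the ``$p\ge 6$'' threshold comes precisely from the point where the listing cost $n^{1-2/p}$ dominates the intrinsic cost of collecting the relevant edges at each computation unit.

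The main obstacle I expect is the second of the two balancing constraints: in $\qcongest$ we cannot use Lenzen routing, so the cost of making a checking node learn the relevant edges is governed by the average degree $\bar n^\delta$ and the number of computation units a node gets, not by $n$ directly. I would need to argue, via the computation-unit accounting and the $k$-wise-independent load bounds (as in the triangle proof, Lemma~\ref{lem:num_of_computation_units} and the $E_1,E_2$ expectation calculations), that each checking step costs only $\tilde O(n^{\text{(small)}})$ amortized per unit degree, so that the search depth $\tilde O(\sqrt{\#\text{batches}})$ times this per-step cost is $\tilde O(n^{1-2/p})$. A secondary subtlety is that \cite{Censor+SODA21} lists $K_p$ but the nodes that hold a given $p$-clique are not under our control, so --- unlike the warm-up triangle case where we knew the sets $A_i,A_j$ --- we must use the \emph{black-box} variant (the extension algorithm of Theorem~\ref{extendingTheorem}, not the ``opened-box'' variant of Section~\ref{sec:CCfast}); this is why the $\qcongest$ statement is weaker than the $\qclique$ one and only improves on \cite{Censor+SODA21} for $p+1\ge 7$. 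Once these load estimates are in place, correctness is immediate: if $G\inext$ contains $\{v_1,\dots,v_{p+1}\}$, then $\{v_1,\dots,v_p\}$ is listed by some node, $v_{p+1}$ lands in some batch $Q^\ell$, and the corresponding computation unit detects the extension, so the Grover search (Lemma~\ref{lemma:qsearchv0}) accepts with probability $1-1/\poly(n)$; otherwise no unit ever detects a clique.
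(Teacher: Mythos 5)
Your high-level plan (classically list $K_p$ via \cite{Censor+SODA21}, then add one node by a single Grover search) matches the paper's, but two of your load-bearing steps do not hold as stated. First, the reduction you invoke --- ``reduce $(p+1)$-detection on general graphs to the analogous problem on one high-conductance cluster plus incident edges'' --- is only available for triangles (Theorem~\ref{th:tri-in-conductance}); no such clean reduction is established for larger cliques, and the paper does not use one. Instead it opens up the iterative structure of the listing algorithm of \cite{Censor+SODA21}: in the two exhaustive-search stages, detection of $K_{p+1}$ comes for free because the relevant nodes already know their induced 2-hop neighborhoods, and the quantum search is inserted only into the cluster-listing stage. Second, and more importantly, your query implementation --- have each (uncontrolled) holder of a $p$-clique pull in, per batch, the fresh edges between the batch and its clique --- is exactly the step you flag as the ``main obstacle,'' and it is not salvageable by computation units plus hashing, because those edges may be incident only to nodes outside the cluster and the holders of the cliques are arbitrary, so the per-query routing load cannot be bounded. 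Your insistence on the black-box variant is in fact the opposite of what makes the proof go through.

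The paper's way around this is structural, not black-box: every $K_p$ listed in the cluster-listing stage contains a \emph{good} node $g$ whose entire induced 2-hop neighborhood is already known to the high-degree cluster nodes $H(C)$, so for any candidate extension node $v$, all edges between $v$ and the clique are \emph{already} inside $H(C)$. A query therefore only broadcasts, within $H(C)$ via Theorem~\ref{thm-expander-routing}, the edges incident to $v$ that $H(C)$ already knows --- no new edges are pulled in. Candidates $N^{+}(C)$ are bucketed by degree, giving per-query cost $\tilde{O}(2^i/\mu_C+1)$ and bucket size $\tilde{O}(\min\{n, n\mu_C/2^i\})$, so the Grover searches cost $\tilde{O}(n^{1/2}+n/\mu_C^{1/2})$; paying an extra $\tilde{O}(n^{\delta})$ in the listing to force $\mu_C=\Omega(n^{\delta})$ and choosing $\delta=2/3$ yields $\tilde{O}(n^{1-2/p}+n^{2/3})$, which is $\tilde{O}(n^{1-2/p})$ precisely when $p\geq 6$ --- that is where the threshold comes from, not from a listing-versus-edge-collection trade-off in your batched search. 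Your correctness argument is fine, but without the good-node/2-hop observation and the degree bucketing, the complexity claim has no proof.
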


 In a nutshell, the algorithm uses a conductance decomposition, and works on clusters in parallel.
 There are three ways in which the $K_p$-listing algorithm of \cite{Censor+SODA21}
 may find instances of~$K_p$. The first two ways involve having a node learn its induced 2-hop neighborhood, and in these cases,
 this clearly gives detection of a $K_{p+1}$ instance if such an instance exists, without further effort on our part.
 The third case is where we diverge from the algorithm of \cite{Censor+SODA21} by incorporating Grover searches
 inside the clusters, using Lemma~\ref{lemma:qsearchv0}.
 
During the algorithm, some parts use a simple listing of the edges in a node's induced $2$-hop neighborhood. For completeness, we state here a formal claim and proof of how this is done.
 \newcommand{\ClaimTwoHop}
{
	Given a graph $G = (V, E)$, and some value $\alpha$, every node $v$ such that $\deg(v) \leq \alpha$ can learn its induced $2$-hop neighborhood in at most $O(\alpha)$ rounds of the \congest model.

}
\begin{claim}
	\label{claim:2hopNeighborhood}
	\ClaimTwoHop
\end{claim}

\begin{proof}[Proof of Claim~\ref{claim:2hopNeighborhood}]
	Let $v$ be such a node. Node $v$ iterates over its at most $\alpha$ neighbors. When it considers neighbor $i$, it sends a message to all its neighbors asking if they neighbor $i$, to which they each respond whether or not they have an edge to node $i$. At the end of the iterations, after $O(\alpha)$ rounds, node $v$ knows all of its induced $2$-hop neighborhood.
	
	Notice that it does not matter if $v$ has a neighbor $u$ which also tries to perform this search in parallel to $v$, in the odd rounds $v$ and $u$ just sends queries to each other across their shared edge, and in the even rounds they each respond to one another.
\end{proof}

We are now ready to prove Theorem~\ref{theorem:K6detectQcongest}.
 
\begin{proof}[Proof of Theorem~\ref{theorem:K6detectQcongest}]
We begin by briefly explaining the $K_p$ listing algorithm of \cite{Censor+SODA21}. The algorithm of \cite{Censor+SODA21} works in iterations, which are composed of stages. In every stage, some $K_p$ in the graph may be listed.  We split the listed $K_p$ instances according to the stage in the algorithm in which they are listed. For each stage, we show how to detect in a quantum manner if there is an instance of $K_p$ that is listed in that stage which can be extended to (i.e., is a part of) an instance of $K_{p+1}$. As such, if there is any instance of $K_{p+1}$ in the graph, then we will certainly detect this as some instance of $K_p$ could be extended to it.

We now show the $K_p$ listing algorithm, and interject at appropriate places in order to perform quantum searches. Each iteration of the algorithm of \cite{Censor+SODA21} consists of the following steps.\vspace{2mm}
		\begin{itemize}
\item[{\bf 1.}]{\bf Exhaustive search detection stage.} Every node $v$ with degree $O(n^{1/2})$ learns its induced 2-hop neighborhood in $O(n^{1/2})$ rounds using Claim~\ref{claim:2hopNeighborhood}, lists any cliques which it sees involving itself, and removes itself, along with its incident edges from the graph.

\textbf{Detection of \boldmath{$K_{p+1}$}:} In particular, because the entire induced 2-hop neighborhood is learned, any instance of $K_{p}$ that is listed that can be extended to an instance of $K_{p+1}$ can be immediately listed in this way even without further communication, which proves our claim for this stage.
\item[{\bf 2.}]{\bf Graph decomposition and cluster exhaustive search detection stage.} An expander decomposition is computed according to Lemma~\ref{lem:decomp_soda}, with $\gamma=0.1$ (any small constant would do here), while the above ensures that the average degree in the graph (and thus in the clusters, due to Lemma~\ref{lem:decomp_soda}) is at least $\Omega(n^{1/2})$. In clusters with $O(n^{1-2/p})$ nodes, each node learns its induced 2-hop neighborhood in $G$ using Claim~\ref{claim:2hopNeighborhood} and lists all the instances of $K_p$ which it is a part of, and finally removes itself and its incident edges from the graph. Because Lemma~\ref{lem:decomp_soda} promises that the number of edges that leave $C$ is at most $\tilde{O}(|E_C|)$, where $E_C$ is the set of edges inside $C$, then this completes in $\tilde{O}(n^{1-2/p})$ rounds.

			\textbf{Detection of \boldmath{$K_{p+1}$}:}  As in the previous stage, because of the exhaustive search nature of this stage, we again obtain that any instance of $K_{p}$ that is listed that can be extended to an instance of $K_{p+1}$ can be immediately listed in this way even without further communication, which proves our claim for this stage.\vspace{2mm}
\item[{\bf 3.}]{\bf Cluster listing.} Within every remaining cluster $C$, some nodes are designated as \emph{good} and the rest as \emph{bad}. For a good node $v$, the nodes of $C$ learn amongst themselves (collectively, not necessarily by a single node, and in particular not necessarily by $v$ itself) all the edges in the induced 2-hop neighborhood of $v$. That is, the nodes of $C$ request from the nodes neighboring the cluster to send in edges from outside $C$ to nodes inside $C$, such that all the induced 2-hop neighborhood of $v$ is known to the nodes in $C$. Now, the nodes in $C$ ensure that every instance of $K_p$ involving $v$ and at least one other node in $C$ is becomes known to some node in $C$ -- that is, these $K_p$ instances are listed by $C$. The good nodes are then removed from the graph along with all of their edges. 
%		
%		\textbf{Detection of \boldmath{$K_{p+1}$}:}  As in the previous stage, because of the exhaustive search nature of this stage, we again obtain that any instance of $K_{p}$ that is listed that can be extended to an instance of $K_{p+1}$ can be immediately listed in this way even without further communication, which proves our claim for this stage.\vspace{2mm}
\end{itemize}

\noindent This concludes the description of the \congest algorithm in~\cite{Censor+SODA21}. The exact definition of good nodes is not required for our purpose.
%Also, we mention that during this stage, clusters whose average inner-degree is too small compared to average outer-degree of nodes neighboring the cluster are ignored and their edges are deferred to future iterations.
The proof of \cite{Censor+SODA21} shows that each iteration completes within $\tilde{O}(n^{1-2/p})$ rounds, and that after $\poly \log (n)$ iterations the remaining graph is empty and hence $\tilde{O}(n^{1-2/p})$ rounds are sufficient for the entire algorithm.\vspace{3mm}

		\noindent{\bf The cluster quantum detection stage.}
		We now provide the quantum procedure that we run after the above \textbf{Cluster listing} stage, in order to detect an instance of $K_{p+1}$ that contains an instance of $K_p$ that was listed during this stage. First, let the $C$-degree of a node be its number of neighbors in $C$,
%		recall that $n^{\delta} = n^{1/2}$ is a lower bound on
let $\mu_{C}$ be
		the average $C$-degree of nodes in $C$, and let $H(C)$ be the high $C$-degree cluster nodes, namely, those of $C$-degree $
		%\Omega(n^{\delta}) = \Omega(n^{1/2})$.
\Omega(\mu_{C})$.  Notice that due to the invocation above of Lemma~\ref{lem:decomp_soda}, it holds that $\mu_C = \Omega(n^{1/2})$.
		A more precise description of the algorithm in~\cite{Censor+SODA21} is that the induced 2-hop neighborhood of each good node $v$ becomes known to the nodes in $H(C)$ rather than to any node in $C$. Second, denote by $N^{+}(C)$ the set of nodes which have a neighbor in $C$ (this includes all of $C$ as $C$ is connected, by definition) and by $N(C) = N^{+}(C)-C$ the neighbors of $C$ outside of $C$. In the above \textbf{Cluster listing} stage we piggyback the degree (not $C$-degree) of every node in $N(C)$ onto the messages containing its edges, so that the degrees of all nodes in $N(C)$ are also known to nodes of $H(C)$.
		
The quantum process starts with the following. The nodes in $C$ elect some arbitrary leader $v_C \in H(C)$.
%and construct a virtual binary tree $T_C$ rooted at $v_C$ which spans $H(C)$, and has depth at most $\poly \log (n)$.
This can be done in $\poly \log (n)$ rounds because this is a bound on the diameter of the cluster, by Lemma~\ref{lem:decomp_soda}. The nodes of $H(C)$ now broadcast within $H(C)$ the degrees of all the nodes in $N(C)$. This is done in two steps. In the first step, each node in $H(C)$ sends $v_C$ the degrees of nodes in $N(C)$ which it knows about.
%Then, every node tells $v_C$ about the degrees of nodes in $N(C)$ which it knows about.
As every node sends and receives at most $|N(C)| = O(n)$ pieces of information, and every node in $H(C)$ has $C$-degree at least $\Omega(\mu_C) = \Omega(n^{1/2})$, this takes at most $\tilde O(n^{1/2})$ rounds, using the routing algorithm of Theorem \ref{thm-expander-routing} with a sufficiently small $\gamma$. In the second step, $v_C$ makes this information known to all nodes in $H(C)$ using a simple doubling procedure: in each phase of this procedure, each informed node shares the information with a unique uninformed node. Each phase completes in $\tilde O(n^{1/2})$ rounds by the same argument using the routing algorithm of Theorem~\ref{thm-expander-routing}, and the number of phases is logarithmic. In a similar fashion, the nodes in $H(C)$ learn all the degrees of the nodes in $C$.
%Finally, the information is broadcast to all of $H(C)$ over $T_C$ \ktodo{Do we actually need to reference the SPAA21 paper? This looks trivial to me.}. In a similar manner, the nodes of $H(C)$ also broadcast amongst themselves the $C$-degrees of all nodes in $C$. \dtodo{This seems like some stuff is repeated in the paragraph multiple times.}

Knowing the degrees of all nodes in $N^{+}(C)$, the nodes $H(C)$ \emph{bucket} $N^{+}(C)$ by degrees. That is, they compute $N^{+}(C) = N^{+}_1(C), \dots, N^{+}_{\log n}(C)$, such that the degree of any $v \in N^{+}_i(C)$ is in $[2^{i-1},\dots,2^i)$. Our goal is to iterate over the $\log n$ buckets, whereby in each bucket we check whether a node $v$ in $N^{+}_i(C)$ can be used to extend some $K_p$ instance, which is already listed by $H(C)$, into an instance of $K_{p+1}$. This is done as follows.

Fix an $i$ between 1 and $\log n$. We perform a Grover search over $v \in N^{+}_i(C)$ using $v_C$ as the leader of the search: In each query (i.e., over such a $v$), we broadcast within $H(C)$ all the neighbors of $v$ which are known to $H(C)$, in order to try to extend any $K_p$ listed by $H(C)$ to a $K_{p+1}$ involving $v$. Let $\kappa_{p}$ be such a $K_p$. As it is listed by $H(C)$, it must involve at least one good node in $C$, denoted $g \in \kappa_{p}$. Further, recall that every edge in the induced 2-hop neighborhood of $g$ is known to $H(C)$. Thus, if $v$ that can be used to extend $\kappa_{p}$ to an instance of $K_{p+1}$, all the edges between $v$ and $\kappa_{p}$ are known to the nodes of $H(C)$. Therefore, if we manage to broadcast in $H(C)$ all the edges incident to $v$ that are known to $H(C)$, then if there is a way to use $v$ to extend a $K_p$ listed by $H(C)$ to a $K_{p+1}$, we will certainly find it. Similarly to the above analysis, using the routing algorithm of Theorem~\ref{thm-expander-routing} and the doubling procedure, we broadcast within $H(C)$ the edges incident to $v$ which are known to $H(C)$. This completes in $O(2^i / \mu_C+1)$ rounds, as the degree of $v$ is at most $2^i$.

In order to conclude the proof, we need to bound the size of each $N^{+}_i(C)$. iIn order to obtain the bound on the size of $N^{+}_i(C)$, we compute an upper bound on the number of \emph{all} edges incident to nodes in $N^{+}(C)$ and use the bound on degrees of nodes in the bucket. To this end, we wish to show that $\sum_{v \in N^{+}(C)} \deg(v) = \tilde O(n \cdot \mu_C)$. To do so, we split the edges into three categories: $E_1$ -- edges with both endpoints in $C$; $E_2$ -- edges with one endpoint in $C$; $E_3$ -- edges with both endpoints not in $C$. As $\mu_C$ is the average $C$-degree of the nodes in $C$, and $n$ is the number of nodes in the entire graph, implying $|C| \leq n$, it holds that $E_1 = O(n \cdot \mu_C)$. It is ensured in Lemma~\ref{lem:decomp_soda} that for every $v \in C$, it holds that $\deg(v) = \tilde O(\deg_C(v))$, where $\deg_C(v)$ is the $C$-degree of $v$, implying $|E_2| = \tilde O(|E_1|) = \tilde O(n \cdot \mu_C)$. Further, it is ensured in Lemma~\ref{lem:decomp_soda} that $|E_3|/n = \tilde O(|E_1|/|C|) = \tilde O(\mu_C)$, implying that $|E_3| = \tilde O(n \cdot \mu_C)$.
%It is ensured that the average $C$-degree in $N(C)$ is $\tilde{O}(\mu_C)$, by Lemma~\ref{lem:decomp_soda}. Thus, the average $C$-degree in $N^{+}(C)$ must also be $\tilde{O}(\mu_C)$.
Thus, for every $N^{+}_i(C)$, due to the degrees in the bucket and due to the pigeonhole principle, it must be that $|N^{+}_i(C)| = \tilde{O}(n \cdot \mu_C / 2^i)$. Further, as $n$ is the number of nodes in the graph, it also trivially holds that $|N^{+}_i(C)| \leq n$. All in all, we get that $|N^{+}_i(C)| = \tilde{O}(\min \{n, n \cdot \mu_C / 2^i\})$.

Finally, we can analyze the round complexity of the algorithm. The $K_p$ listing algorithm is shown in \cite{Censor+SODA21} to take $\tilde O(n^{1-2/p})$ rounds. For each $i$, the checking procedure in our Grover search takes $\tilde O(2^i / \mu_C+1)$ rounds. For every $i$ such that $2^i/\mu_C \leq 1$, the checking procedure takes $O(1)$, and $|N^{+}_i(C)| = \tilde O(n)$, implying that the Grover search takes a total of $\tilde O(n^{1/2})$ rounds, using Lemma~\ref{lemma:qsearchv0}. For every $i$ such that $2^i/\mu_C > 1$, the checking procedure takes $\tilde O(2^i / \mu_C+1) = \tilde O(2^i / \mu_C)$, and $|N^{+}_i(C)| = \tilde{O}(n \cdot \mu_C / 2^i)$, implying that the Grover search takes a total of $\tilde O(\sqrt{n \cdot \mu_C / 2^i} \cdot (2^i / \mu_C)) = \tilde{O}(\sqrt{n \cdot (2^i / \mu_C)})$ rounds, using Lemma~\ref{lemma:qsearchv0}. As $2^i = O(n)$ and $\mu_C = \Omega(n^{1/2})$ (due to the guarantees of the invocation of Lemma~\ref{lem:decomp_soda}, as stated above), this takes at most $\tilde O(n^{3/4})$ rounds.

We thus have that our quantum algorithm for detecting an instance of $K_{p+1}$ for $p\geq 5$ completes in $\tilde{O}(n^{3/4}+n^{1-2/p})$ rounds, with probability at least $1-1/\poly(n)$.

We note that we can decrease the $\tilde{O}(n^{3/4})$ part of the complexity, by slightly changing the listing algorithm of \cite{Censor+SODA21}. That is, the algorithm stated there performs $K_p$ listing in $\tilde O(n^{1-2/p})$ rounds, while ensuring $\mu_C = \Omega(n^{1/2})$. It is implied in the proofs in \cite{Censor+SODA21} that, for any $1/2 \leq \delta < 1$, one can pay an additional $\tilde O(n^\delta)$ rounds (by performing an exhaustive search, as done in Claim~\ref{claim:2hopNeighborhood}) in order to ensure $\mu_C = \Omega(n^{\delta})$. Further, notice that our Grover searches take a total of $\tilde O(n^{1/2} + n/\mu_C^{1/2}) = \tilde O(n^{1/2} + n^{1-\delta/2})$ rounds.

All in all, for any $1/2 \leq \delta < 1$, our algorithm requires $\tilde O(n^{1-2/p} + n^\delta + n^{1/2} + n^{1-\delta/2}) = \tilde O(n^{1-2/p} + n^\delta + n^{1-\delta/2})$ rounds. One can set $\delta = 2/3$, giving a final algorithm running in $\tilde O(n^{1-2/p} + n^{2/3} + n^{1-1/3}) = \tilde O(n^{1-2/p} + n^{2/3}) = \tilde O(n^{1-2/p})$ rounds, where the last transition is since $p \geq 6$.
\end{proof}

 %even without resorting to additional techniques. The third listing part is where we diverge from the algorithm of \cite{Censor+SODA21}, by incorporating Grover searches inside the clusters, using Lemma \ref{lemma:qsearchv0}.

% An interesting difference between our new algorithm and that of \cite{Censor+SODA21}
% is exhibited in the use of the conductance decomposition itself:
% here, we apply it with parameters that are ``worse'', in the sense that they require longer to construct the decomposition.
% This is because our simulation of the Congested Clique inside the cluster requires higher bandwidth,
% in order to execute the +1 extension procedure that we described in Section~\ref{sec:CC}.
% Although computing the decomposition now requires more time than it did in~\cite{Censor+SODA21},
% this is balanced against a faster search time inside the clusters, yielding better overall complexity.
% \rtodo{Does that make sense? Previous paragraph was cryptic}

 %and so 
 %is also exhibited in the decomposition itself, as we need to apply it with parameters that are slower than those used for $K_p$ listing in the classic setting, in order to get our improved complexity for detection of $K_{p+1}$.
%
%\subsection[Faster Detection of $K_p$]{Faster Detection of {\boldmath $K_p$}}

%\dtodo{Follow similar steps to those in the \qclique case.} \dtodo{Probably not needed -- leaving for now, remove at end if unused}
 %congest cliques

% !TEX root = ./main.tex
\section{Faster Clique Detection in the $\qclique$ Model}
\label{sec:CCfast}
%\ktodo{Revise title of section}\ftodo{I tried to modify the title.}
%\ktodo{Revise introductory paragraph in this section}
We present here an approach that improves upon our approach in Section~\ref{sec:CC} by 
taking into consideration \emph{which} clique nodes know of which copies of $K_p$:
instead of treating the $K_p$-listing algorithm as a black box, we explicitly use the $K_p$-listing algorithm of~\cite{DLP12},
so that we know which copies of $K_p$ will be listed by each clique node, and what other information that node already has.
We then search for $(p+t)$-cliques by having each clique node learn only the edges that it needs to check if the $p$-cliques it has listed can be extended.
(In the previous approach, edges could be learned by nodes that had no use for them, since they were not adjacent to any $p$-clique that the node had listed.)

Throughout this section we use the following notation: we let $\bigtimes(S_1,\ldots,S_k)$ denote the Cartesian product $S_1 \times \ldots \times S_k$.
Also, we denote by $E(S_1,\ldots,S_k)$ the edges $E \cap \bigcup_{i \neq j} \left( S_i \times S_j \right)$
that cross between any two sets $S_i, S_j$.

%\subsection[Detecting larger cliques]{Detecting larger cliques}
%\label{sec:TriTri}

For presenting our algorithm for a general $p$, we first shortly review the listing algorithm of Dolev et al.~\cite{DLP12}.
Fix an arbitrary partition $S_1,\ldots,S_{n^{1/p}}$ of the nodes of $V$,
such that $|S_i| = n^{1-1/p}$ for each $i$,
and a 1:1 mapping 
$g : V \rightarrow \set{ 1,\ldots,n^{1/p} }^p$ 
assigning to each node $v \in V$ a $p$-tuple $g(v) = (i_1,\ldots,i_p) \in \set{1,\ldots,n^{1/p}}^p$.
We assume for simplicity that $n^{1/p}$ is an integer; otherwise, we can adjust the set size to an integer,
without changing the asymptotic complexity of the algorithm.

For each node $v \in V$ and index $j \in \set{1,\ldots,p}$,
let $T_j^v = S_{g(v)_j}$.
(That is, if $g(v) = (i_1,\ldots,i_p)$, then $T_1^v = S_{i_1}, \ldots, T_p^v = S_{i_p}$.)
Also, let $T^v \coloneq \bigcup_{j = 1}^p T_j^v$.
Node $v$ is responsible for listing all $p$-cliques $(u_1,\ldots,u_p) \in \bigtimes(T_1^v, \ldots,  T_p^v)$.
To do so, node $v$ needs to learn all edges
in $E(T_1^v, \ldots,  T_p^v)$;
there are at most $p(n^{1-1/p})^2 = O(n^{2-2/p})$ such edges,
which, using Lenzen's routing scheme, can be collected in $O(n^{2-2/p-1}) = O(n^{1-2/p})$ rounds.
After collecting the edges in $E(T_1^v,\ldots, T_p^v)$, node $v$ locally enumerates all $p$-cliques it sees.

\subsection[$K_{p+1}$-Detection from $K_p$-Listing]{\boldmath $K_{p+1}$-Detection from $K_p$-Listing}

We start by showing how to use the $K_p$ listing algorithm for obtaining $K_{p+1}$-detection, stated as follows.

\begin{theorem}
	For any $p \geq 3$,
	the $K_{p+1}$-detection problem in the $\qclique$ can be solved in $\tilde{O}(n^{(1-1/p)/2 } + n^{1-2/p} )$ rounds.
	\label{thm:plus1}
\end{theorem}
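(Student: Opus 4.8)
The plan is to combine the $K_p$-listing algorithm of Dolev et al.~\cite{DLP12} (as reviewed above) with a single level of distributed Grover search (Lemma~\ref{lemma:qsearchv0}), searching over which "block" of nodes contains the extra vertex that completes a listed $p$-clique into a $(p+1)$-clique. First I would run the listing algorithm, so that after $O(n^{1-2/p})$ rounds each node $v$ knows all of $E(T_1^v,\ldots,T_p^v)$ and has locally enumerated every $p$-clique in $\bigtimes(T_1^v,\ldots,T_p^v)$; collectively the nodes have listed every $K_p$ of $G$. Now a $(p+1)$-clique consists of a listed $p$-clique $\{u_1,\ldots,u_p\}$ together with a vertex $w$ adjacent to all of $u_1,\ldots,u_p$. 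Since $w$ lies in some part $S_i$ of the partition, and the responsible node $v$ for a $p$-clique $\{u_1,\ldots,u_p\}\subseteq\bigtimes(T_1^v,\ldots,T_p^v)$ only needs to see the edges between $w$ and $T^v$ to check whether $w$ extends that clique, the task reduces to searching over $i\in\{1,\ldots,n^{1/p}\}$ for a part $S_i$ from which \emph{some} node extends \emph{some} listed $p$-clique.

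The key step is the checking procedure $\mathcal{A}$ for index $i$: every node $v$ needs to learn $E(T^v, S_i)$, the edges between its (at most $p$) designated parts and the part $S_i$. For a fixed $v$, $|T^v|\le p\cdot n^{1-1/p}$ and $|S_i| = n^{1-1/p}$, so this is $O(n^{2-2/p})$ edges per node; however, summed over all $v$ it is more delicate, so instead I would bound the load as follows: a given edge $\{a,b\}$ with $b\in S_i$ is needed by node $v$ only if $a\in T^v$, i.e. $g(v)$ contains the index of $a$'s part in one of its $p$ coordinates, and there are $p\cdot n^{1-1/p}$ such nodes $v$. Hence the total number of (edge, destination) pairs is $O(p\cdot n^{1-1/p}\cdot |E(V,S_i)|) = O(n^{1-1/p}\cdot n\cdot n^{1-1/p}) = O(n^{3-2/p})$, so by Lenzen's routing scheme (Lemma~\ref{lem:lenzen_routing}) with parameter $s = O(n^{1-2/p})$ this costs $O(n^{1-2/p})$ rounds per query — which turns out to be too slow when multiplied by $\sqrt{n^{1/p}}$. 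The fix, mirroring the triangle warm-up, is to first route \emph{once} all edges $E(T^v,S_i)$ only in a batched form, or more simply to choose the search granularity differently: partition $V$ into $n^{(1-1/p)/2}$ batches rather than into the $n^{1/p}$ parts, balancing the per-query routing cost against the number $\sqrt{n^{(1-1/p)/2}}$ of queries, so that each query costs $\tilde O(n^{(1-1/p)/2})$ rounds and the whole search costs $\tilde O(n^{(1-1/p)/2}\cdot n^{(1-1/p)/4})$...

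Let me restate this cleanly: I would search over batch indices $b\in[B]$ where the $B$ batches partition $V$ into sets of size $n/B$; the checking procedure for batch $b$ has each node $v$ learn the edges between $T^v$ (size $O(n^{1-1/p})$) and batch $b$ (size $n/B$), which is a routing task of total size $O(n^{1-1/p}\cdot n\cdot (n/B)/n) = O(n^{2-1/p}/B)$ over all destinations, costing $\tilde O(n^{1-1/p}/B)$ rounds via Lemma~\ref{lem:lenzen_routing}, after which $v$ checks locally whether any of its listed $p$-cliques gains a $(p+1)$-st vertex from batch $b$. By Lemma~\ref{lemma:qsearchv0} the search costs $\tilde O(\sqrt{B}\cdot n^{1-1/p}/B) = \tilde O(n^{1-1/p}/\sqrt B)$; this is minimized by taking $B$ as large as the "$\le n$ queries" constraint $\sqrt B\le\ldots$ allows, but since the per-query cost $n^{1-1/p}/B$ must be at least $\Omega(1)$, the optimum is $B = n^{1-1/p}$, giving search cost $\tilde O(n^{(1-1/p)/2})$ rounds. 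Adding the listing cost $\tilde O(n^{1-2/p})$ yields the claimed $\tilde O(n^{(1-1/p)/2} + n^{1-2/p})$.

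For correctness: if $G$ has a $(p+1)$-clique $\{u_1,\ldots,u_{p+1}\}$, then by completeness of the listing algorithm the $p$-clique $\{u_1,\ldots,u_p\}$ is listed by the node $v$ with $\{u_1,\ldots,u_p\}\subseteq\bigtimes(T_1^v,\ldots,T_p^v)$, so $u_1,\ldots,u_p\in T^v$; when the search reaches the batch $b$ containing $u_{p+1}$, node $v$ learns all edges between $T^v$ and that batch — in particular all of $\{u_i,u_{p+1}\}$ — and detects the $(p+1)$-clique, so $f(b)=1$ and the search succeeds w.h.p. Conversely a reported extension is a genuine $(p+1)$-clique. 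The main obstacle I expect is exactly the load analysis of the checking procedure — pinning down that each edge is needed by only $O(n^{1-1/p})$ nodes and hence that the Lenzen parameter is $\tilde O(n^{1-1/p}/B)$ rather than something larger — together with verifying that the per-query floor $\Omega(1)$ is what forces $B = n^{1-1/p}$ and thus the $n^{(1-1/p)/2}$ term; everything else is routing bookkeeping and a direct invocation of Lemma~\ref{lemma:qsearchv0}.
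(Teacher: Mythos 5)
Your proposal is correct and essentially identical to the paper's proof: after running the $K_p$-listing algorithm of~\cite{DLP12}, the paper also partitions $V$ into $n^{1-1/p}$ batches of size $n^{1/p}$, uses one Grover search (Lemma~\ref{lemma:qsearchv0}) over batch indices with an $O(1)$-round Lenzen-routing check per query, and obtains the same $\tilde{O}(n^{(1-1/p)/2}+n^{1-2/p})$ bound. Your exploratory detour over the parts $S_i$ and the explicit load accounting are just a more verbose route to the same parameter choice.
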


\begin{proof}
First, we let each node $v \in V$  collect all edges in $E(T^v \times T^v)$, and list all cliques in $\bigtimes(T_1^v, \ldots, T_p^v)$, as described above.

Our goal now is for each node $v \in V$
to check whether there is a node $u \in V$ that
extends a $p$-clique found by node $v$ in the previous step into a $(p+1)$-clique.
In other words,
node $v$ searches for $(p+1)$-cliques in $\bigtimes(T_1^v, \ldots, T_p^v, V)$.

To speed up the search, we partition the nodes of $V$ into $n^{1-1/p}$ batches, $Q_1,\ldots,Q_{n^{1-1/p}}$,
each of size $n^{1/p}$.
We then use a Grover search (Lemma~\ref{lemma:qsearchv0}) to find an index $i$ such that $\bigtimes(T_1^v,\ldots, T_p^v, Q_i)$
contains a $(p+1)$-clique (or determine that there is no such $i$).

We denote by $A_i$ a query for determining whether $\bigtimes(T_1^v, \ldots, T_p^v, Q_i)$ contains a $(p+1)$-clique, and we now describe how each query is implemented.
%We now describe how each ``query'' $A_i$ to determine whether $\bigtimes(T_1^v, \ldots, T_p^v, Q_i)$ contains a $(p+1)$-clique is implemented.
In $A_i$, node $v$ needs to collect all edges in $E(T, Q_i)$; recall that $T = \bigcup_{j = 1} T_j^v$, and
its size is $|T| \leq p\cdot n^{1-1/p}$.
Since $|Q_i| = n^{1/p}$, the number of edges $v$ needs to learn about is $n^{1-1/p} \cdot n^{1/p} = n$,
and using Lenzen's routing scheme.
this can be done in a single round.
After learning the relevant edges, node $v$ locally searches for a $(p+1)$-clique contained in the edges it has learned;
it outputs ``yes'' if and only if it finds one.

The running time of the resulting algorithm is as follows. Executing the $K_p$-listing algorithm from \cite{DLP12} requires $O(n^{1-2/p})$ rounds. The Grover search requires $\sqrt{ n^{1-1/p } }$ queries, each requiring a single round.
Using Lemma~\ref{lemma:qsearchv0}, the overall running time is therefore $\tilde{O}(n^{(1-p)/2})$
for the quantum search, in addition to the $O(n^{1-2/p})$ rounds required for $K_p$-listing.
Specifically, for $K_4$-detection ($p = 3$),
the total running time we obtain is $\tilde{O}(n^{1-2/3} + n^{1/2 - 1/6}) = \tilde{O}(n^{1/3})$.
\end{proof}

Theorem \ref{thm:plus1} gives, combined with the algorithm of Section \ref{sec:tri}, the statement of Theorem \ref{th:clique-clique} in the introduction.

\subsection[$K_{p+t}$-detection from $K_p$-listing]{\boldmath  $K_{p+t}$-Detection from $K_p$-Listing}

Now, we prove the following theorem, which shows how to obtain $K_{p+t}$-detection from $K_p$-listing.
\begin{theorem}
	Given a graph $G = (V, E)$ and values $p$, $t$ such that 
	$t \leq 1 + \log \left(p -1 \right)$,
	it is possible to detect if $G$ contains an instance of $K_{p + t}$ in
	$\tilde{O}( n^{ (1 - 1/p)(1 - 1/2^{t}) } + n^{1-2/p} )$
	rounds of communication in the \qclique, w.h.p. 
	\label{thm:newCCextend}
\end{theorem}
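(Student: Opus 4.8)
I would prove this exactly as Theorem~\ref{thm:plus1}, but extend by $t$ vertices via a $t$-level nested quantum search (Lemma~\ref{lemma:qsearch}) instead of a single Grover search, keeping the ``opened-box'' viewpoint of Section~\ref{sec:CCfast}. First run the listing algorithm of~\cite{DLP12}: in $O(n^{1-2/p})$ rounds every node $v$ learns $E(T^v\times T^v)$ and enumerates all $p$-cliques of $\bigtimes(T^v_1,\ldots,T^v_p)$; write $S^p_v$ for this list and $S^p=\bigcup_v S^p_v$ (all $p$-cliques of $G$). It remains to detect whether some clique of $S^p$ extends to a $(p+t)$-clique. Fix, for each $\ell\in\{1,\dots,t\}$, an arbitrary balanced partition of $V$ into $|X_\ell|=n^{(1-1/p)/2^{t-\ell}}$ batches $V^\ell_1,\dots,V^\ell_{|X_\ell|}$, each of size $|B_\ell|:=n/|X_\ell|=n^{1-(1-1/p)/2^{t-\ell}}$ (rounding as in the rest of the paper). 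The nested search ranges over $(x_1,\dots,x_t)\in X_1\times\cdots\times X_t$ and looks for $u_1\in V^1_{x_1},\dots,u_t\in V^t_{x_t}$ and $\kappa\in S^p$ with $\kappa\cup\{u_1,\dots,u_t\}$ a $(p+t)$-clique. Note $|B_t|=n^{1/p}$ while the largest batch is $|B_1|=n^{1-(1-1/p)/2^{t-1}}$; the hypothesis $t\le 1+\log(p-1)$ is exactly the inequality $|B_1|\le n^{1-1/p}$ (equivalently $(1-1/p)/2^{t-1}\ge 1/p$, i.e.\ $2^{t-1}\le p-1$), and this is the only place it enters.

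\textbf{The nested procedures.} After level $\ell$ node $v$ holds the set $S^{p+\ell}_v$ of $(p+\ell)$-cliques of $G$ whose first $p$ vertices form a clique of $S^p_v$ and whose $i$-th extra vertex lies in $V^i_{x_i}$ (with $S^{p}_v$ as the base case). The setup $\mathcal{S}_\ell$: the leader broadcasts $x_\ell$ (and, carried inside the setup data, $x_1,\dots,x_{\ell-1}$), so that each $v$ knows $B_\ell:=V^\ell_{x_\ell}$ and $U_\ell:=T^v\cup B_1\cup\cdots\cup B_{\ell-1}$; then $v$ collects all edges with one endpoint in $B_\ell$ and the other in $U_\ell$; finally $v$ forms $S^{p+\ell}_v$ by appending to each clique of $S^{p+\ell-1}_v$ every vertex of $B_\ell$ adjacent to all of its vertices (every such adjacency lies in the just-collected edges together with edges $v$ already holds). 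The final check $\mathcal{C}$ has every node report to the leader whether $S^{p+t}_v=\emptyset$, in $\tilde O(1)$ rounds, and the leader returns $1$ iff some $S^{p+t}_v\ne\emptyset$. Correctness is immediate: for any $(p+t)$-clique $\{w_1,\dots,w_p,u_1,\dots,u_t\}$, the map $g$ of~\cite{DLP12} sends the block-pattern of $(w_1,\dots,w_p)$ to a unique node that lists $\{w_1,\dots,w_p\}$; taking $x_i$ to be the index of the level-$i$ batch containing $u_i$, that clique survives into $S^{p+t}$, so the search succeeds w.h.p.\ by Lemma~\ref{lemma:qsearch}.

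\textbf{Complexity and main obstacle.} Since $t$ is constant, $|T^v|\le p\,n^{1-1/p}$, and (by the hypothesis) every $|B_i|\le|B_1|\le n^{1-1/p}$, we always have $|U_\ell|=O(n^{1-1/p})$, so the number of edges $v$ collects in $\mathcal{S}_\ell$ is $O(|B_\ell|\cdot n^{1-1/p})$. Granting that $\mathcal{S}_\ell$ therefore runs in $s_\ell=\tilde O(|B_\ell|\cdot n^{1-1/p}/n)=\tilde O(n^{(1-1/p)(1-1/2^{t-\ell})})$ rounds, a backward induction using Lemma~\ref{lemma:qsearch} --- with $R_{t+1}=\tilde O(1)$ for $\mathcal{C}$, $R_\ell=\sqrt{|X_\ell|}\,(s_\ell+R_{\ell+1})$, and the observation $s_\ell=R_{\ell+1}$ --- yields $R_\ell=\tilde O(n^{(1-1/p)(1-1/2^{t-\ell+1})})$, so the search costs $R_1=\tilde O(n^{(1-1/p)(1-1/2^{t})})$; adding the $O(n^{1-2/p})$ listing cost gives the theorem. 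The step I expect to be the main obstacle is establishing this bound on $s_\ell$ with Lenzen's scheme (Lemma~\ref{lem:lenzen_routing}). The edges between $B_\ell$ and $T^v$ are handled directly: an edge meeting a DLP-block $S_k$ is wanted by only the $O(n^{1-1/p})$ nodes whose $g$-value uses $k$, so --- sourcing each such edge from its endpoint outside $B_\ell$ --- both source and destination loads are $O(|B_\ell|n^{1-1/p})=O(s_\ell\cdot n)$. The edges between $B_\ell$ and $B_1\cup\cdots\cup B_{\ell-1}$ may, in the worst case, have to reach all $n$ nodes, which would overload the $O(n^{1-1/p})$ nodes initially holding them; this is repaired by first redistributing these $O(|B_\ell|n^{1-1/p})$ edges to $O(|B_\ell|n^{-1/p})$ per node (cheap, since the initial per-node load is only $O(|B_\ell|)$) and then performing an all-to-all broadcast, which is a parameter-$\tilde O(|B_\ell|n^{-1/p})$ instance of Lemma~\ref{lem:lenzen_routing}. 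Carrying out these load computations, and checking that the precondition $t\le 1+\log(p-1)$ is precisely what keeps $|U_\ell|=O(n^{1-1/p})$ at every level, is the technical heart of the proof.
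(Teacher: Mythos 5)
Your proposal matches the paper's proof essentially step for step: the same $K_p$-listing preprocessing from \cite{DLP12}, the same level-$\ell$ partitions of $V$ into $n^{(1-1/p)/2^{t-\ell}}$ batches of size $n^{1-(1-1/p)/2^{t-\ell}}$, the same setup algorithms collecting $E\bigl(T^v\cup X_1^{j_1}\cup\cdots\cup X_{\ell-1}^{j_{\ell-1}},\, X_\ell^{j_\ell}\bigr)$, the same backward-induction recurrence yielding $\tilde{O}\bigl(n^{(1-1/p)(1-1/2^{t})}\bigr)$ for the search, and the same role of $t\leq 1+\log(p-1)$ in keeping every batch of size at most $n^{1-1/p}$. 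The routing-load concern you single out is handled correctly by your redistribute-then-broadcast fix and does not affect the bound; the paper simply invokes Lenzen's scheme with per-node load $O(n^{1-1/p-r_\ell})$, so your argument is complete and equivalent.
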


Optimizing our results over all choices of $p,t$ yields the following:
\begin{corollary}
	\label{cor:newCCextend}
	The time required to solve $K_{p+t}$-detection in $\qclique$ is
		\[\tilde{O}\left( \min_{p,t: t\leq 1+\log\left( p-1 \right) } \left(n^{ (1 - 1/p)(1 - 1/2^{t}) } + n^{1-2/p} \right)\right).\]
\end{corollary}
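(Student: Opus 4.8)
The plan is to obtain the corollary as an immediate consequence of Theorem~\ref{thm:newCCextend}, by optimizing over the free parameters. Fix the clique size $q$ that we wish to detect. For every admissible pair $(p,t)$ of integers with $p+t=q$, $t\ge 1$, and $t\leq 1+\log(p-1)$ (these are exactly the pairs ranged over by the minimum in the corollary statement), Theorem~\ref{thm:newCCextend} supplies a \qclique algorithm that solves $K_q$-detection with success probability at least $1-1/\poly(n)$ in $\tilde{O}\!\left( n^{(1-1/p)(1-1/2^{t})} + n^{1-2/p} \right)$ rounds. The corollary simply asserts that we may choose, among all these, the decomposition that gives the smallest exponent.

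First I would note that since $q$ is a constant, the set of admissible pairs $(p,t)$ with $p+t=q$ is finite, and the exponent $\max\{(1-1/p)(1-1/2^{t}),\, 1-2/p\}$ attached to each pair depends only on $p$ and $t$, not on the input graph. Hence every node can, with no communication, locally determine the pair $(p^\star,t^\star)$ minimizing this exponent, and all nodes then jointly invoke the algorithm of Theorem~\ref{thm:newCCextend} with parameters $(p^\star,t^\star)$. Its round complexity is $\tilde{O}\!\left( n^{(1-1/p^\star)(1-1/2^{t^\star})} + n^{1-2/p^\star} \right) = \tilde{O}\!\left( \min_{p,t}\left( n^{(1-1/p)(1-1/2^{t})} + n^{1-2/p} \right)\right)$, the minimum being over admissible pairs, which is the claimed bound. (Alternatively, one could run all constantly-many admissible algorithms in sequence and output the logical OR of their answers; a union bound over a constant number of $1/\poly(n)$ failure probabilities preserves success probability $1-1/\poly(n)$, and the total running time is still $\tilde{O}$ of the maximum, hence of the minimum up to the constant number of summands, of the individual complexities.)

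I do not expect a genuine obstacle here: the only point meriting a word of care is that the parameter choice be made identically at every node, which is immediate because it depends solely on the publicly known quantities $q$ and $n$. The real content of this section lies in Theorem~\ref{thm:newCCextend} itself; the corollary is the bookkeeping that packages that theorem into a single optimized bound, and I would keep its proof to a couple of sentences along the lines above.
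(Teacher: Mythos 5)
Your proposal is correct and matches the paper's treatment: the paper derives Corollary~\ref{cor:newCCextend} directly from Theorem~\ref{thm:newCCextend} by optimizing over all admissible pairs $(p,t)$ with $t \leq 1+\log(p-1)$, exactly as you do. Your extra remarks (locality of the parameter choice, constant number of candidate pairs, union bound if one runs all candidates) are sound but not needed beyond the one-line observation the paper makes.
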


\begin{proof}[Proof of Theorem~\ref{thm:newCCextend}]
$\,$\\
\textbf{High-level overview.} 
We again first let each node $v \in V$  collect all edges in $E(T^v \times T^v)$,
and list all cliques in $\bigtimes(T_1^v, \ldots, T_p^v)$.
Now we want to solve $(p+t)$-clique detection, having already listed all $p$-cliques using the algorithm of \cite{DLP12}.
%We describe an algorithm for doing so, which works for some values of $p,t$, but not all (more on this restriction below).

To do this, we generalize the +1 extension from the previous section using a \emph{recursive} search procedure,
$\proc{check}_k(P_1,\ldots,P_k,F)$ where $k \leq p+t$.
The procedure is given $k$ sets of nodes, $P_1,\ldots,P_k \subseteq V$,
and the set $F$ of all edges in $E(P_1,\ldots,P_k)$;
it recursively checks whether there is a $k$-clique $(u_1,\ldots,u_k) \in \bigtimes(P_1, \ldots, P_k)$
that can be extended into a $(p+t)$-clique.
The recursion begins at $k = p$, with the sets $P_1 = T_1^v,\ldots,P_p = T_p^v$, and the edges that node $v$ collected
in the pre-processing stage;
the recursion terminates at $k = p+t$,
where node $v$ enumerates all $(p+t)$-tuples $(u_1,\ldots,u_{p+t}) \in P_1 \times \ldots \times P_{p+t}$
and checks using the edges of $F$ whether one of them forms a $(p+t)$-clique.
Next, we explain how each internal level of the recursion is implemented.
Fix $t$ predetermined partitions of the nodes of $V$, where the $i$-th partition is given by $X_i^1,\ldots,X_i^{n^{r_i}}$
for a parameter $r_i \in (0,1)$, and all subsets in each partition have the same size:
$|X_i^1| = \ldots = |X_i^{n^{r_i}}| = n^{1-r_i}$.
(We assume again for simplicity that $n^{r_i}$ is an integer.)
For $i \in \set{0,\ldots,t-1}$, when $\proc{check}_{p+i}(P_1,\ldots,P_{p+i}, F)$
is called,
node $v$ uses a Grover search to check if there is some set $X_{i+1}^j$
such that $\proc{check}_{p+i}(P_1,\ldots,P_{p+i}, X_{i+1}^j, F')$ returns true,
where 
$F' = E(P_1,\ldots,P_{p+i}, X_{i+1}^j)$.
Each query $A_j$ in the Grover search takes the index $j \in [n^{r_i}]$ as input,
learns all the edges in $F' \setminus F$ (the edges of $F$ are already known),
and calls $\proc{check}_{p+i+1}(P_1,\ldots,P_{p+i}, X_{i+1}^j, F')$.\vspace{2mm}

%\noindent{\bf Formal statement of the algorithm as a nested quantum search.}
%\paragraph{Formal statement of the algorithm as a nested quantum search.}
\noindent\textbf{The formal algorithm.} 
Fix $p \geq 2$ and $t \geq 1$ satisfying the constraint.
Also, for each $i = 1,\ldots,t$,
fix a partition $X_i^1,\ldots,X_i^{n^{r_i}}$ of $V$,
where
$$r_i = (1-1/p)/2^{t-i},$$
and $|X_i^1| = \ldots = |X_i^{n^{r_i}}| = n^{1-r_i}$.
The partitions are arbitrary, but fixed in advance, so no communication is necessary to compute them.
Note that $r_i \geq 1/p$ for each $i = 1,\ldots,t$.

Our algorithm is a depth-$t$ nested quantum search, over the search space 
$[ n^{r_1} ] \times \ldots \times [ n^{r_t}]$,
with the goal function $f : [ n^{r_1} ] \times \ldots \times [ n^{r_t} ] \rightarrow \set{0,1}$
such that 
	$f(i_1,\ldots,i_t) = 1$
	iff
	there exists a $p$-clique $v_1,\ldots,v_p \in V^p$,
	and there exist $u_1 \in X_1^{i_1},\ldots,u_t \in X_t^{i_t}$,
	such that 
	$v_1,\ldots,v_p,u_1,\ldots,u_t$ is a $(p+t)$-clique.

	Before the quantum search begins, the nodes list all $p$-cliques in the graph;
	each node $v$ learns all edges in $E(T_1^v,\ldots,T_p^v)$ as described in the beginning of the section.

	The setup algorithms $\mathcal{S}_1,\ldots,\mathcal{S}_{t-1}$
	are as follows: in $\mathcal{S}_i$,
	the leader disseminates the current partial search query $(j_1,\ldots,j_i) \in [n^{r_1}] \times \ldots \times [n^{r_i}]$.
	The nodes use Lenzen's routing scheme (Lemma~\ref{lem:lenzen_routing})
	so that each $v \in V$ learns the edges
	$E(T_1^v \cup \ldots \cup T_p^v \cup X_1^{j_1} \cup \ldots \cup X_{i-1}^{j_{i-1}}, X_i^{j_i})$.
	The running time $s_i$ of $\mathcal{S}_i$ is
	$O(n^{1-1/p-r_i})$:
	we have $|T_1^v \cup \ldots \cup T_p^v| = O(n^{1-1/p})$ (treating $p$ as a constant),
	$|X_1^{j_1} \cup \ldots \cup X_{i-1}^{j_{i-1}}| = O(n^{1-1/p})$ 
	%\ktodo{We need the union here, no? And also since $t$ is a constant?}
	(by our assumption that $r_i \geq 1/p$ and treating $t$ as a constant), and 
	$|X_i^{j_i}| = O(n^{1-r_i})$.
	By Lemma~\ref{lem:lenzen_routing},
	the information can be routed in $O(n^{1 - 1/p + 1 - r_i - 1}) = O(n^{1-1/p - r_i})$ rounds.

	The final classical evaluation procedure, $\mathcal{C}$,
	is as follows: the leader disseminates the query $(j_1,\ldots,j_t) \in [n^{r_1}] \times \ldots \times [n^{r_t}]$
	to all nodes,
	and the nodes use Lenzen's routing scheme so that each $v \in V$ learns the edges
	$E(T_1^v \cup \ldots \cup T_p^v \cup X_1^{j_1} \cup \ldots \cup X_{t-1}^{j_{t-1}}, X_t^{j_t})$.
	Together with the edges in $\bigcup_{i < t} \mathsf{setup}_i^t(j_1,\ldots,j_i)$,
	each node $v$ now knows $E(T_1^v,\ldots,T_p^v,X_1^{j_1},\ldots,X_t^{j_t})$.
	Node $v$ now checks whether it sees a $(p+t)$-clique, and informs the leader.
	The running time of $\mathcal{C}$
	is $O(n^{1-1/p-r_t})$ (as above).

	By Lemma~\ref{lemma:qsearch}, the overall running time of the quantum search is given by
	\begin{align*}
		\!\!\!\!\!\!\tilde{O}\!\left( n^{r_1/2} \!\left( n^{1-1/p-r_1} \!+\! n^{r_2/2} \!\cdot \!\left( n^{1-1/p-r_2} \!+\!  \ldots \!+\!  n^{r_{t-1}/2} \left(n^{1-1/p-r_{t-1}} \!+\! n^{r_t/2} \cdot n^{1-1/p-r_t}  \right)\! \right)\!\right) \!\right)\!.
	\end{align*}
	Denote by $g(i)$ the running time of the innermost $t - i + 1$ levels of the search, starting
	from level $i$ up to level $t$:
	\begin{align*}
		g(i) = 
		\tilde{O}\left( n^{r_i/2} \left( n^{1-1/p-r_i} + \ldots +  n^{r_{t-1}/2} \left(n^{1-1/p-r_{t-1}} + n^{r_t/2} \cdot n^{1-1/p-r_t}  \right)  \right) \right).
	\end{align*}
	We claim, by backwards induction on $i$, that
	\begin{equation}
		g(i) = \tilde{O}( n^{(1-1/p)(1 - 1/2^{t-i+1})} ).
		\label{eq:gi}
	\end{equation}
	For the base case, $i = t$, we have 
	\begin{align*}
		g(i)
		= \tilde{O}\left( n^{r_t/2} \cdot n^{1-1/p-r_t} \right)
		%\\
		%&
		= \tilde{O}\left( n^{1-1/p-r_t/2} \right)
		=
		\tilde{O}\left( n^{1-1/p-(1-1/p)/2} \right)
		=
		\tilde{O}\left( n^{(1-1/p)/2} \right),
	\end{align*}
	and indeed this matches~\eqref{eq:gi}.
	For the induction step, suppose the claim holds for $i$, and consider $g(i-1)$:
	\begin{align*}
		g(i-1)
		&=
		\tilde{O}\left( n^{r_{i-1}/2} \left( n^{1-1/p-r_{i-1}} + g(i) \right) \right)
		\\
		&=
		\tilde{O}\left( n^{(1-1/p)/2^{t-i+1}/2} \left( n^{1-1/p-(1-1/p)/2^{t-i+1}} + n^{(1-1/p)(1 - 1/2^{t-i+1})} \right) \right)
		\\
		&=
		\tilde{O}\left( n^{(1-1/p)/2^{t-i+2}} \cdot 2n^{(1-1/p)(1 - 1/2^{t-i+1})} \right)
		\\
		&=
		\tilde{O}\left( n^{(1-1/p)(1 - 1/2^{t-i+2})} \right),
	\end{align*}
	again matching~\eqref{eq:gi}.

	We thus obtain that the running time of the entire quantum search is
	\begin{equation*}
		g(1) = \tilde{O}\left( n^{(1-1/p)(1 - 1/2^{t})} \right),
	\end{equation*}
	and together with the complexity of listing the initial $p$-clique, we get the claimed complexity.
\end{proof}

\noindent\textbf{A note on the restriction on \boldmath{$p,t$}.}
%\paragraph{Restriction on $p,t$.}
Recall that we assumed throughout that $1 - r_i \leq 1 - 1/p$ for each $i = 1,\ldots,t$,
so that the sets of nodes we handle at each step never exceed the size of the sets in the $p$-clique-listing
pre-processing step ($n^{1 - 1/p}$).
In other words, we must have
	$r_i \geq \frac{1}{p}$.
The value of $r_i$ decreases with $i$, so it suffices to require that $r_1 \geq 1/p$.
We now see that not every choice of $p,t$ respects this condition:
since we set $r_1 = \frac{1}{2^{t-1}}\left( 1 - \frac{1}{p} \right)$,
we require that $p,t$ satisfy
	$t \leq 1 + \log \left(p -1 \right)$.

For example, suppose we want to start by lising all triangles ($p = 3$), and extend to $K_5$ (i.e., $t = 2$).
This is possible, since $
	2 \leq 1 + \log\left( 3 - 1 \right).
	$
However, extending from triangles to $K_6$ using the approach described here is not possible,
because if we take $p = 3$ and $t = 3$,
we get $
	3 \not \leq 1 + \log \left( 3 - 1 \right).
	$

Although we think of $p,t$ as constants, we observe that when $p$ is large and $t \approx \log(p)$,
the overall running time of the quantum part of our scheme is $\approx n^{1-2/p}$.
Thus, the cost of extending from $p$-cliques to $(p+t)$-cliques roughly matches the cost of the classical $p$-clique listing,
which is $n^{1-2/p}$, and we get a quantum algorithm for $K_{p+t}$-detection that roughly matches the cost of
classical $K_p$-listing.

%\input{quantum_cycle} % cycles in congest

% !TEX root = ./main.tex
\section{Circuit-Complexity Barrier to Proving an {\boldmath $\Omega(n^{3/5+\alpha})$}-Round Lower Bound for Clique Detection in the  \congest Model}\label{sec:barrier}
%\section{Circuit Complexity Barrier to Proving an {\boldmath $\Omega(n^{3/5+\alpha})$} Lower Bound for {\boldmath $K_p$}-Detection  in \congest}\label{sec:barrier}
In this section, we show that for any $\alpha > 0$, a lower bound of the form $\Omega(n^{3/5+\alpha})$ for $K_p$-detection in (non-quantum) $\CONGEST$ would imply strong circuit complexity results, far beyond the current state-of-the-art. As mentioned in the introduction, this barrier also applies to the $\qcongest$ model.

Given a constant integer $\alpha > 0$, let $\mathcal{F}_{\alpha}$ be the family of Boolean circuits $F$ where:
\begin{itemize}
	\item $F$ has $M\log{M}$ input wires for some integer $M$, which we interpret	as encoding a graph $\bar{G}$ on $O(M)$ edges.
	\item All gates in $F$ have constant fan-in and fan-out.
	\item $F$ has depth at most $R = M^{\alpha}$.
	\item $F$ has at most $M^{1+\alpha}$ wires in total.
\end{itemize}

\begin{theorem}
	\label{thm:clique_barrier}
	If $K_p$-detection has a round complexity of $\Omega(n^{3/5+\alpha})$ for some constant $\alpha > 0$,
	then there is no circuit family contained in $\mathcal{F}_{\alpha}$ that solves $K_p$-detection.
	%then there exists no circuit family $\mathcal{F}$ which for any integers $s,n \geq 1$, integer $c = O(\poly(n))$ and value $\delta \in (0,1]$ solves $K_p$-detection, where the circuit $F$ for values $(c,n,\delta)$ has the following properties:
\end{theorem}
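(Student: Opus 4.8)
The plan is to prove the contrapositive: \emph{if} some circuit family $\{F_M\} \subseteq \mathcal{F}_{\alpha}$ solves $K_p$-detection, then there is a classical \CONGEST algorithm for $K_p$-detection whose round complexity is $o(n^{3/5+\alpha})$, contradicting the hypothesized $\Omega(n^{3/5+\alpha})$ lower bound. The entire content is a \emph{simulation lemma} (in the spirit of the earlier circuit barriers): a Boolean circuit of depth $R$ and $W$ wires with constant fan-in and fan-out, whose input wires encode the communication graph $G$ (on $m = \Theta(M)$ edges), can be evaluated by a \CONGEST algorithm running on $G$ itself in $\tilde{O}\!\left(R + W/(n\mu)\right)$ rounds, where $\mu = \Theta(m/n)$ is the average degree --- at least on a well-connected graph, and on a general graph after the usual expander-decomposition reduction.

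First I would perform the standard preprocessing. By the diameter reduction of Lemma~\ref{thm:diameter_reduce} we may assume $G$ has $\polylog(n)$ diameter; the nodes then elect a leader $v^{\ast}$, build a BFS tree, and (via Claim~\ref{claim:2approx}) agree in $\polylog(n)$ rounds on a globally known near-balanced assignment of the $m$ edges of $G$ to the $M = \Theta(m)$ input slots of $F_M$, so that every node can locally feed the $O(\log n)$-bit wire values for its own incident edges with no communication. Two easy special cases are disposed of immediately: if $\tilde{O}(m) = o(n^{3/5})$ then $v^{\ast}$ collects all of $G$ and decides locally, and if $\alpha$ is so large that $n^{3/5+\alpha}$ already exceeds the unconditional $\tilde{O}(n^{1-2/p})$ upper bound of \cite{Censor+SODA21} (which happens for $\alpha \ge 2/5 - 2/p$) then the claimed lower bound is false outright and there is nothing to prove. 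So from now on $m$ is polynomially large and $\alpha$ is small enough that $\alpha < 3/5$.

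The core step is the layered simulation of $F_M$. Since $\{F_M\}$ is non-uniform, we fix in advance a balanced assignment of the $O(M^{1+\alpha})$ gates of $F_M$ (constant fan-in/fan-out makes gate count and wire count equal up to a constant factor) to the $n$ nodes, $O(M^{1+\alpha}/n)$ gates per node, which costs no communication. We then evaluate the circuit one layer at a time, from the inputs upward; since the depth is at most $R = M^{\alpha}$, there are at most $R$ layers. Evaluating the $k$-th layer requires, for each of its gates, the $O(1)$ values on its incoming wires, held by the nodes that produced them; since fan-out is constant, each produced value is delivered to only $O(1)$ places, so layer $k$ is a routing instance with $O(W_k)$ messages, where $W_k$ is the number of wires entering layer $k$, balanced so each node is source and destination of $\tilde{O}(W_k/n)$ of them. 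We realize each such instance cluster-by-cluster using the expander decomposition of Lemma~\ref{lem:decomp_soda} and the expander routing of Theorem~\ref{thm-expander-routing} (recursing on the $\le \epsilon m$ leftover edges, clearing low-degree nodes by the exhaustive $2$-hop routine of Claim~\ref{claim:2hopNeighborhood}, and balancing the load of unequal-degree nodes exactly as with the \emph{computation units} of Section~\ref{sec:qcongest}); on a complete communication graph Lenzen's scheme (Lemma~\ref{lem:lenzen_routing}) applies directly. Crucially, because $\sum_k W_k = W$, summing the $\tilde{O}(W_k/(n\mu) + 1)$ cost of the individual layers gives a total of only $\tilde{O}(R + W/(n\mu))$ rounds --- the depth is paid \emph{additively}, not multiplicatively, against the routing of all wires. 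Finally $v^{\ast}$ reads the output gate and announces the answer; correctness is inherited verbatim from $F_M$.

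It remains to substitute the parameters. With $M = \Theta(m)$, $R = M^{\alpha}$, $W = M^{1+\alpha}$, and $\mu = m/n$, the simulation lemma gives a round complexity of $\tilde{O}\!\left(m^{\alpha} + m^{1+\alpha}/(n \cdot m/n)\right) = \tilde{O}(m^{\alpha})$, and since $m \le n^{2}$ this is $\tilde{O}(n^{2\alpha}) = o(n^{3/5+\alpha})$ for every $\alpha < 3/5$ --- precisely the regime left after the special cases above. This contradicts the assumed $\Omega(n^{3/5+\alpha})$ lower bound and completes the contrapositive. I expect the main obstacle to be the simulation lemma itself, and within it the routing step: making a single layer of the circuit route efficiently on an \emph{arbitrary} polylogarithmic-diameter communication graph (rather than a convenient expander) is what forces the detour through the expander decomposition, the recursion on the remainder set, and the computation-unit-style balancing of nodes whose degrees span a wide range; once that infrastructure is in place, the exponent bookkeeping in the final step is routine.
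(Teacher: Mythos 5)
Your argument hinges entirely on the ``simulation lemma'' you state in the second paragraph: that an arbitrary constant fan-in/fan-out circuit with $W$ wires and depth $R$ can be evaluated on the communication graph $G$ itself in $\tilde{O}(R + W/(n\mu))$ rounds, with the depth paid \emph{additively}. This is the genuine gap. The circuit takes the whole graph as input and its gates are assigned to nodes with no locality, so evaluating a layer is a \emph{global} routing task; on a graph that merely has $\polylog(n)$ diameter this can be impossible at the claimed rate. Concretely, if $G$ is two $(n/2)$-cliques joined by a single edge, a layer whose wires connect gates placed on opposite sides needs $\Omega(W_k)$ rounds, not $\tilde{O}(W_k/(n\mu))$. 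Your proposed fix --- run the expander decomposition of Lemma~\ref{lem:decomp_soda}, route ``cluster-by-cluster'' with Theorem~\ref{thm-expander-routing}, and ``recurse on the $\le \epsilon m$ leftover edges'' --- does not address this: the decomposition guarantees \emph{few} inter-cluster communication edges, which is precisely the bottleneck, and the wires of $F_M$ crossing between clusters are in no way bounded by that inter-cluster capacity. Recursing on residual edges is meaningful for the \emph{detection problem} (a clique not found inside clusters must live in the residual edge set), but it is meaningless for a generic routing/evaluation task, where deferred messages still have to be delivered. A further warning sign is that your conclusion is far too strong: an $\tilde{O}(m^{\alpha}) = \tilde{O}(n^{2\alpha})$-round whole-graph simulation would make the $3/5$ in the theorem irrelevant and would yield a barrier at essentially any polynomial exponent, whereas the known simulation (Lemma~\ref{lem:circ-to-cong}, from~\cite{EFFKO19}) only works on low-mixing-time graphs, has a \emph{multiplicative} dependence on the depth $R$, and carries a $2^{O(\sqrt{\log n\log\log n})}$ overhead.

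The paper's proof avoids simulating a global circuit altogether, and this is where the exponent $3/5$ actually comes from. It reduces $K_p$-detection on general graphs to $K_p$-detection inside high-conductance clusters: small clusters ($|V_C|\le n^{3/5}$) are handled classically (Lemma~\ref{lem:handle_small_clusters}); for large clusters, each node $v\in V_C$ pulls in $\tilde{O}(n^{3/5}\deg v)$ relevant external edges (Lemma~\ref{lem:insert_edges_into_clusters}), and any $K_p$ not visible after this step lies in a residual set $E_r\cup\bigcup_C E(S_C,S_C)$ of at most $2\epsilon m$ edges, on which one recurses. Only then is the circuit invoked, separately inside each cluster, via Lemma~\ref{lem:circ-to-cong} with input allocation $c=\tilde{\Theta}(n^{3/5})$ wires per unit of degree (to accommodate the pulled-in edges), giving $O(n^{3/5+\alpha+o(1)})$ rounds overall. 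If you want to salvage your write-up, you need to replace your simulation lemma with this problem-specific reduction plus the cited cluster-internal simulation; the additive-depth, whole-graph version cannot be proved with the tools in this paper.
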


In other words, a lower bound of the form $\Omega(n^{3/5+\alpha})$ for any constant $\alpha > 0$
rules out the existence of a circuit with polynomial depth and super-linear size (in terms of wires)
for an explicit problem, $K_p$-detection. Such a lower bound would be a major breakthrough in circuit complexity (see~\cite{EFFKO19} for discussion). This gives the statement of Theorem~\ref{thm:Kp_obstacle} in the introduction.

The proof of Theorem~\ref{thm:clique_barrier} is essentially a reduction from $K_p$-detection
in general graphs to $K_p$-detection in high-conductance graphs. It was shown in~\cite{EFFKO19} that in high-conductance graphs, we can efficiently \emph{simulate} a circuit from the family $\mathcal{F}_{\alpha}$ of certain size, so given such a circuit for $K_p$-detection, we can use it to solve the distributed $K_p$-detection problem in high-conductance graphs. To reduce from general graphs to high-conductance graphs, we use a similar approach to the clique detection algorithm of~\cite{Censor+SODA21}: we first compute an expander-decomposition procedure from to partition the graph into high-conductance clusters with few inter-cluster edges, and have the nodes of the clusters learn the relevant sets of edges in the clusters' neighborhood, so that we either find a $K_p$-copy or can remove most of edges of the graph without removing a $K_p$-copy, and recurse on the remaining edges. 

%in $\tilde{O}(n^{3/5})$ rounds,
%This reduction is comprised of two steps: The first is an expander decomposition procedure of~\cite{Censor+SODA21} which partitions the graph into high-conductance clusters and a small set of inter-cluster edges, and has the nodes of the clusters learn a relevant set of edges in the clusters' neighborhood in $\tilde{O}(n^{3/5})$ rounds. The second step of the reduction utilizes a theorem of \cite{EFFKO19} which shows how to convert lower bounds on high conductance graphs into lower bounds in circuit complexity. 

\subsection{Reducing {\boldmath $K_p$}-Freeness From General Graphs to High-Conductance Graphs}
To reduce from general graphs to high-conductance components, we use several results from~\cite{Censor+SODA21}. We require slightly different parameters than the ones used in~\cite{Censor+SODA21}, so for the sake of completeness,
we re-analyze these procedure for in Appendix~\ref{sec:subprocedures}.

For a set of vertices $A$, we denote by $N(A) = (\bigcup_{v \in A} N(v))\setminus A$ the set of neighbors of $A$.

Assume that we have a procedure $\mathcal{A}$ which solves $K_p$-detection on high conductance graphs, where nodes may have some additional input edges, which are not communication edges. In this subsection, we show how to solve $K_p$-detection in general graphs using $\mathcal{A}$, and in the next subsection we show how to obtain an efficient procedure $\mathcal{A}$ assuming the existence of a circuit family $\mathcal{F}_{\alpha}$ that solves $K_p$-detection. The reduction now proceeds as follows. 

First, the network runs the decomposition of Lemma~\ref{lem:decomp_soda} to obtain a partition of the edges to set $E_m,E_r$ with the described properties. 

Next, we show that in parallel for all small clusters $C$, the network can can determine efficiently whether there is a $K_p$-copy containing a node of $C$. This is again a variation on a claim from~\cite{Censor+SODA21}.

\begin{lemma}[\cite{Censor+SODA21}, Lemma 4.2]\label{lem:handle_small_clusters}
	There is a procedure that terminates after $O(n^{3/5})$ rounds in which the network can determine whether there is a $K_p$-copy which at least one of its nodes is contained in a cluster of size $|V_C| \leq n^{3/5}$.
\end{lemma}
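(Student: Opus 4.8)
\textbf{Proof plan for Lemma~\ref{lem:handle_small_clusters}.}

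The plan is to handle, in parallel over all clusters $C$ of size $|V_C| \leq n^{3/5}$, the detection of any $K_p$-copy that has at least one vertex in $C$. Note that such a copy need \emph{not} be contained in $C$; its other $p-1$ vertices may lie in $N(C)$, or even outside $N^+(C) = C \cup N(C)$ is impossible only if they are disconnected from that vertex, so in fact all $p$ vertices lie in $N^+(C)$. The key observation is that since $|V_C| \leq n^{3/5}$ and, by Lemma~\ref{lem:decomp_soda}, every node $v \in V_C$ satisfies $\deg_{V\setminus V_C}(v) = \tilde{O}(\deg_{V_C}(v))$, the total number of edges incident to $C$ (i.e., with at least one endpoint in $V_C$) is $\tilde{O}(|E_C|) = \tilde{O}(n^{3/5}\cdot \mu_C)$ where $\mu_C$ is the average degree inside $C$; but we also need edges \emph{among} the nodes of $N(C)$, which could be many. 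So the approach is: each node $v \in V_C$ learns its induced $2$-hop neighborhood, i.e., all edges incident to $v$ and all edges between two neighbors of $v$. Since $|V_C| \le n^{3/5}$, each $v \in V_C$ has $\deg(v) = \tilde{O}(n^{3/5})$ (more precisely, after the earlier exhaustive-search stages we may assume degrees relevant here are bounded appropriately; in the worst case $\deg(v) \le n$, but the cluster being small forces the relevant quantity to be $O(n^{3/5})$ via the decomposition guarantees), and by Claim~\ref{claim:2hopNeighborhood} node $v$ can learn its induced $2$-hop neighborhood in $O(\deg(v)) = \tilde{O}(n^{3/5})$ rounds of \congest. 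Any $K_p$-copy with a vertex $v \in V_C$ is entirely contained in the induced $2$-hop neighborhood of $v$ (all $p-1$ other vertices are neighbors of $v$, and all edges among them connect two neighbors of $v$), so $v$ sees it locally and can report it.

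First I would invoke the degree bounds: because $C$ has $|V_C|\le n^{3/5}$ vertices and conductance at least $(\eps/\log n)^{a_\gamma}$, the mixing time of $C$ is $\polylog(n)$; combined with item~2 of Lemma~\ref{lem:decomp_soda} ($\deg_V(v) = \tilde O(\deg_{V_C}(v))$ for $v\in V_C$) and the fact that $\deg_{V_C}(v)\le |V_C|-1 \le n^{3/5}$, we get $\deg_V(v) = \tilde O(n^{3/5})$ for every $v\in V_C$. Hence the precondition of Claim~\ref{claim:2hopNeighborhood} holds with $\alpha = \tilde O(n^{3/5})$. Second, I would observe that all small clusters run this in parallel: a node $v$ may belong to one cluster but be a $2$-hop neighbor queried by nodes of several clusters; as noted in the proof of Claim~\ref{claim:2hopNeighborhood}, concurrent $2$-hop explorations do not interfere beyond a constant factor, since each directed edge carries at most one query and one response per round, so the parallel cost remains $\tilde O(n^{3/5})$. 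Third, after the learning phase, each $v\in V_C$ locally enumerates all $p$-subsets of its $1$-hop neighborhood together with itself and checks, using the edges it has learned, whether any forms a $K_p$; it outputs ``yes'' if so. Finally, using the BFS tree on the (diameter-reduced) network, the leader aggregates whether any node output ``yes'' in $\polylog(n)$ rounds.

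For correctness: if there is a $K_p$-copy $\kappa$ with $\kappa \cap V_C \neq \emptyset$ for some small cluster $C$, pick $v \in \kappa \cap V_C$; then $\kappa \setminus \{v\} \subseteq N(v)$ and every edge of $\kappa$ either touches $v$ or joins two vertices of $N(v)$, so $\kappa$ lies entirely within the induced $2$-hop neighborhood that $v$ learned, and $v$ detects it. Conversely, a node outputs ``yes'' only upon seeing an actual $K_p$ among learned (hence genuine) edges, so there are no false positives. The round complexity is $\tilde O(n^{3/5})$ for the $2$-hop learning plus $\polylog(n)$ for aggregation, giving $O(n^{3/5})$ up to the $\polylog$ factors absorbed in the $\tilde O$ — and the lemma as stated in~\cite{Censor+SODA21} already accounts for this. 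The main obstacle I anticipate is justifying the degree bound $\deg_V(v) = \tilde O(n^{3/5})$ cleanly: it relies on combining the small-cluster size with item~2 of Lemma~\ref{lem:decomp_soda}, and one must be careful that this holds for \emph{all} nodes of $C$ and not just on average; the average-degree guarantee alone is insufficient, which is exactly why the per-node guarantee in item~2 of the decomposition is essential here. A secondary point requiring care is bounding the parallel congestion across overlapping $2$-hop neighborhoods of distinct small clusters, but this follows the same edge-scheduling argument as in Claim~\ref{claim:2hopNeighborhood}.
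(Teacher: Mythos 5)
The paper does not give its own proof of this statement: it cites it as Lemma~4.2 of~\cite{Censor+SODA21}, both in Section~\ref{sec:barrier} and again (with parameter $\beta$ in place of $3/5$) as Lemma~\ref{lem:handle_small_clusters_app} in the appendix, with no accompanying argument. Your reconstruction is correct and uses exactly the two tools the paper makes available for this purpose: from item~2 of Lemma~\ref{lem:decomp_soda} you get $\deg_{V\setminus V_C}(v) = \tilde O(\deg_{V_C}(v))$ for every $v\in V_C$, and since $\deg_{V_C}(v) < |V_C| \le n^{3/5}$ this gives the \emph{per-node} bound $\deg_V(v) = \tilde O(n^{3/5})$, which is precisely the precondition of Claim~\ref{claim:2hopNeighborhood}; any $K_p$ through $v$ then lies inside $v$'s induced $2$-hop neighborhood. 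You correctly flag the two places that need care --- that the per-node (not average) degree guarantee is what makes the argument go through, and that concurrent $2$-hop explorations interleave without asymptotic congestion overhead --- and both are handled as in the proof of Claim~\ref{claim:2hopNeighborhood}. One small tidiness remark: the mention of the cluster's conductance and mixing time is superfluous for this particular lemma, since Claim~\ref{claim:2hopNeighborhood} routes over direct edges of the communication graph and never invokes expander routing; only the degree-balance guarantee from item~2 of the decomposition is actually used. Also, strictly your bound is $\tilde O(n^{3/5})$ rather than $O(n^{3/5})$ because of the $(\log n/\epsilon)^{a_\gamma}$ factor from the decomposition; this polylog discrepancy is harmless for Theorem~\ref{thm:clique_barrier}, which only targets $\Omega(n^{3/5+\alpha})$.
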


Following this, by Lemma~\ref{lem:insert_edges_into_clusters} there is an $\tilde{O}(n^{3/5})$-round procedure in which for each cluster $C$, and each vertex $v \in V_C$ in the cluster learns the neighborhood induced by their neighbors outside~$C$.

Therefore, we turn our attention to large clusters. For a cluster $C$, let $S^*_C = \{u \in N(C) \mid \deg_{V_C}(u) \leq \deg_{V\setminus V_C}(u)/n^{3/5} \}$ be the set of nodes
$v$ adjacent to the cluster,
such that the number of neighbors $v$ has in $C$ is larger by a factor of at least $n^{3/5}$ than the number of neighbors $v$
has outside $C$. Let $S_C = \{u \in V_C \mid \deg_{S^*_C(v)} \geq n^{3/5}\}$ be the set of nodes in $C$ that have at least $n^{3/5}$ neighbors in $S^*_C$. Informally, we think of $S_C$ as \emph{bad} cluster nodes, and $S^*_C$ as bad nodes neighboring the cluster.

We claim that each large cluster can efficiently ``pull in'' all the ``non-bad'' edges in its vicinity that it needs to check for copies of $K_p$ that involve a node in $C \setminus S_C$:

\begin{lemma}[Implicit in \cite{Censor+SODA21}; Lemma~\ref{lem:insert_edges_into_clusters_app} in Appendix~\ref{sec:subprocedures}]\label{lem:insert_edges_into_clusters}
	In $\tilde{O}(n^{3/5})$ rounds, the network can for each cluster $C$ with $|V_C| \geq n^{3/5}$ partition $E(N(C)\setminus S^*_C,N(C))$ into sets $\{E_{v,C}\}_{v \in C}$, each of size at most $\tilde{O}(n^{3/5}\deg{v})$, and have each node $v \in V_C$ learn the edges of $E_{v,C}$, and in addition, each node $v \in C \setminus S_C$ learns the set $E'_{v,C} = E(N(v) \cap S^*_C,N(v) \cap S^*_C)$. 
\end{lemma}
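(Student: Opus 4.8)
The statement is a routing claim with no quantum content, and I would prove it using only the near‑regularity of clusters from Lemma~\ref{lem:decomp_soda}(2) together with the defining inequalities of $S^*_C$ and $S_C$. First, in $O(1)$ rounds every node tells each neighbor which cluster it belongs to (a node incident to an $E_m$‑edge lies in a unique cluster; otherwise it reports ``none''). Then any $u\in N(C)$ can locally compute $\deg_{V_C}(u)$ and $\deg_{V\setminus V_C}(u)=\deg(u)-\deg_{V_C}(u)$, hence decide whether $u\in S^*_C$, and flag this to each of its neighbors inside $V_C$; consequently each $v\in V_C$ learns $B_v:=N(v)\cap S^*_C$, and in particular whether $|B_v|\ge n^{3/5}$, i.e.\ whether $v\in S_C$. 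This costs $O(1)$ rounds.

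\textbf{Constructing the sets $E_{v,C}$.} The role of the definition of $S^*_C$ is exactly that a non‑bad neighbor $u\in N(C)\setminus S^*_C$ satisfies $\deg_{V\setminus V_C}(u)<n^{3/5}\cdot\deg_{V_C}(u)$: it has few edges leaving $C$ and many edges into $C$. I would have each such $u$ route the (at most $n^{3/5}\deg_{V_C}(u)$) descriptions of its edges that leave $C$ straight into the cluster, spreading them as evenly as possible over its $\deg_{V_C}(u)$ edges into $V_C$, so that at most $n^{3/5}$ descriptions traverse each of those edges; no expander routing is needed for this step, and it completes in $O(n^{3/5})$ rounds. On the receiving side a node $v\in V_C$ sees at most $n^{3/5}$ descriptions per incident edge leaving $C$, and by Lemma~\ref{lem:decomp_soda}(2) it has $\deg_{V\setminus V_C}(v)=\tilde{O}(\deg_{V_C}(v))=\tilde{O}(\deg(v))$ such edges, so it learns a set of $\tilde{O}(n^{3/5}\deg(v))$ edges within $O(n^{3/5})$ rounds. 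Breaking ties so that an edge with two non‑bad endpoints is sent only by the lower‑ID one, the sets delivered to the nodes of $V_C$ form a partition of $\{\{u,u'\}\in E:\ u\in N(C)\setminus S^*_C,\ u'\in V\setminus V_C\}$; restricting each set to $E(N(C)\setminus S^*_C,N(C))$ only discards edges whose far endpoint is not adjacent to $C$ (this costs nothing in the load bound and is enforced by the same bookkeeping as in~\cite{Censor+SODA21}; retaining the slightly larger set would in any case be harmless, as such edges lie in no $K_p$ meeting $C\setminus S_C$). This yields $\{E_{v,C}\}_{v\in V_C}$ with the required sizes.

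\textbf{Constructing the sets $E'_{v,C}$.} For a good node $v\in C\setminus S_C$ we need it to learn every edge among $B_v=N(v)\cap S^*_C$, and the definition of $S_C$ gives $|B_v|<n^{3/5}$. I would run the restricted variant of the neighborhood‑learning routine of Claim~\ref{claim:2hopNeighborhood}: writing $B_v=\{w_1,\ldots,w_{|B_v|}\}$, in iteration $i$ the node $v$ sends the identifier of $w_i$ along all $|B_v|<n^{3/5}$ of its edges to $B_v$ and collects from each $w_j\in B_v$ a single bit indicating whether $\{w_i,w_j\}\in E$; after $|B_v|<n^{3/5}$ iterations $v$ knows $E'_{v,C}$. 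Each edge $\{v,w_j\}$ carries only $O(1)$ messages per iteration, and distinct good nodes querying a common node of $S^*_C$ use distinct edges, so this runs in $O(n^{3/5})$ rounds simultaneously over all clusters.

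\textbf{Complexity and the main obstacle.} The three phases together take $O(1)+O(n^{3/5})+O(n^{3/5})=\tilde{O}(n^{3/5})$ rounds (the diameter/leader reductions and the expander decomposition used elsewhere in the section are cheaper). I expect the only genuinely delicate points to be: (i) checking the load bounds in the $E_{v,C}$ phase, which amounts to combining $\deg_{V\setminus V_C}(u)<n^{3/5}\deg_{V_C}(u)$ for $u\notin S^*_C$ with the guarantee $\deg_{V\setminus V_C}(v)=\tilde{O}(\deg_{V_C}(v))$ of Lemma~\ref{lem:decomp_soda}; and (ii) cleanly restricting the delivered edge sets to exactly $E(N(C)\setminus S^*_C,N(C))$, i.e.\ certifying that the far endpoint of a routed edge is itself adjacent to $C$ — the one place where a bit of extra bookkeeping (following~\cite{Censor+SODA21}) is needed. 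Everything else reduces to local degree counting and single‑hop communication.
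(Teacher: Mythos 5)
Your proof is correct and follows essentially the same route as the paper's own argument (Lemma~\ref{lem:insert_edges_into_clusters_app}): each non-bad neighbor $u \in N(C)\setminus S^*_C$ pushes its $\deg_{V\setminus V_C}(u) < n^{3/5}\deg_{V_C}(u)$ outside edges directly over its edges into $C$ in batches of size $O(n^{3/5})$, and each good node $v \in C\setminus S_C$ learns $E'_{v,C}$ via a query--reply exchange with its fewer than $n^{3/5}$ neighbors in $S^*_C$. Your extra bookkeeping (announcing cluster membership, tie-breaking to get a genuine partition, and observing that delivering a superset of $E(N(C)\setminus S^*_C,N(C))$ is harmless) merely makes explicit details the paper leaves implicit.
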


Next, each cluster runs the $K_p$-detection procedure $\mathcal{A}$ with the edges of the cluster as inputs, and where every node is given as additional input the set of edges $E_{v,C} \cup E'_{v,C}$.

We note that by Lemma~\ref{lem:insert_edges_into_clusters}, the only edges from the vicinity of a cluster $C$ not contained in $\bigcup_{v \in C} E_{v,C} \cup \bigcup_{v \in C\setminus S_C} E'_{v,C}$ are edges in $S^*_C \times S^*_C$ whose endpoints do not share a neighbor in $C \setminus S_C$. Therefore, if no $K_p$-copy was found in the clusters, we know that any $K_p$-copy must be contained in the edge set
\[
E_{\var{recurse}} = E_r \cup \bigcup_{C:|V_C| \geq n^{3/5}}  E(S_C,S_C), 
\]

In the following lemma, we note that the total number of edges in between any two bad cluster nodes of in all of the clusters is not large.

\begin{lemma}[Implicit in \cite{Censor+SODA21}; Lemma~\ref{lem:size_of_sc_app} in Appendix~\ref{sec:subprocedures}]
	We have $\bigcup_{C:|V_C| \geq n^{3/5}}  |E(S_C,S_C)| \leq \epsilon m$.
\end{lemma}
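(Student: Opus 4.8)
The plan is to adapt the argument of~\cite{Censor+SODA21} to the threshold $n^{3/5}$ used here; the three properties of the decomposition of Lemma~\ref{lem:decomp_soda} are exactly what make it go through, and I would only reproduce the careful version in Appendix~\ref{sec:subprocedures}.

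First I would use the defining property of $S_C$: every $v\in S_C$ has at least $n^{3/5}$ neighbors in $S^*_C$, so $|E(S_C,S^*_C)|\ge n^{3/5}\,|S_C|$ for every cluster $C$ with $|V_C|\ge n^{3/5}$. Since $S_C\subseteq V_C$ whereas $S^*_C\subseteq N(C)\subseteq V\setminus V_C$, each such edge has exactly one endpoint inside $V_C$ and therefore cannot be an internal edge of any connected component of $E_m$; hence $E(S_C,S^*_C)\subseteq E_r$. As each edge of $E_r$ is counted this way for at most two clusters (one for each of its endpoints that may serve as the $S_C$-side), summing over all large clusters and using $|E_r|\le\epsilon m$ from Lemma~\ref{lem:decomp_soda} gives
\[
n^{3/5}\sum_{C:\,|V_C|\ge n^{3/5}}|S_C|\;\le\;\sum_{C:\,|V_C|\ge n^{3/5}}|E(S_C,S^*_C)|\;\le\;2|E_r|\;\le\;2\epsilon m,
\]
so the total number of ``bad'' cluster nodes is at most $2\epsilon m/n^{3/5}$.

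The harder step — and the one I expect to be the main obstacle — is converting this bound on the \emph{number} of bad cluster nodes into a bound on the number of edges \emph{among} them, since a node of $S_C$ may a priori have very high degree inside its own cluster. For this I would bring in the other two guarantees of Lemma~\ref{lem:decomp_soda}. The degree-balance property, $\deg(v)=\tilde O(\deg_{V_C}(v))$ for $v\in V_C$, lets one relate edges inside a cluster to all edges incident to the cluster's vertices, and in particular bounds the number of edges leaving $C$ by $\tilde O(|E_C|)$. The conductance bound $\Phi(C)\ge(\epsilon/\polylog n)^{O(1)}$ is what prevents $S_C$ from simultaneously sending $\ge n^{3/5}|S_C|$ edges out of $C$ and being densely connected inside $C$: applying the conductance inequality to the cut $(S_C,V_C\setminus S_C)$ bounds $|E(S_C,S_C)|$ by $\tilde O$ of the boundary of $S_C$ inside $C$, which is in turn controlled by the edges leaving $C$ and hence, after summing, by $|E_r|$. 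Carrying this out — and tracking how the polylogarithmic factors from the conductance interact with $\epsilon$ — is precisely the re-analysis of~\cite{Censor+SODA21} I would place in Appendix~\ref{sec:subprocedures}; the conclusion is $\sum_{C:\,|V_C|\ge n^{3/5}}|E(S_C,S_C)|\le\epsilon m$ after absorbing the constants into the decomposition parameter $\epsilon$.
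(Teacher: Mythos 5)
There is a genuine gap, and it comes from where you place the difficulty. Your first step only exploits the definition of $S_C$ (each bad cluster node has $\geq n^{3/5}$ neighbors in $S^*_C$) together with $|E_r|\leq \epsilon m$, which yields only $\sum_C |S_C| \leq 2\epsilon m/n^{3/5}$. That is too weak to finish: even with the trivial bound $|E(S_C,S_C)|\leq |S_C|^2$ you get something of order $\epsilon^2 m^2/n^{6/5}$, which can be far larger than $\epsilon m$. The paper's proof (Lemma~\ref{lem:size_of_sc_app}) instead uses \emph{both} defining properties: every $u\in S^*_C$ satisfies $\deg_{V\setminus V_C}(u)\geq n^{3/5}\deg_{V_C}(u)$, so charging the edges $E(S_C,S^*_C)$ against the global degree sum gives $2m \geq \sum_{u\in S^*_C}\deg_{V\setminus V_C}(u) \geq n^{3/5}|E(S^*_C,S_C)| \geq n^{6/5}|S_C|$, i.e., the per-cluster bound $|S_C|\leq 2m/n^{6/5}$. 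With that, the ``harder step'' you anticipate does not exist: one simply uses $|E(S_C,S_C)|\leq |S_C|^2$ and the fact that there are at most $O(n^{2/5})$ vertex-disjoint clusters of size $\geq n^{3/5}$, and sums. No conductance or routing property of the decomposition enters this lemma at all; it is a pure counting argument, and the factor you are missing is exactly the extra $n^{3/5}$ coming from the degree-ratio in the definition of $S^*_C$.

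Your proposed replacement for the second step also does not work as sketched. The conductance guarantee of Lemma~\ref{lem:decomp_soda} lower-bounds the cut $E_m(S_C,V_C\setminus S_C)$ in terms of the volume of the smaller side; at best (and only when $S_C$ is the smaller side) this bounds $|E(S_C,S_C)|$ by $\tilde{O}(|E_m(S_C,V_C\setminus S_C)|)$. But those cut edges are \emph{internal} cluster edges of $E_m$; they are not cluster-crossing edges, they are not contained in $E_r$, and nothing in the decomposition bounds them by the edges leaving $C$ -- so the claim that this boundary ``is controlled by the edges leaving $C$ and hence by $|E_r|$'' fails. Likewise, the degree-balance property goes the wrong way for your purpose: it upper-bounds a cluster node's outside degree by $\polylog(n)$ times its inside degree, so it cannot cap the inside-cluster degrees of $S_C$ nodes. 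The correct fix is simply to incorporate the $S^*_C$ degree-ratio as above and drop the conductance argument.
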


Therefore the set $E_{\var{recurse}}$ is of size at most $2\epsilon m$, and we recurse on this set. If a node found no $K_p$-copy in all recursion steps, it accepts.  

\subsection{Simulating Circuits On High-Conductance Graphs}

Next, we use the theorem presented in \cite{EFFKO19} for relating lower bounds in high conductance networks to circuit complexity.

\begin{lemma}[\cite{EFFKO19}] \label{lem:circ-to-cong}
	Let $U$  be a graph $U=(V,E)$ with $|V|=\bar{n}$ vertices and $|E|=\bar{n}^{1+\rho}$ edges, with mixing time $\tau_{\textrm{mix}}$. Suppose that $f:\{0,1\}^{c\bar{n}^{1+\rho}\log{n}}\rightarrow \{0,1\}$ for some integer $c > 1$ is computed by a circuit $\mathcal{C}$ of depth $R$, comprising of gates with constant fan-in and fan-out, and at most $O(c \cdot s\cdot \bar{n}^{1+\rho}\log{\bar{n}})$ wires. Then for any input partition that assigns to each vertex in $U$ no more than $c \deg(v)\log{n}$ input wires, there is an $O(R \cdot c \cdot s\cdot \tau_{\textrm{mix}}\cdot 2^{O(\sqrt{\log \bar{n}\log\log \bar{n}})})$-round protocol in the $\CONGEST$ model on $U$
	that computes $f$ under the input partition. 
\end{lemma}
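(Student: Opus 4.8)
\textbf{Proof proposal (for Lemma~\ref{lem:circ-to-cong}).}
The plan is to simulate the circuit $\mathcal{C}$ layer by layer, distributing its gates among the nodes of $U$ in proportion to their degrees, and using a mixing-time-based routing scheme to carry wire values from one layer to the next. The key structural point is that, since every gate has constant fan-in and fan-out, moving between two consecutive layers is nothing but a routing task whose total volume is proportional to the number of wires between those layers, and this is exactly the regime in which near-mixing-time routing is cheap.

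First I would fix, as part of the (publicly known) protocol, a canonical assignment of the gates of each layer of $\mathcal{C}$ to the nodes of $U$. Since the whole circuit has $O(c\cdot s\cdot \bar{n}^{1+\rho}\log\bar{n})$ wires and $\sum_{v}\deg(v)=2|E|=2\bar{n}^{1+\rho}$, we can assign gates so that each node $v$ is responsible, in total over all layers, for $O(c\cdot s\cdot \deg(v)\log\bar{n})$ wires; hence in any single layer node $v$ produces at most $O(c\cdot s\cdot \deg(v)\log\bar{n})$ wire values and consumes at most $O(c\cdot s\cdot \deg(v)\log\bar{n})$ values from the previous layer. Agreeing on this assignment and letting each node learn, for every one of its wires, the identifier of the node holding the other endpoint, depends only on $\mathcal{C}$ (which is public) and on the degree sequence, and can be arranged in a preprocessing phase whose cost is absorbed into the final bound. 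For the inputs, the hypothesis gives node $v$ at most $c\deg(v)\log n$ input wires together with the corresponding bits; these are precisely the wires entering layer~$1$, so the first routing step delivers each input bit to the node owning the gate that consumes it. Then, iterating over the $R$ layers, once node $v$ holds all values entering its layer-$\ell$ gates it evaluates them locally, obtains the values on its outgoing wires, and one routing step moves those to the owners of the layer-$(\ell+1)$ gates; after the last layer one node holds the output bit and broadcasts it.

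The crux is the cost of each routing step. Packing the $\Theta(\log\bar{n})$-bit bundles of Boolean values into single $O(\log\bar{n})$-bit messages, a routing step is an instance in which each node is a source and a destination of $O(c\cdot s\cdot \deg(v))$ messages. Invoking the routing scheme for graphs of mixing time $\tau_{\textrm{mix}}$ from~\cite{Ghaffari+PODC17} — which delivers any such $\deg(v)$-bounded (up to the factor $c\cdot s$) instance in $O(c\cdot s)\cdot\tau_{\textrm{mix}}\cdot 2^{O(\sqrt{\log\bar{n}\log\log\bar{n}})}$ rounds with high probability — one layer transition costs $O(c\cdot s\cdot\tau_{\textrm{mix}}\cdot 2^{O(\sqrt{\log\bar{n}\log\log\bar{n}})})$ rounds, and summing over the $R$ layers (the preprocessing phase being dominated by a single such step) yields the claimed $O(R\cdot c\cdot s\cdot\tau_{\textrm{mix}}\cdot 2^{O(\sqrt{\log\bar{n}\log\log\bar{n}})})$ bound.

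The main obstacle I anticipate is not the routing primitive, which is used as a black box, but the bookkeeping around the gate assignment: making the per-layer message loads provably proportional to $\deg(v)$ so the routing bound applies, and ensuring that each node can determine the destination of each of its wires without expensive global communication. Both are handled by the canonical degree-proportional assignment together with a short preprocessing phase of cost $O(\tau_{\textrm{mix}}\cdot 2^{O(\sqrt{\log\bar{n}\log\log\bar{n}})})$, but this is the part of the argument that requires care, and it is where I would expect the detailed proof in~\cite{EFFKO19} to spend most of its effort.
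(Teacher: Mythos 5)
Your proposal is correct, and it follows essentially the same argument as the source of this lemma: the paper itself imports the statement from~\cite{EFFKO19} without proof, and the proof there is exactly your layer-by-layer simulation, with gates assigned to nodes in proportion to their degrees so that each layer transition becomes a $\deg(v)$-proportional routing instance solved via the near-mixing-time routing scheme of~\cite{Ghaffari+PODC17}, giving $O(c\cdot s\cdot\tau_{\textrm{mix}}\cdot 2^{O(\sqrt{\log \bar{n}\log\log \bar{n}})})$ rounds per layer and the stated bound over $R$ layers. Your derivation of the per-layer load from the total wire bound and your handling of the canonical, publicly computable gate assignment (with preprocessing absorbed into the bound) match the intended argument, so no gap to report.
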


Taking $c = \tilde{\Theta}(n^{3/5})$, $R=n^{\alpha/2}, s = \Theta(n^{\alpha/2})$ and $f$ to be the function that given the encoding of edges outputs whether the graph on these edges is $K_p$-free, we get that given a constant fan-in fan-out gate circuit with $O(s\cdot c \cdot n^{1+\delta}\log{n})$ wires and depth $R$ that solves $f$, we could solve $K_p$-detection in \congest in $O(n^{3/5+\alpha+o(1)})$ rounds, since each node has indeed $\tilde{O}(n^{3/5}\deg_C(v))$ input edges to encode, we can encode its input with at most $\tilde{O}(n^{3/5}\deg_C(v))$ input wires. Therefore, if the round complexity of $K_p$-detection in \congest is $\Omega(n^{3/5+\alpha})$ for some constant $\alpha > 0$, there are circuit family in $\mathcal{F}_{\alpha'}$ for some constant $\alpha' > 0$ which solves $K_p$-detection, which implies Theorem~\ref{thm:clique_barrier}.

\bibliographystyle{plain}
\bibliography{References}

\appendix
\section{Re-Analysis of Sub-procedures of~\cite{Censor+SODA21}}
\label{sec:subprocedures}
For completeness, we repeat the analysis done in~\cite{Censor+SODA21} for the sub-procedure which partitions the graph into clusters and has all but a few nodes in most clusters learn its neighborhood,  with different parameters.

%	\begin{definition}
%		An $(\epsilon,\delta,\alpha)$-expander decomposition of a graph $G=(V,E)$ is defined as a partition of the vertices into sets, $C_1,\dots,C_l,V_s$, which satisfy the following conditions:
%		\begin{enumerate}
%			\item Each vertex set (cluster) $C_i$  has conductance $\Phi(C_i)\geq (\epsilon/\log{n})^\alpha$, and has average degree at least $n^\delta$.
%			\item For any cluster $C_i$ and node $v \in V_{C_i}$, $\deg_{V_{C_i}}(v) \geq (\epsilon/\log{n})^\alpha\deg_{V \setminus V_{C_i}}(v)$.
%			\item The graph induced by the edge set $E_s = E(V_s,V)$ has arboricity of $E_s$ is at most $2n^\delta/\epsilon$.
%			\item Let $E_r = \bigcup_{i\neq j} E(C_i,C_j)$. Then, $|E_r| \leq \epsilon m$. 		
%		\end{enumerate}
%	\end{definition}

\begin{lemma}[based on \cite{CS19}, Theorem 1 and optimal cliques]\label{lem:decomp_soda_app}
	Let $\delta > 0$ such that $m = n^{1+\delta}$. For any $\epsilon \in (0,1)$, and constant $\gamma \in (0,1)$, there is a constant $\alpha_\gamma > 0$ dependent only on $\gamma$, such that a decomposition can be constructed in $\tilde{O}(n^\gamma)$ rounds, with high probability, in which the edges of the graph are partitioned into two sets, $E_m,E_r$, that satisfy the following conditions:
	\begin{enumerate}
		\item Each connected component (cluster) $C$ of $E_m$ has conductance $\Phi(C)\geq (\epsilon/\log{n})^{\alpha_\gamma}$, and has average degree at least $ \epsilon n^\delta$.
		\item For any cluster $C$ and node $v \in V_C$, $\deg_{V_C}(v) \geq (\epsilon/\log{n})^{\alpha_\gamma}\deg_{V \setminus V_C}(v)$.
		\item \label{prop_er_size} $E_r \leq \epsilon m$. 
	\end{enumerate}
\end{lemma}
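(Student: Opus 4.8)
The plan is to obtain the decomposition by invoking the distributed expander-decomposition theorem of~\cite{CS19} (its Theorem~1, in the form stated as Lemma~\ref{lem:decomp_soda} in the main text) essentially as a black box, and then post-processing it with one purely local pruning step that additionally enforces the average-degree lower bound. Concretely, I would first run the decomposition of~\cite{CS19} with accuracy parameter $\epsilon/2$; within $\tilde{O}(n^\gamma)$ rounds, with high probability, this produces a partition $E = E_m^{(0)} \cup E_r^{(0)}$ in which (i) every connected component $C$ of $E_m^{(0)}$ has conductance $\Phi(C) \ge (\epsilon/(2\log n))^{\alpha_\gamma}$, (ii) every vertex $v$ keeps an $(\epsilon/(2\log n))^{\alpha_\gamma}$-fraction of its degree inside its own component, that is $\deg_{V_C}(v) \ge (\epsilon/(2\log n))^{\alpha_\gamma}\deg_{V\setminus V_C}(v)$, and (iii) $|E_r^{(0)}| \le (\epsilon/2)m$. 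Properties (i) and (ii) are precisely the conductance part of condition~1 and condition~2 of the lemma, up to replacing $\alpha_\gamma$ by $\alpha_\gamma+1$ to absorb the factor $2^{-\alpha_\gamma}$: since $\epsilon/\log n < 1$, one has $(\epsilon/(2\log n))^{\alpha_\gamma} = (\epsilon/\log n)^{\alpha_\gamma}\cdot 2^{-\alpha_\gamma} \ge (\epsilon/\log n)^{\alpha_\gamma+1}$ whenever $\log n \ge 2^{\alpha_\gamma}$, i.e. for all $n$ large enough (with a threshold depending only on $\gamma$).

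It then remains to enforce the second half of condition~1 (average degree at least $\epsilon n^\delta$). Call a component $C$ of $E_m^{(0)}$ \emph{light} if its average degree $2|E_C|/|V_C|$ is below $\epsilon n^\delta$, and \emph{heavy} otherwise, where $E_C, V_C$ are the edge/vertex sets of $C$. For every light component, move all of its internal edges into $E_r$, and leave the heavy components untouched. Since the components partition a subset of $V$, the internal edges discarded in this way number at most
\[
\sum_{C \text{ light}} |E_C| < \frac{\epsilon n^\delta}{2}\sum_{C}|V_C| \le \frac{\epsilon n^\delta}{2}\cdot n = \frac{\epsilon}{2}\,m,
\]
so the final remainder set $E_r = E_r^{(0)} \cup \{\text{internal edges of light components}\}$ satisfies $|E_r| \le (\epsilon/2)m + (\epsilon/2)m = \epsilon m$, which is condition~3. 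Crucially, this step only deletes whole components of $E_m^{(0)}$: it alters neither the conductance of a surviving (heavy) cluster nor the quantity $\deg_{V\setminus V_C}(v)$ for any vertex $v$ of a heavy cluster, because every discarded edge has both endpoints in a light component and hence outside every heavy cluster. Therefore conditions~1 and~2 persist for all remaining clusters, which now also have average degree at least $\epsilon n^\delta$. The step is cheap distributively: each component has conductance $1/\mathrm{polylog}(n)$, hence diameter $O(\log n/\Phi(C)) = \mathrm{polylog}(n)$, so the nodes of a component can build a BFS tree, aggregate $|E_C|$ and $|V_C|$, and broadcast the light/heavy verdict in $\mathrm{polylog}(n)$ rounds (communicating only along component edges); a node in a light component then locally re-labels its incident internal edges as $E_r$-edges. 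The overall round complexity is $\tilde{O}(n^\gamma) + \mathrm{polylog}(n) = \tilde{O}(n^\gamma)$.

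The one place where genuine care is needed — the step where one must look inside~\cite{CS19} rather than treat it as a black box — is condition~2, the inside-vs-outside degree balance, which is the ``boundary-linked''/trimming guarantee of the decomposition. I would verify that the distributed construction of~\cite{CS19} (as invoked in~\cite{Censor+SODA21}) indeed outputs it with the claimed $(\epsilon/\log n)^{\alpha_\gamma}$ dependence; if the cited theorem instead provides the form $\deg_{V_C}(v) \ge \phi\,\deg_V(v)$, this is equivalent up to constants, since $\deg_{V_C}(v) \ge \phi\,\deg_V(v)$ gives $\deg_{V_C}(v) \ge \tfrac{\phi}{1-\phi}\deg_{V\setminus V_C}(v) \ge \phi\,\deg_{V\setminus V_C}(v)$; and if neither applies directly, one appends a standard trimming post-processing, which on a graph of $\mathrm{polylog}(n)$ mixing time removes only an $O(\phi)$-fraction of extra edges (foldable into $E_r$ by adjusting the constant in~(iii)) and runs in $\mathrm{polylog}(n)$ rounds. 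Everything else — the conductance bound, $|E_r|\le \epsilon m$, the $\tilde{O}(n^\gamma)$ running time, and the new average-degree bound — then follows from the black-box decomposition theorem together with the elementary charging argument above, with no further obstacle.
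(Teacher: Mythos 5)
Your proposal is correct and follows essentially the same route as the paper's proof: invoke the expander decomposition of~\cite{CS19} with parameter $\epsilon/2$ as a black box, move every cluster of average degree below $\epsilon n^\delta$ wholesale into $E_r$, and bound the discarded edges by $n\cdot(\epsilon n^\delta)/2=(\epsilon/2)m$ via the same charging argument, noting that removing whole clusters preserves conductance and the inside-vs-outside degree balance of the survivors. Your extra care about absorbing the factor $2$ into $\alpha_\gamma$ and about the exact form of the degree-balance guarantee in~\cite{CS19} is a slightly more cautious bookkeeping of details the paper treats implicitly, not a different argument.
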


%	
%	\begin{lemma}[based on \cite{CS19}, Theorem 1 and optimal cliques]\label{lem:decomp_soda}
%		For any constants $\epsilon,\delta, \gamma \in (0,1)$, there is a constant $\alpha_\gamma > 0$ dependent only on $\gamma$, such that an $(\epsilon,\delta,\alpha_\gamma)$-expander decomposition can be constructed in $\tilde{O}(n^\gamma)$ rounds, with high constant probability.
%	\end{lemma}

\begin{proof}
	%		First, the network repeatedly performs a removal process, in which each node $v$ of degree at most $4n^\delta/\epsilon$ removes itself from the graph for the remainder of this process, and add its edges to $E_s$. The number of removed vertices is counted using a BFS spanning tree of the (entire) network. If less than half of the remaining vertices were removed in this step, the network, and otherwise continues to the next step. Since at most half of the remaining nodes have degree smaller than $4n^\delta/\epsilon$, the average degree must be at least $2n^\delta/\epsilon$. Let $\bar{n},\bar{m}$ be the remaining nodes and edges respectively. 
	
	%		On the remaining graph, 
	The network runs the expander decomposition of \cite{CS19} with parameters $(\epsilon/2,1/\polylog(n))$ to partition the graph into sets $E'_m,E'_r$ such that each connected component $C$ of $E'_m$ has conductance $\Phi(C)\geq (\epsilon/\log{n})^{\alpha_\gamma}$, for each $v \in V_C$, $\deg_{V_C}(v) \geq (\epsilon/\log{n})^{\alpha_\gamma}\deg_{V \setminus V_C}(v)$, and $E'_r \leq \epsilon m$, in $O(n^{\gamma})$ rounds. The network marks all edges in $E'_r$ to be in $E_r$. In parallel, each cluster $C$ computes a BFS spanning tree on the cluster, and counts the number of edges and nodes in $C$. If the average degree in $C$ is less than $\epsilon n^\delta$, the network adds all edges of $C$ into $E_r$. The total number of edges in clusters of $E'_m$ with average degree $2|E_C|/|V_C| \leq \epsilon n^\delta$ is at most $n \cdot (\epsilon n^\delta)/2 = \epsilon n^{1+\delta}/2 = (\epsilon/2) m$. The network marks the remaining edges as $E_m$. We note that indeed $E_r \leq (\epsilon/2+\epsilon/2)m = \epsilon m$. Each remaining cluster $C$ that was not removed to $E_r$ has average degree at least $\epsilon n^\delta$, and by the construction of \cite{CS19}, has $\Phi(C)\geq (\epsilon/\log{n})^{\alpha_\gamma}$ and for each $v \in V_C$, $\deg_{V_C}(v) \geq (\epsilon/\log{n})^{\alpha_\gamma}\deg_{V \setminus V_C}(v)$.
\end{proof}

For a cluster $C$, let $N(C) = \{v \mid \exists_{u \in C} (u,v) \in E'\}$ be the nodes outside the cluster which have a neighbor in $C$,. Let $\beta = 3/5$, $S^*_C = \{u \in N(C) \mid \deg_{V_C}(u) \leq \deg_{V\setminus V_C}(u)/n^\beta \}$, and let $S_C = \{u \in V_C \mid \deg_{S^*_C(v)} \geq n^\beta\}$.

%	Any $K_s$-copy on $V_s$ can be found in $O(n^\beta)$ rounds by having each node $v \in V_s$ learning the edge set in $N(v) \times N(v)$. 
%	
%	

%	

\begin{lemma}[\cite{Censor+SODA21}]\label{lem:size_of_sc_app}
The size of $\bigcup_{C:|V_C| \geq n^{3/5}}  E(S_C,S_C)$ is at most $\epsilon m$.
\end{lemma}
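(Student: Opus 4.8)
The plan is to reproduce the charging argument of \cite{Censor+SODA21} with the ``bad node'' threshold set to $\beta = 3/5$. Everything rests on two structural facts about the decomposition of Lemma~\ref{lem:decomp_soda_app}. First, because each cluster $C$ is a connected component of $E_m$, \emph{every edge with exactly one endpoint in $V_C$ lies in $E_r$}; summing over clusters and charging each edge of $E_r$ to (at most) its two endpoints' clusters gives $\sum_{C}\lvert E(V_C,V\setminus V_C)\rvert\le 2\lvert E_r\rvert\le 2\epsilon m$. Second, there is an $n^{3/5}$-factor ``gap'' built into the two definitions: a node $v\in S_C$ has at least $n^{3/5}$ neighbours in $S^*_C\subseteq V\setminus V_C$, whereas a node $u\in S^*_C$ satisfies $\deg_{V_C}(u)\le\deg_{V\setminus V_C}(u)/n^{3/5}\le\deg(u)/n^{3/5}$, so it has at most $\deg(u)/n^{3/5}$ neighbours inside $V_C$ (in particular inside $S_C$).

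First I would count the bad cluster nodes. Each $v\in S_C$ contributes at least $n^{3/5}$ edges of $E(S_C,S^*_C)$, all of which cross out of $C$ and hence lie in $E_r$; therefore $n^{3/5}\sum_C\lvert S_C\rvert\le\sum_C\lvert E(S_C,S^*_C)\rvert\le 2\lvert E_r\rvert\le 2\epsilon m$, i.e.\ $\sum_C\lvert S_C\rvert\le 2\epsilon m/n^{3/5}$. Counting the same edge set from the $S^*_C$ side and invoking the gap, $n^{3/5}\lvert S_C\rvert\le\lvert E(S_C,S^*_C)\rvert=\sum_{u\in S^*_C}\lvert N(u)\cap S_C\rvert\le n^{-3/5}\sum_{u\in S^*_C}\deg(u)\le 2m/n^{3/5}$, which yields the per-cluster bound $\lvert S_C\rvert\le 2m/n^{6/5}$. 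Feeding both into the crude estimate $\lvert E(S_C,S_C)\rvert\le\binom{\lvert S_C\rvert}{2}\le\tfrac12\lvert S_C\rvert\cdot(2m/n^{6/5})$ and summing gives $\sum_C\lvert E(S_C,S_C)\rvert\le (m/n^{6/5})\sum_C\lvert S_C\rvert\le 2\epsilon m^2/n^{9/5}$, which already establishes the claim whenever $m=O(n^{9/5})$.

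\textbf{Main obstacle.} The remaining (denser) regime is the delicate part: the $\binom{\lvert S_C\rvert}{2}$ estimate is too lossy there, and one must instead bound $\lvert E(S_C,S_C)\rvert$ by a small fraction of the cluster's own edge count $\lvert E_C\rvert$, so that summing over clusters costs at most $\epsilon\sum_C\lvert E_C\rvert\le\epsilon m$. This means converting the global bound $\sum_C\lvert E(S_C,V\setminus V_C)\rvert\le 2\epsilon m$ on cross-cluster edges incident to bad nodes into a statement that $\sum_{v\in S_C}\deg_{V_C}(v)$ is an $\epsilon$-fraction of $\sum_{v\in V_C}\deg_{V_C}(v)=2\lvert E_C\rvert$, using the degree-regularity and conductance guarantees of Lemma~\ref{lem:decomp_soda_app}; one must also absorb the polylog slack of the decomposition (the $\alpha_\gamma$ exponent) into the choice of $\epsilon$. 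This careful accounting is exactly what is carried out in \cite{Censor+SODA21}, and I would reproduce it with the parameter $\beta=3/5$; it is the only step that is not an immediate consequence of the two structural facts above.
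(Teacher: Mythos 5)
Your per-cluster bound $|S_C|\le 2m/n^{6/5}$ is exactly the bound the paper derives (via the chain $2m\ge\sum_{u\in S^*_C}\deg_{V\setminus V_C}(u)\ge n^{3/5}\sum_{u\in S^*_C}\deg_{V_C}(u)\ge n^{3/5}|E(S_C,S^*_C)|\ge n^{6/5}|S_C|$), but the way you finish leaves a genuine gap. You combine it with the global estimate $\sum_C|S_C|\le 2\epsilon m/n^{3/5}$ and obtain $\sum_C|E(S_C,S_C)|=O(\epsilon m^2/n^{9/5})$, which yields the claim only when $m=O(n^{9/5})$; for denser graphs you explicitly defer to a ``delicate accounting'' in \cite{Censor+SODA21} that you do not carry out. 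Since the lemma must hold for all densities up to $m=\Theta(n^2)$ (it is applied to the input graph before any sparsification), the dense regime cannot be set aside, and the fix you sketch there --- bounding $|E(S_C,S_C)|$ by an $\epsilon$-fraction of $|E_C|$ using conductance and degree-regularity --- is a different, unexecuted argument, not the one the paper uses.

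The missing step is elementary, and it is precisely where the restriction to clusters with $|V_C|\ge n^{3/5}$ in the statement enters: clusters are vertex-disjoint, so at most $n^{2/5}$ of them have $|V_C|\ge n^{3/5}$. Plugging your own per-cluster bound into $|E(S_C,S_C)|\le|S_C|^2\le 4m^2/n^{12/5}$ and multiplying by the number of large clusters gives
\[
\sum_{C:\,|V_C|\ge n^{3/5}}|E(S_C,S_C)|\;\le\; n^{2/5}\cdot\frac{4m^2}{n^{12/5}}\;=\;\frac{4m^2}{n^{2}},
\]
which, using $m\le n^2$, is $O(m)$ for every density (and is exactly the paper's computation $n^{1-\beta}\cdot m^2/n^{4\beta}=m^2 n^{1-5\beta}$ with $\beta=3/5$; the reduction of the constant to $\epsilon$ is inherited from the parameter slack of \cite{Censor+SODA21}, a sloppiness the paper shares). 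In particular this bound is strongest exactly in the dense regime where your argument breaks down. Note also that this route needs neither $|E_r|\le\epsilon m$, nor the observation that cross-cluster edges lie in $E_r$, nor any conductance or degree-regularity guarantee of the decomposition: it is pure counting from the definitions of $S_C$ and $S^*_C$ together with vertex-disjointness of the clusters, so the step you flagged as the ``main obstacle'' does not exist in the paper's proof.
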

\begin{proof}
We bound from above the size of $S_C$ with regards to $m$:
\begin{equation*}
2m \geq \sum_{u \in S^*_C}\deg_{V \setminus V_C}(u) \geq n^\beta \sum_{u \in S^*_C}\deg_{S^*_C}(u) \geq n^\beta \sum_{v \in S_C} \deg_{S^*_C}(u) \geq n^{2\beta} |S_C|.
\end{equation*}
Therefore,
\begin{equation}
|S_C| \leq 2m/n^{2\beta}
\end{equation}
and 
\begin{equation}
E(S_C,S_C) \leq |S_C|^2 \leq 2m/n^{4\beta} \leq 2n^2/n^{4\beta}.
\end{equation}

As the clusters are vertex-disjoint, there are at most $O(n^{1-\beta})$ clusters with more than $O(n^{\beta})$ vertices, therefore the total number of edges in $E(S_C,S_C)$ is bounded by $2cmn^{1-\beta}n^2/n^{4\beta} = 2cmn^{3-5\beta} \leq \epsilon m$.
\end{proof}

\begin{lemma}[\cite{Censor+SODA21}]\label{lem:insert_edges_into_clusters_app}
	In $\tilde{O}(n^{\beta})$ rounds, the network can for each cluster $C$ with $|V_C| \geq n^{\beta}$ partition $E(N(C)\setminus S^*_C,N(C))$ into sets $\{E_{v,C}\}_{v \in C}$, each of size at most $\tilde{O}(n^{\beta}\deg{v})$, and have each node $v \in V_C$ learn the edges of $E_{v,C}$, and in addition, each node $v$ in $C \setminus S_C$ learns the set $E'_{v,C} = E(N(v) \cap S^*_C,N(v) \cap S^*_C)$. 
\end{lemma}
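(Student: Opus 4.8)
The plan is to carry out the lemma in three phases: a short \emph{classification} phase that hands every relevant node the structural data it needs, a \emph{push} phase that builds the partition $\{E_{v,C}\}_{v\in V_C}$, and a \emph{query} phase that builds the sets $E'_{v,C}$. Write $\beta=3/5$ and note $n^{1-\beta}\le n^{\beta}$ since $\beta\ge 1/2$. For the classification phase I would first have each cluster compute $|V_C|$ on its BFS tree ($\polylog n$ rounds), and then, in $O(1)$ rounds, have every node tell each neighbour its cluster identifier and the size of its cluster. Each node $u$ then knows which neighbours lie in which cluster, can identify the \emph{big} clusters ($|V_C|\ge n^{\beta}$) with $u\in N(C)$, and for each such $C$ computes $\deg_{V_C}(u)$ and $\deg_{V\setminus V_C}(u)$, hence decides whether $u\in S^{*}_C$. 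Since clusters are vertex-disjoint there are at most $n^{1-\beta}$ big clusters, so each node can broadcast to all its neighbours the at most $n^{1-\beta}$ pairs (big cluster $C$ it borders, its $S^{*}_C$-status) in $\tilde O(n^{1-\beta})=\tilde O(n^{\beta})$ rounds. From these announcements a node $v\in V_C$ learns $N(v)\cap S^{*}_C$ and $\deg_{S^{*}_C}(v)$, hence whether $v\in S_C$, and for every edge $\{u,w\}$ and big cluster $C$ the endpoint $u$ learns whether $w\in N(C)$ and whether $w\in S^{*}_C$.

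For the push phase, fix a big cluster $C$. Call an edge $\{u,w\}$ with $u\in N(C)\setminus S^{*}_C$ and $w\in N(C)$ a \emph{$C$-target} edge, and make $u$ \emph{responsible} for it unless $w\in N(C)\setminus S^{*}_C$ and $\mathrm{id}(w)<\mathrm{id}(u)$, so each $C$-target edge has a unique responsible endpoint; by the classification phase a node knows which of its incident edges it is responsible for in $C$. Each $u\in N(C)\setminus S^{*}_C$ splits the (at most $\deg u$) edges it is responsible for in $C$ as evenly as possible among its $\deg_{V_C}(u)\ge 1$ neighbours in $V_C$ and ships each block along the corresponding edge. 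Because $u\notin S^{*}_C$, $\deg_{V\setminus V_C}(u)<n^{\beta}\deg_{V_C}(u)$, so $\deg u\le(n^{\beta}+1)\deg_{V_C}(u)$ and every block has size at most $n^{\beta}+2$; all transmissions use distinct edges and proceed in parallel, so this phase takes $\tilde O(n^{\beta})$ rounds (equivalently, route inside $C$ via the expander-routing primitive of Theorem~\ref{thm-expander-routing}, since $\mix(C)=\polylog n$). Let $E_{v,C}$ be the set of $C$-target edges $v\in V_C$ receives; then $\{E_{v,C}\}_{v\in V_C}$ partitions $E(N(C)\setminus S^{*}_C,N(C))$, and $v$ gets at most $n^{\beta}+2$ edges from each of its at most $\deg v$ neighbours outside $V_C$, so $|E_{v,C}|\le(n^{\beta}+2)\deg v=\tilde O(n^{\beta}\deg v)$.

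For the query phase, fix a big cluster $C$ and a node $v\in V_C\setminus S_C$; by definition $B_v:=N(v)\cap S^{*}_C$ has $|B_v|<n^{\beta}$ and is known to $v$. Node $v$ sends the list $B_v$ along each edge $\{v,a\}$ with $a\in B_v$, and $a$ replies with $N(a)\cap B_v$ (which $a$ computes locally); each message is fewer than $n^{\beta}$ identifiers and all of $v$'s queries use distinct edges, so every direction of every edge carries $O(n^{\beta})$ identifiers and the phase runs in $\tilde O(n^{\beta})$ rounds. From the replies $v$ reconstructs $E'_{v,C}=E(B_v,B_v)=\{\{a,b\}:a,b\in B_v,\ b\in N(a)\}$. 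All of the above proceeds simultaneously for every big cluster: each edge is used only by the (at most two) clusters containing its endpoints, and a direct accounting shows each direction of each edge still carries only $\tilde O(n^{\beta})$ messages, so the overall round complexity is $\tilde O(n^{\beta})$.

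The main difficulty is keeping every node's traffic within the $\tilde O(n^{\beta})$ budget simultaneously, and this hinges on three ingredients that have to be combined carefully: the non-badness bound $\deg u\le(n^{\beta}+1)\deg_{V_C}(u)$ for pushing nodes (which forces $O(n^{\beta})$-size blocks), the bound $|N(v)\cap S^{*}_C|<n^{\beta}$ for nodes $v\notin S_C$ (which forces short query lists), and the bound $n^{1-\beta}\le n^{\beta}$ on the number of big clusters (which makes the per-cluster classification broadcast affordable). The last ingredient is exactly where the hypothesis $|V_C|\ge n^{\beta}$ enters: without it a node could border too many clusters to advertise its per-cluster status in time, so producing the exact partition as stated --- rather than merely a cover of $E(N(C)\setminus S^{*}_C,N(C))$ by sets of the stated sizes, which is all the reduction in the main text actually needs --- genuinely requires the cluster-size lower bound.
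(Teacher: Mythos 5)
Your proposal is correct and follows essentially the same route as the paper's proof: push each non-$S^*_C$ neighbor's outside edges in $O(n^{\beta})$-size blocks along its (sufficiently many, by the non-badness bound) edges into $C$, then have each $v \in V_C \setminus S_C$ run the $O(n^{\beta})$-round query/reply exchange with its fewer than $n^{\beta}$ neighbors in $S^*_C$ to recover $E'_{v,C}$. The extra classification phase and the responsibility tie-breaking are just more explicit bookkeeping than the paper bothers with (the paper lets each node ship all of its outside edges, settling for a cover rather than a literal partition), and do not change the argument.
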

\begin{proof}
		In parallel for all clusters $C$, each any node in $u \in N(C)\setminus S^*_C$ partitions its edges that have both endpoints in $V\setminus V_C$ into sets of size $\Theta(n^\beta)$, and sends each set to a unique neighbor in $C$. This is possible due to the fact that $\deg_{V_C}(u) \leq \deg_{V\setminus V_C}(u)/n^\beta$ and $n^\beta \geq n^{1-\beta}$. Following this, each node $v \in C\setminus S_c$ sends to each of its neighbors in $S^*_C$ the identifiers of its other neighbors in $S^*_C$ using $O(n^\beta)$ rounds. Then, each neighbor replies using $O(n^\beta)$ rounds, and sends to which of the nodes sent to it, it has an edge to. The returned edges are the set $E'_{v,C}$
\end{proof}

Finally, we cite another lemma which is used in Section~\ref{sec:barrier}.

\begin{lemma}[\cite{Censor+SODA21}, Lemma 4.2]\label{lem:handle_small_clusters_app}
	There is a procedure that terminates after $O(n^\beta)$ rounds network can determine whether there is a $K_p$-copy which at least one of its nodes is contained in a cluster of size $|V_C| \leq n^\beta$.
\end{lemma}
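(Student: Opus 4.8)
The plan is to reduce this to the low-degree exhaustive search already used elsewhere in the paper (Claim~\ref{claim:2hopNeighborhood}), via the observation that membership in a \emph{small} cluster forces a node to have small degree. Concretely, suppose $v\in V_C$ with $|V_C|\le n^{\beta}$. Then trivially $\deg_{V_C}(v)\le |V_C|-1< n^{\beta}$, and by property~2 of the decomposition (Lemma~\ref{lem:decomp_soda_app}) we have $\deg_{V\setminus V_C}(v)\le (\log n/\epsilon)^{\alpha_\gamma}\deg_{V_C}(v)$, which is $\tilde O(n^{\beta})$ since $\gamma$, hence $\alpha_\gamma$, is a constant. Thus every node lying in a cluster of size at most $n^{\beta}$ has degree $\tilde O(n^{\beta})$, regardless of how large its cluster's neighbourhood is.

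Given this, the procedure is: in parallel, every node $v$ belonging to a cluster of size $\le n^{\beta}$ runs the exhaustive-search routine of Claim~\ref{claim:2hopNeighborhood} with threshold $\alpha=\tilde O(n^{\beta})$, learning its induced two-hop neighbourhood (all edges incident to $v$ together with all edges between two neighbours of $v$) in $\tilde O(n^{\beta})$ rounds; it then checks locally whether this neighbourhood contains a copy of $K_p$ passing through $v$. Correctness is immediate: if some $K_p$-copy has a node $v$ in a small cluster, then all of its other $p-1$ vertices are neighbours of $v$ and every edge among them joins two neighbours of $v$, so the whole copy lies inside $v$'s induced two-hop neighbourhood and is detected by $v$. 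Finally the local verdicts are aggregated — it suffices for detection that one node outputs ``yes'', and if the subsequent recursion in Section~\ref{sec:barrier} needs global agreement one propagates the $\mathrm{OR}$ along a BFS tree in $\tilde O(1)$ additional rounds under the standing polylogarithmic-diameter assumption. This is essentially the argument of \cite[Lemma 4.2]{Censor+SODA21}, re-run with the threshold $n^{\beta}$.

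The only point requiring care — and the place I expect the main (mild) obstacle — is verifying that running these exhaustive searches \emph{simultaneously} at all small-cluster nodes does not inflate the round count. Here one notes that in each round the search issues, along every incident edge, at most one $O(\log n)$-bit query and receives at most one $1$-bit answer; a high-degree node $u$ that is not itself searching but is a neighbour of many searching nodes simply answers one query per incident edge per round, which is within the \congest bandwidth, and two adjacent searching nodes only need to interleave their queries and answers over their shared edge, at most doubling the number of rounds (exactly as in the proof of Claim~\ref{claim:2hopNeighborhood}). Since every searcher has degree $\tilde O(n^{\beta})$, the whole phase finishes in $\tilde O(n^{\beta})$ rounds, as claimed.
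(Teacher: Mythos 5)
Your proposal is correct and is essentially the argument intended here: the paper itself gives no proof of this lemma (it is imported from \cite{Censor+SODA21}, Lemma~4.2), but your route --- small cluster size bounds $\deg_{V_C}(v)$, Property~2 of Lemma~\ref{lem:decomp_soda_app} lifts this to a $\tilde O(n^{\beta})$ bound on the total degree, and then the parallel two-hop exhaustive search of Claim~\ref{claim:2hopNeighborhood} plus a local check and an OR-aggregation finishes the job --- is exactly how the paper handles small clusters in its own $K_p$ arguments (Section~\ref{sec:congest:seven}, stages~1--2). The only cosmetic discrepancy is that your bound is $\tilde O(n^{\beta})$ rather than the stated $O(n^{\beta})$, which is immaterial for the use of the lemma in Section~\ref{sec:barrier}.
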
  

\end{document}